\DeclareSymbolFont{cmsy}{OMS}{cmsy}{m}{n}
\DeclareSymbolFontAlphabet{\mathcalus}{cmsy}
\newtheorem{theorem}{Theorem}[]
\newtheorem{proposition}[theorem]{Proposition}
\newtheorem{lemma}[theorem]{Lemma}
\newtheorem{corollary}[theorem]{Corollary}
\newtheorem{remark}[theorem]{Remark}
\newcommand{\transp}[0]{\mathrm{T}}
\newcommand{\rel}[0]{\mathrm{rel}}
\newcommand{\secur}[0]{\mathrm{sec}}
\newcommand{\tr}{\operatorname{tr}}
\newcommand{\ket}[1]{\lvert{#1} \rangle}
\newcommand{\bra}[1]{{\langle {#1}\rvert}}
\newcommand{\braket}[2]{\langle {#1} | {#2} \rangle}
\newcommand{\bi}{\mathbf{i}}
\newcommand{\bj}{\mathbf{j}}
\newcommand{\bl}{{\bm{\ell}}}
\newcommand{\bc}{\mathbf{c}}
\newcommand{\bC}{\mathbf{C}}
\newcommand{\bk}{\mathbf{k}}
\newcommand{\bbm}{\mathbf{m}}
\newcommand{\bb}{\mathbf{b}}
\newcommand{\bT}{\mathbf{T}}
\newcommand{\ba}{\mathbf{a}}
\newcommand{\bap}{\mathbf{a'}}
\newcommand{\be}{\mathbf{e}}
\newcommand{\bep}{\mathbf{e'}}
\newcommand{\bx}{\mathbf{x}}
\newcommand{\by}{\mathbf{y}}
\newcommand{\bz}{\mathbf{z}}
\newcommand{\bu}{\mathbf{u}}
\newcommand{\bv}{\mathbf{v}}
\newcommand{\sA}{\mathrm{A}}
\newcommand{\sB}{\mathrm{B}}
\newcommand{\sC}{\mathrm{C}}
\newcommand{\sE}{\mathrm{E}}
\newcommand{\sM}{\mathrm{M}}
\newcommand{\sU}{\mathrm{U}}
\newcommand{\sK}{\mathrm{K}}
\newcommand{\sH}{\mathrm{H}}
\newcommand{\sI}{\mathrm{I}}
\newcommand{\sT}{\mathrm{T}}
\newcommand{\sX}{\mathrm{X}}
\newcommand{\sZ}{\mathrm{Z}}
\newcommand{\bxi}{{\boldsymbol{\xi}}}
\newcommand{\bs}{\mathbf{s}}
\newcommand{\Eve}{{\operatorname{Eve}}}
\newcommand{\Span}{{\operatorname{Span}}}
\newcommand{\except}{{\operatorname{exc}}}
\newcommand{\pr}{{\operatorname{Pr}}}
\newcommand{\invb}{{\operatorname{inverted-INFO-basis}}}
\newcommand{\sym}{{\operatorname{sym}}}
\newcommand{\ext}{{\operatorname{ext}}}
\begin{document}

\title{Composable Security of Generalized BB84 Protocols
Against General Attacks}
\author[1]{Michel Boyer}
\author[1,2,3]{Rotem Liss}
\author[2]{Tal Mor}
\affil[1]{{\small D\'epartement IRO, Universit\'e de Montr\'eal,
Montr\'eal (Qu\'ebec), H3C 3J7, Canada}}
\affil[2]{{\small Computer Science Department, Technion, Haifa, 3200003, Israel}}
\affil[3]{{\small ICFO---Institut de Ciencies Fotoniques,
The Barcelona Institute of Science and Technology, Av.\ Carl Friedrich Gauss, 3,
Castelldefels, Barcelona, 08860, Spain}}
\date{}

\maketitle

\begin{abstract}
Quantum key distribution (QKD) protocols make it possible for
two parties to generate a secret shared key.
One of the most important QKD protocols, BB84,
was suggested by Bennett and Brassard in 1984.
Various proofs of unconditional security for BB84 have been suggested,
but the first security proofs were not composable.
Here we improve a security proof of BB84 given by
Biham, Boyer, Boykin, Mor, and Roychowdhury~\cite{BBBMR06} to be composable
and match the state-of-the-art results for BB84,
and we extend it to prove unconditional security of
several variants of the BB84 protocol.
Our composable security proof for BB84 and its variants
is mostly self-contained, algebraic, and relatively simple,
and it gives tight finite-key bounds.
\end{abstract}

\section{Introduction}
Quantum key distribution (QKD) makes it possible for two legitimate
parties, Alice and Bob, to generate an information-theoretically
secure key~\cite{bb84}, that is secure against any possible attack
allowed by the laws of quantum physics.
Alice and Bob use an insecure quantum channel and an authenticated
classical channel. The adversary Eve may interfere with the quantum
channel and is limited only by the laws of nature; however, she cannot
modify the data sent in the authenticated classical channel
(she can only listen to it).

The main objective of analyzing a QKD protocol is proving its
{\em unconditional security}: namely, proving that even if
the adversary Eve applies the strongest and most general attacks
allowed by the laws of nature (the ``joint attacks''),
Eve's average information about the final key
is still negligible---that is, exponentially small
in the number of qubits.

The first and most important QKD protocol was BB84~\cite{bb84};
full definition of the BB84 protocol is available
in Section~\ref{generalized-bb84_description}.
BB84 has many proofs of unconditional security (see, among others,
\cite{mayers01,BBBMR06,SP00,RGK05,TLGR12,HT12,TL17}),
each of them having its own advantages and disadvantages.
Extending the toolkit available for security proofs
is still important, because many practical implementations of
theoretical protocols and many specifically designed practical
protocols do not yet have their full and unconditional security proved,
and because existence of many proofs makes the security result
more certain and less prone to errors.

In our work, we extend and generalize the security proof of BB84
presented by Biham, Boyer, Boykin, Mor, and Roychowdhury
(BBBMR)~\cite{BBBMR06}:
first, we change it to prove the {\em composable}
security of BB84---namely, to prove that the secret key remains secure
even if Alice and Bob use it for cryptographic applications
(see details in Subsection~\ref{subsec_secure});
second, we generalize it to apply
to several variants of BB84 and not only to the original protocol;
and third, we improve it to match the state-of-the-art results
and security bounds for BB84.

Our paper gives composable security and tight finite-key bounds
in a relatively simple way. This paper and the BBBMR
paper~\cite{BBBMR06} together are mostly self-contained.
On the other hand, our paper leaves further possible generalizations
for future research.

\subsection{\label{subsec_secure}Unconditional Security Definitions
of QKD Protocols}
Originally, a QKD protocol was defined to be secure if the
(classical) {\em mutual information} between Eve's information
and the final key, maximized over all the possible attack strategies
and measurements by Eve, was exponentially small in the number of
qubits, $N$. Examples of security proofs of BB84 that use this
security definition
are~\cite{mayers01,BBBMR06,SP00}:
these security proofs used the observation that one cannot analyze
the {\em classical} data held by Eve before privacy amplification,
but must analyze the {\em quantum} state held by Eve~\cite{BMS96}.
In other words, they assumed that Eve
could keep her quantum state until the end of the protocol,
and only {\em then} choose the optimal measurement
(based on all the data she observed) and perform the measurement.

Later, it was noticed that this security definition may not be
``composable''. In other words, the final key is secure if Eve
measures the quantum state she holds at the end of the QKD protocol,
but the proof does not apply to {\em cryptographic applications}
(e.g., encryption) of the final key:
Eve might gain non-negligible information after the key is used,
even though her information on the key itself was negligible.
This means that the proof is not sufficient for practical purposes.
In particular, these applications may be insecure if Eve keeps her
quantum state until Alice and Bob use the key (thus giving Eve some
new information) and only {\em then} measures.

Therefore, a new notion of ``(composable) full security''
was defined~\cite{BOHLMO05,RGK05,renner_thesis08}
using the trace distance,
following universal composability definitions
for non-quantum cryptography~\cite{compos01_universal}.
Intuitively, this notion requires that the final joint quantum state of
Alice, Bob, and Eve at the end of the protocol is {\em very close}
to their final state at the end of an {\em ideal} key distribution
protocol, that distributes a {\em completely random} and
{\em secret} final key to both Alice and Bob
(namely, the trace distance between the real state and the ideal state
is exponentially small in $N$).
In other words, if a QKD protocol is secure, then
except with an exponentially small probability,
one of the two following events happens:
the protocol is aborted, {\em or} the secret key generated
by the protocol is the same as a perfect key
that is uniformly distributed
(i.e., each possible key having the same probability),
is the same for both parties,
and is independent of the adversary's information.
Details about the formal definition of composable security
are available in Subsection~\ref{sec_full_composability}.

\subsection{Security Proofs of QKD}
Composable security proofs for many QKD protocols, including BB84, have
been presented~\cite{BOHLMO05,RGK05,renner_thesis08}.
The proofs of~\cite{RGK05,renner_thesis08} used different
methods from the earlier (non-composable) results
of~\cite{mayers01,BBBMR06,SP00}.

The security proof of BB84 presented by Biham, Boyer, Boykin, Mor,
and Roychowdhury (BBBMR)~\cite{BBBMR06}
(which follows previous works by some of its authors)
used a connection between the information obtained by Eve and
the disturbance she induces in the opposite (conjugate) basis.
The proof bounded the trace distance between the density matrices
held by Eve by using algebraical analysis, and then it used this bound
for calculating the mutual information between Eve and the final key.
BBBMR proved non-composable security of BB84 against
the most general theoretical attacks.

The security proof of BBBMR has various advantages and disadvantages
compared to other security proofs for QKD.
On the one hand, the security proof of BBBMR is mostly self-contained,
while other security approaches require many results from other areas
of quantum information (such as various notions of entropy needed for
the security proof of~\cite{RGK05,renner_thesis08},
and entanglement purification and quantum error correction needed for
the security proof of~\cite{SP00});
it gives tight finite-key bounds,
unlike several other proof methods~\cite{SP00}
and similarly to the most updated results (for the BB84 protocol)
in some proof methods~\cite{TLGR12,HT12,TL17};
after our improvements, it matches
state-of-the-art results (most notably, we proved its asymptotic error rate
threshold to be 11\%, as given by~\cite{SP00,RGK05},
instead of the 7.56\% threshold found by~\cite{mayers01,BBBMR06});
and, at least in some sense, it is simpler than other proof techniques.
On the other hand, it is currently limited to BB84-like protocols.

\subsection{Our Contribution}
In our work, we extend the security proof of BBBMR
in the three following ways:

First, we prove {\em composable} security
(instead of the non-composable security
proved in~\cite{BBBMR06}) against the most general attacks.
Intuitively, this improvement is achieved by replacing the
manipulations of classical mutual information in~\cite{BBBMR06}
by similar manipulations on the quantum trace distance.

Second, we prove security of
four QKD protocols similar to BB84 (``generalized BB84''):
\begin{enumerate}
\setlength\itemsep{0em}
\item the ``BB84-INFO-$z$'' protocol (defined and analyzed
by~\cite{bb84-info-z17,bb84-info-z20} against collective attacks),
in which the final key includes
only bits sent in the $z$ basis $\{\ket{0}_0, \ket{1}_0\}$;
\item the standard BB84 protocol;
\item the ``efficient BB84'' protocol described in~\cite{LCA05},
in which Alice and Bob choose the $z$ basis
$\{\ket{0}_0, \ket{1}_0\}$ or the $x$ basis
$\{\ket{0}_1 \triangleq \frac{\ket{0}_0 + \ket{1}_0}{\sqrt{2}},
\ket{1}_1 \triangleq \frac{\ket{0}_0 - \ket{1}_0}{\sqrt{2}}\}$
with non-uniform probabilities; and
\item the ``modified efficient BB84'' protocol
(a simplified version of ``efficient BB84'').
\end{enumerate}
Intuitively, this improvement is achieved by
using the correct statistical argument (namely, the correct application
of Hoeffding's theorem) for proving security of each protocol.
Full details are given in Section~\ref{sec_protocols}.

Third, we improve the security results of BBBMR to match
the state-of-the-art results achieved for BB84.
Most importantly, we improve the error rate threshold
in the ``asymptotic'' scenario, where an infinite number of qubits is transmitted
(and the corresponding error rate thresholds in the finite-key scenario,
where only a finite number of qubits is transmitted,
is decreased accordingly, depending on the specific number of qubits).
In the very first security proofs of BB84~\cite{mayers01,BBBMR06},
the asymptotic error rate threshold was found to be 7.56\%
(namely, above this error rate, the protocol must have been aborted);
however, later security proofs such as~\cite{SP00,RGK05} succeeded to improve
this error rate to 11\%. While the original BBBMR proof only supported
the original error rate of 7.56\%, here we present the algebraic improvement
required to make it support an error rate of 11\%,
as detailed in Subsection~\ref{subsec_ivd}.
(Intuitively, it is based on properties of error-correcting codes detailed
in Subsection~\ref{prelims_ecc}.)
This improvement assures that BBBMR methods give security results compatible
with other proof methods.

\section{Preliminaries}
\subsection{The Notations of Quantum Information}
In quantum information, information is represented by quantum states.
A quantum pure state is denoted by $\ket{\psi}$,
and it is a normalized vector in a Hilbert space.
The qubit Hilbert space is the $2$-dimensional Hilbert space
$\mathcal{H}_2 \triangleq \Span\{\ket{0}_0, \ket{1}_0\}$,
where the two states $\ket{0}_0, \ket{1}_0$
(denoted $\ket{0}, \ket{1}$ in most papers)
form an orthonormal
basis of $\mathcal{H}_2$ named ``the computational basis'' or
``the $z$ basis''. The two states
$\ket{0}_1 \triangleq \frac{\ket{0}_0 + \ket{1}_0}{\sqrt{2}}$ and
$\ket{1}_1 \triangleq \frac{\ket{0}_0 - \ket{1}_0}{\sqrt{2}}$
(denoted $\ket{\text{+}}, \ket{-}$ in most papers)
form another orthonormal basis of $\mathcal{H}_2$,
named ``the Hadamard basis'' or ``the $x$ basis''.
These two bases are said
to be {\em conjugate bases}.

A quantum mixed state is a probability distribution of
several pure states, and it is represented by a density matrix:
$\rho = \sum_j q_j \ket{\psi_j} \bra{\psi_j}$, where $q_j$ is
the probability that the system is in the pure state $\ket{\psi_j}$
(this definition should not be confused
with the probabilities of measurement results).
For example, if the mixed state of a system is
$\rho_1 = \frac{1}{3} \ket{0}_0 \bra{0}_0 +
\frac{2}{3} \ket{0}_1 \bra{0}_1$,
this means that the system is in the $\ket{0}_0$ state
with probability $\frac{1}{3}$
and in the $\ket{0}_1$ state with probability $\frac{2}{3}$.

The most general operations allowed by quantum physics for
the Hilbert space $\mathcal{H}$ are: performing any
unitary transformation $U:\mathcal{H} \rightarrow \mathcal{H}$;
adding an ancillary state inside another Hilbert space;
measuring a state with respect to some orthonormal basis;
and tracing out a quantum system (namely, ignoring and forgetting
a quantum system).

See~\cite{NCBook} for more background about quantum information.

\subsection{The Notations of Bit Strings}
In this paper, we denote bitstrings (of $t$ bits,
where $t \ge 0$ is some integer) by a bold letter
(e.g., $\bi=i_1\ldots i_t$ with $i_1,\ldots,i_t \in \{0,1\}$);
and we refer to these bitstrings as elements
of $\mathbf{F}_2^t$---namely,
{\em row vectors} in a $t$-dimensional vector space over
the field $\mathbf{F}_2 = \{0,1\}$, where addition of two vectors
corresponds to a XOR operation between them.
The number of $1$-bits in a bitstring $\bs$ is denoted by $|\bs|$,
and the Hamming distance between two strings $\bs$ and $\bs'$
is $d_\sH(\bs, \bs') \triangleq |\bs \oplus \bs'|$.

\subsection{\label{prelims_ecc}Linear Error-Correcting Codes}
\subsubsection{\label{subsubsec_prelims_ecc}Basic Notions}
Assume a sender and a receiver want to communicate
over a noisy channel. They can use an {\em error-correcting code}:
the sender adds redundancy to the input of the channel
(``encoding''), and the receiver can correct the errors induced
by the noisy channel (``decoding''). In QKD, binary linear
error-correcting codes can be used for correcting the errors
caused during the quantum transmission.

An $[n,k]$ binary linear error-correcting code $\sC$
includes $2^k$ bitstrings named ``codewords'',
where each codeword is of length $n$.
Formally, $\sC$ is a $k$-dimensional vector subspace of $\mathbf{F}_2^n$
(over the field $\mathbf{F}_2$).
Since $\sC$ is linear, for all $\ba, \bb \in \sC$
it holds that $\ba \oplus \bb \in \sC$.
In the standard use of the error-correcting code $\sC$,
the $k$-bit input string $\bu$ is translated into
an $n$-bit codeword $\ba \in \sC$, which is then
transmitted through the noisy channel.

The minimum distance $d$ of a binary linear error-correcting code $\sC$
is the smallest Hamming distance between any two codewords:
namely, $d \triangleq \min_{\ba \ne \bb \in \sC} d_\sH(\ba, \bb)$.
Equivalently, it is the smallest Hamming weight of a non-zero codeword:
$d = \min_{\ba \in \sC \setminus \{\mathbf{0}\}} |\ba|$.

The $k \times n$ {\em generator matrix} $G_\sC$
defines the code $\sC$, as follows:
\begin{equation}\label{ecc_generator}
\sC = \{\bu G_\sC \mid \bu \in \mathbf{F}_2^k\}.
\end{equation}
Namely, the code $\sC$ is the {\em row space} of $G_\sC$.

The $r \times n$ {\em parity-check matrix} $P_\sC$
(where $r \triangleq n - k$) similarly
defines the code $\sC$, as follows:
for any $\ba \in \mathbf{F}_2^n$,
\begin{equation}\label{ecc_syndrome}
\ba \in \sC ~ \Leftrightarrow ~ \ba P_\sC^\transp = \mathbf{0}.
\end{equation}
Namely, the code $\sC$ is the {\em kernel} of $P_\sC$;
we note that $P_\sC$ must be of rank $r$.
The parity-check matrix $P_\sC$ is also a generator matrix
of the {\em dual code} $\sC^\perp$, defined as:
\begin{equation}\label{ecc_dual}
\sC^\perp = \{\bv P_\sC \mid \bv \in \mathbf{F}_2^r\},
\end{equation}
which means, in particular, that the inner product $\ba \cdot \bb = 0$
for any $\ba \in \sC, \bb \in \sC^\perp$.
Therefore, the dual code $\sC^\perp$ is the
row space of $P_\sC$ (and it can be shown to be the kernel of $G_\sC$).

Standard error correction for {\em classical} channels
can be performed as follows:
the sender sends a codeword $\ba \in \sC$ through the noisy
channel, and the receiver gets a noisy result $\bb = \ba \oplus \be$
(where $\be \in \mathbf{F}_2^n$ is the {\em error word}).
A standard decoder, which we use throughout this paper,
is the {\em nearest-codeword decoder}: the word $\bb$
is decoded into the codeword minimally distanced from it.
The receiver knows the sender sent a codeword in $\sC$;
therefore, the receiver can check all possible codewords $\bap
\in \sC$, compute the error words $\bep = \bb \oplus \bap$ that
would change $\bap$ into $\bb$,
and choose the lowest-weight error word $\bep$
(which thus corresponds to the nearest codeword, $\bap$).
If the minimum distance $d$ equals $2t + 1$,
this decoding procedure can correct up to $t$ errors with certainty;
in Subsubsection~\ref{subsubsec_ecc} we show
how to improve this performance.

Error correction for standard QKD protocols
is a little different: the input word $\bi_\sI$ (sent by Alice
as a quantum state) is random and {\em is not necessarily a codeword}.
Let us assume that Bob gets the output word $\bj_\sI \triangleq
\bi_\sI \oplus \bc_\sI$ (where $\bc_\sI$ is the error word).
In this case, errors can still be corrected in the following way:
the sender Alice sends to Bob the {\em syndrome} $\bxi \in \mathbf{F}_2^r$,
defined as
\begin{equation}
\bxi \triangleq \bi_\sI P_\sC^\transp.
\end{equation}
Then, let us define the {\em code coset} $\sC_\bxi$ as follows:
\begin{equation}\label{ecc_c_xi_def}
\sC_\bxi \triangleq \{\bz \in \mathbf{F}_2^n \mid \bz P_\sC^\transp = \bxi\}.
\end{equation}
The set $\sC_\bxi$ is equal to $\bl_\bxi + \sC$, given any arbitrary word
$\bl_\bxi \in \mathbf{F}_2^n$ having the same syndrome $\bxi$
(namely, satisfying $\bl_\bxi P_\sC^\transp = \bxi$),
which is why we name it ``code coset'': this is true
because Equation~\eqref{ecc_syndrome} implies $\sC_{\mathbf{0}} = \sC$.
Therefore, since Bob knows $\bxi$, he also knows that the input word $\bi_\sI$
must be in the code coset $\sC_\bxi = \bl_\bxi + \sC$
(namely, there must exist $\ba \in \sC$
such that $\bi_\sI = \bl_\bxi \oplus \ba$). Therefore, Bob can apply
a similar method to standard error correction
on the code coset $\sC_\bxi$
(instead of applying it on the original code $\sC$)---namely,
he can use the affine code $\sC_\bxi$
instead of the linear code $\sC$.
Specifically, Bob can check all possible words in the code coset $\sC_\bxi$,
find the corresponding error words,
and choose the lowest-weight error word
(which thus corresponds to the nearest word in the code coset).
As before, if the code's minimum distance $d$ equals $2t + 1$,
this decoding procedure can correct up to $t$ errors with certainty.

See~\cite{CoverThomasBook,RothBook} for more details
about error-correcting codes.

\subsubsection{\label{subsubsec_ecc}Finding Codes for Error Correction
and Privacy Amplification}
We now present several important results
on linear error-correcting codes,
that are essential both for fully proving the results discussed
in Appendix~E of~\cite{BBBMR06}
and for our improved results in Subsection~\ref{subsec_ivd}.

For the nearest-codeword decoder used throughout this paper,
we assume that whenever it gets an input $\bx \in \mathbf{F}_2^n$
such that {\em two or more} codewords are nearest to $\bx$
and have equal distances from $\bx$, the decoder fails
(namely, its output may be arbitrary and is ignored).
Namely, we assume the decoder never ``breaks ties''
between candidates to being nearest codewords, but simply fails if there is a tie
and no unique decoding is possible.

First, we prove the success of error correction to be independent
of the original word:
\begin{lemma}\label{lemma_ecc_indep}
For any $[n,k]$ binary linear error-correcting code $\sC$
and any error word $\be \in \mathbf{F}_2^n$:
\begin{enumerate}
\item \label{lemma_ecc_indep_1}A nearest-codeword decoder
successfully corrects the error $\be$
if and only if the codeword nearest to $\be$ is uniquely $\mathbf{0}$
(namely, $\mathbf{0} \in \sC$ is strictly closer to $\be$
than any non-zero codeword in $\sC$).

Formally, if the input to the nearest-codeword decoder is
$\ba \oplus \be$, where $\ba \in \sC$ is a codeword
and $\be \in \mathbf{F}_2^n$ is the error word,
then $\ba \oplus \be$ is correctly decoded into $\ba$
if and only if the codeword nearest to $\be$ is uniquely $\mathbf{0}$.

\item \label{lemma_ecc_indep_2}In particular,
the success of correcting the error $\be$
using a nearest-codeword decoder is independent of the codeword,
and depends only on the error word.

Formally, for any two codewords $\ba, \bb \in \sC$,
the bitstring $\ba \oplus \be$ is correctly decoded into $\ba$
if and only if the bitstring $\bb \oplus \be$ is correctly decoded
into $\bb$.
\end{enumerate}
\end{lemma}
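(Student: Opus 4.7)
The plan is to reduce both parts to the linearity of $\sC$ and a single distance-preserving bijection on codewords. The key observation is that the nearest-codeword decoder, when fed $\ba \oplus \be$, outputs $\ba$ correctly if and only if $\ba$ is the \emph{unique} nearest codeword to $\ba \oplus \be$; otherwise (either another codeword is strictly closer, or there is a tie) the decoder either outputs the wrong codeword or fails by the tie-breaking convention set up just before the lemma. So it suffices to compare distances from $\ba \oplus \be$ to arbitrary codewords of $\sC$.

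For part~\ref{lemma_ecc_indep_1}, I would compute, for any $\bb \in \sC$,
\[
  d_\sH(\ba \oplus \be, \bb) = |\ba \oplus \be \oplus \bb| = |\be \oplus (\bb \oplus \ba)| = d_\sH(\be, \bb \oplus \ba).
\]
Since $\sC$ is a linear subspace, the map $\bb \mapsto \bb \oplus \ba$ is a bijection of $\sC$ onto itself, sending $\ba$ to $\mathbf{0}$. Consequently the multiset of distances $\{d_\sH(\ba \oplus \be, \bb)\}_{\bb \in \sC}$ coincides with $\{d_\sH(\be, \bc)\}_{\bc \in \sC}$, with $\ba$ on the left playing exactly the role that $\mathbf{0}$ plays on the right. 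Therefore $\ba$ is the unique nearest codeword to $\ba \oplus \be$ if and only if $\mathbf{0}$ is the unique nearest codeword to $\be$, which is what part~\ref{lemma_ecc_indep_1} asserts.

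Part~\ref{lemma_ecc_indep_2} is then immediate: the criterion established in part~\ref{lemma_ecc_indep_1} --- that $\mathbf{0}$ be strictly closer to $\be$ than any nonzero codeword --- depends only on $\be$ and not on $\ba$. Hence decoding $\ba \oplus \be$ succeeds iff decoding $\bb \oplus \be$ succeeds, for any two codewords $\ba, \bb \in \sC$.

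I do not expect any serious obstacle; the result is essentially a restatement of the translation invariance of Hamming distance combined with $\sC$ being closed under XOR. The one subtlety I would be careful about in the write-up is the ``uniquely $\mathbf{0}$'' clause: I must phrase ``successfully corrects'' to exclude ties (consistently with the convention that the decoder fails when two codewords are equidistant from the input), since without that convention the ``if'' direction would require a tie-breaking rule and the equivalence would be only one-sided.
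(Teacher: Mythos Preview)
Your proposal is correct and follows essentially the same approach as the paper: both arguments rest on the translation identity $d_\sH(\ba \oplus \be, \bb) = d_\sH(\be, \bb \oplus \ba)$ together with the fact that $\bb \mapsto \bb \oplus \ba$ is a bijection of $\sC$ onto itself sending $\ba$ to $\mathbf{0}$. The paper phrases this via a squeeze inequality on specific nearest codewords rather than your multiset-of-distances bijection, but the content is identical, and your handling of the uniqueness/tie-breaking convention matches the paper's.
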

\begin{proof}
As explained in Subsubsection~\ref{subsubsec_prelims_ecc}, for $\ba \in \sC$,
the bitstring $\ba \oplus \be$ is decoded to its nearest codeword.
Let us denote by $\bx \in \sC$ a codeword nearest to $\be$,
and denote by $\by' \in \sC$ a codeword nearest to $\ba \oplus \be$.
We can easily denote $\by' = \ba \oplus \by$
(by denoting $\by \triangleq \by' \oplus \ba$;
let us remember that $\by', \ba \in \sC$, so $\by \in \sC$).
Then, because $\bx$ is a codeword nearest to $\be$
and $\ba \oplus \by$ is a codeword nearest to $\ba \oplus \be$, we get:
\begin{equation}
d_\sH(\bx, \be) = d_\sH(\ba \oplus \bx, \ba \oplus \be)
\ge d_\sH(\ba \oplus \by, \ba \oplus \be)
= d_\sH(\by, \be) \ge d_\sH(\bx, \be).
\end{equation}
Because the left-hand-side and right-hand-side are equal,
it follows that all inequalities must be equalities,
so $d_\sH(\bx, \be) = d_\sH(\by, \be)$
and $d_\sH(\ba \oplus \bx, \ba \oplus \be)
= d_\sH(\ba \oplus \by, \ba \oplus \be)$.
Therefore, both $\bx$ and $\by$ are codewords nearest to $\be$,
and both $\ba \oplus \bx$ and $\ba \oplus \by$
are codewords nearest to $\ba \oplus \be$.
This particularly means that the codeword nearest to $\be$ is uniquely $\bx$
if and only if the codeword nearest to $\ba \oplus \be$ is uniquely
$\ba \oplus \bx$, in which case $\bx = \by$.

Therefore, the nearest-codeword decoder correctly decodes $\ba \oplus \be$
into $\ba$ if and only if the codeword nearest to $\ba \oplus \be$ is uniquely
$\ba$~\footnote{If the codeword nearest to $\ba \oplus \be$ is {\em not} unique,
then according to our above assumption, the nearest-codeword decoder fails.},
which happens if and only if the codeword nearest to $\be$ is uniquely
$\mathbf{0}$; this proves item~\ref{lemma_ecc_indep_1}.

For deducing item~\ref{lemma_ecc_indep_2}, we apply item~\ref{lemma_ecc_indep_1}
twice, proving that for any two codewords $\ba, \bb \in \sC$,
the bitstring $\ba \oplus \be$ is correctly decoded into $\ba$
if and only if the codeword nearest to $\be$ is uniquely $\mathbf{0}$,
which happens if and only if the bitstring $\bb \oplus \be$ is correctly decoded
into $\bb$.
Therefore, the decoding's correctness is independent of the codeword
($\ba$ or $\bb$), and depends only on the error word $\be$.
\end{proof}

Lemma~\ref{lemma_ecc_indep} can be extended to
QKD's modified error correction process
described in Subsubsection~\ref{subsubsec_prelims_ecc}.
For this, we use a {\em modified nearest-codeword decoder}
(which we call ``nearest-word-in-code-coset decoder''),
which gets an input word $\bx \in \mathbf{F}_2^n$
and an input syndrome $\bxi \in \mathbf{F}_2^r$
and decodes $\bx$ into the nearest word in the code coset
$\sC_\bxi \triangleq \{\bz \in \mathbf{F}_2^n \mid \bz P_\sC^\transp = \bxi\}$
defined in Equation~\eqref{ecc_c_xi_def}.
As before, we assume that if {\em two or more words in the code coset}
are nearest to $\bx$ and have equal distances from $\bx$,
the decoder fails---namely, that the decoder never ``breaks ties''.
The generalization is now presented in the following Lemma:
\begin{lemma}\label{lemma_ecc_indep_coset}
For any $[n,k]$ binary linear error-correcting code $\sC$
and any error word $\be \in \mathbf{F}_2^n$:
\begin{enumerate}
\item \label{lemma_ecc_indep_coset_1}A nearest-word-in-code-coset decoder
successfully corrects the error $\be$
if and only if the codeword (in $\sC$) nearest to $\be$ is uniquely $\mathbf{0}$
(namely, $\mathbf{0} \in \sC$ is strictly closer to $\be$
than any non-zero codeword in $\sC$).

Formally, if the input word to the nearest-word-in-code-coset decoder
is $\bi \oplus \be$ and the input syndrome is $\bxi$,
where $\bi \in \mathbf{F}_2^n$ is the original word,
$\be \in \mathbf{F}_2^n$ is the error word,
and $\bxi = \bi P_\sC^\transp$ is the syndrome of the original word,
then $\bi \oplus \be$ is correctly decoded into $\bi$
if and only if the codeword (in $\sC$) nearest to $\be$ is uniquely $\mathbf{0}$.

\item \label{lemma_ecc_indep_coset_2}In particular,
the success of correcting the error $\be$
using the nearest-word-in-code-coset decoder is independent of the original word
\emph{or} the input syndrome, and depends only on the error word.

Formally, for any two original words $\bi_1, \bi_2 \in \mathbf{F}_2^n$,
the bitstring $\bi_1 \oplus \be$ is correctly decoded into $\bi_1$
(given the input syndrome $\bxi_1 \triangleq \bi_1 P_\sC^\transp$)
if and only if the bitstring $\bi_2 \oplus \be$ is correctly decoded
into $\bi_2$
(given the input syndrome $\bxi_2 \triangleq \bi_2 P_\sC^\transp$).
\end{enumerate}
\end{lemma}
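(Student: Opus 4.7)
The plan is to reduce the coset setting directly to the linear-code setting already handled by Lemma~\ref{lemma_ecc_indep}, by exploiting the translation invariance of the Hamming distance: $d_\sH(\bu \oplus \bw, \bv \oplus \bw) = d_\sH(\bu, \bv)$ for any $\bu, \bv, \bw \in \mathbf{F}_2^n$. The key structural observation is that if $\bi \in \mathbf{F}_2^n$ has syndrome $\bxi = \bi P_\sC^\transp$, then $\bi$ itself is a valid representative of the coset, so by the discussion after Equation~\eqref{ecc_c_xi_def} we have $\sC_\bxi = \bi \oplus \sC$; every word in $\sC_\bxi$ can be written uniquely as $\bi \oplus \ba$ for some $\ba \in \sC$.

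First I would use this identification, together with translation invariance, to transfer the decoding problem on the coset to a decoding problem on $\sC$: for each $\ba \in \sC$,
\begin{equation}
d_\sH(\bi \oplus \ba,\; \bi \oplus \be) \;=\; d_\sH(\ba, \be),
\end{equation}
so $\bi \oplus \ba^\star$ is a word in $\sC_\bxi$ nearest to the input $\bi \oplus \be$ if and only if $\ba^\star$ is a codeword in $\sC$ nearest to $\be$. Moreover, since this is a bijection preserving distances exactly, a tie among nearest words in $\sC_\bxi$ corresponds precisely to a tie among nearest codewords in $\sC$, so the ``decoder fails on ties'' convention transfers cleanly between the two settings. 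This immediately gives item~\ref{lemma_ecc_indep_coset_1}: the decoder correctly outputs $\bi$ iff the unique nearest element of $\sC_\bxi$ to $\bi \oplus \be$ is $\bi$ itself, which by the bijection holds iff the unique nearest codeword in $\sC$ to $\be$ is $\mathbf{0}$.

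Item~\ref{lemma_ecc_indep_coset_2} then follows formally by applying item~\ref{lemma_ecc_indep_coset_1} twice, once for $\bi_1$ with syndrome $\bxi_1$ and once for $\bi_2$ with syndrome $\bxi_2$: both decodings succeed under the same condition, namely that the codeword in $\sC$ nearest to $\be$ is uniquely $\mathbf{0}$, and this condition is stated purely in terms of $\be$ and $\sC$, with no dependence on the specific original word or its syndrome.

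The main (and rather small) obstacle is simply being careful about the tie-breaking convention so that the reduction is really an equivalence rather than a one-sided implication; as noted above, this is handled automatically because the map $\ba \mapsto \bi \oplus \ba$ is a distance-preserving bijection between $\sC$ and $\sC_\bxi$. Aside from that, the proof is a clean analogue of the proof of Lemma~\ref{lemma_ecc_indep}, organized around the identification $\sC_\bxi = \bi \oplus \sC$ instead of around the substitution $\by' = \ba \oplus \by$ used there, so I do not expect to need any new ideas beyond what is already in the preceding lemma.
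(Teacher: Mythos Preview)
Your proposal is correct and takes essentially the same approach as the paper: both arguments rest on the identification $\sC_\bxi = \bi \oplus \sC$ together with translation invariance of the Hamming distance, and both deduce item~\ref{lemma_ecc_indep_coset_2} by applying item~\ref{lemma_ecc_indep_coset_1} twice. The paper packages the distance-preserving bijection as a sandwich-inequality argument (mirroring its proof of Lemma~\ref{lemma_ecc_indep}), while you state the bijection directly, but the content is identical.
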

\begin{proof}
As explained in Subsubsection~\ref{subsubsec_prelims_ecc}
and in the above definition of the nearest-word-in-code-coset decoder,
for the original word $\bi \in \mathbf{F}_2^n$
and the input syndrome $\bxi = \bi P_\sC^\transp$,
the bitstring $\bi \oplus \be$ is decoded
to the nearest word in the code coset
$\sC_\bxi \triangleq \{\bz \in \mathbf{F}_2^n \mid \bz P_\sC^\transp = \bxi\}$
(where also $\bi \in \sC_\bxi$, by definition).
Let us denote by $\bx \in \sC$ a codeword (in $\sC$) nearest to $\be$,
and denote by $\by' \in \sC_\bxi$ a word in the code coset
that is nearest to $\bi \oplus \be$.
We can easily denote $\by' = \bi \oplus \by$
(by denoting $\by \triangleq \by' \oplus \bi$;
let us remember that $\by', \bi \in \sC_\bxi$, so $\by \in \sC$).
Then, because $\bx \in \sC$ is a codeword nearest to $\be$
and $\bi \oplus \by \in \sC_\bxi$ is a word in the code coset
nearest to $\bi \oplus \be$, we get:
\begin{equation}
d_\sH(\bx, \be) = d_\sH(\bi \oplus \bx, \bi \oplus \be)
\ge d_\sH(\bi \oplus \by, \bi \oplus \be)
= d_\sH(\by, \be) \ge d_\sH(\bx, \be).
\end{equation}
Because the left-hand-side and right-hand-side are equal,
it follows that all inequalities must be equalities,
so $d_\sH(\bx, \be) = d_\sH(\by, \be)$
and $d_\sH(\bi \oplus \bx, \bi \oplus \be)
= d_\sH(\bi \oplus \by, \bi \oplus \be)$.
Therefore, both $\bx$ and $\by$ are codewords nearest to $\be$,
and both $\bi \oplus \bx$ and $\bi \oplus \by$
are words in the code coset $\sC_\bxi$ that are nearest to $\bi \oplus \be$.
This particularly means that the codeword nearest to $\be$ is uniquely $\bx$
if and only if the word in $\sC_\bxi$ that is nearest to $\bi \oplus \be$
is uniquely $\bi \oplus \bx$, in which case $\bx = \by$.

Therefore, the nearest-word-in-code-coset decoder
correctly decodes $\bi \oplus \be$ into $\bi$ if and only if
the word in $\sC_\bxi$ nearest to $\bi \oplus \be$ is uniquely
$\bi$~\footnote{If the word in $\sC_\bxi$ that is nearest to $\bi \oplus \be$ is
{\em not} unique, then according to our above assumption,
the nearest-word-in-code-coset decoder fails.},
which happens if and only if the codeword nearest to $\be$ is uniquely
$\mathbf{0}$; this proves item~\ref{lemma_ecc_indep_coset_1}.

For deducing item~\ref{lemma_ecc_indep_coset_2},
we apply item~\ref{lemma_ecc_indep_coset_1} twice,
proving that for any two original words $\bi_1, \bi_2 \in \mathbf{F}_2^n$,
the bitstring $\bi_1 \oplus \be$ is correctly decoded into $\bi_1$
if and only if the codeword nearest to $\be$ is uniquely $\mathbf{0}$,
which happens if and only if the bitstring $\bi_2 \oplus \be$ is correctly decoded
into $\bi_2$.
Therefore, the decoding's correctness is independent of the codeword
($\bi_1$ or $\bi_2$) {\em or} the respective input syndromes $\bxi_1, \bxi_2$,
and depends only on the error word $\be$.
\end{proof}
Lemmas~\ref{lemma_ecc_indep} and~\ref{lemma_ecc_indep_coset}
imply that for any code $\sC$,
the nearest-word-in-code-coset decoder of QKD
gives {\em identical results} as the standard nearest-codeword decoder
on any error string $\be \in \mathbf{F}_2^n$:
namely, if any of the two decoders
operates on the error string $\be$,
it will be successful in decoding if and only if
the codeword nearest to $\be$ is uniquely $\mathbf{0}$.
Therefore, both decoders work correctly on all error words
$\be \in \mathbf{F}_2^n$
whose nearest codeword is uniquely $\mathbf{0} \in \sC$,
and both fail on all other error words.
(The probability that this decoding fails for a random code is bounded
by Corollary~\ref{corollary_ecc_decoding}.)

Second, given a specific word $\bl \in \mathbf{F}_2^n$,
we bound the probability that for a randomly chosen code $\sC$
there is another low-weight word $\bz$
such that the difference $\bl \oplus \bz$ is inside $\sC$:
\begin{proposition}\label{prop_ecc_random}
Let $n, k, t \ge 1$ be integers
such that $0 \le \frac{t}{n} \le \frac{1}{2}$.
For any word $\bl \in \mathbf{F}_2^n$,
the probability that a {\em randomly chosen} $[n,k]$ binary
linear error-correcting code $\sC$ has another word $\bz \in \bl + \sC$
(namely, a word $\bz \in \mathbf{F}_2^n$
such that $\bl \oplus \bz \in \sC$)
that satisfies $\bz \ne \bl$ and $|\bz| \le t$ is
\begin{equation}
\pr_\sC \left[ \exists \bz \in \bl + \sC : \bz \ne \bl, |\bz| \le t
\ \mid \ \bl \right] \le 2^{n[H_2(t/n) - r/n]},
\end{equation}
where $r \triangleq n - k$ and
$H_2(x) \triangleq -x \log_2(x) - (1-x) \log_2(1-x)$.
\end{proposition}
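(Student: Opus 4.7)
The plan is to start with a change of variables: substituting $\bv \triangleq \bl \oplus \bz$ puts the event into a cleaner form. As $\bz$ ranges over $\bl + \sC$, the image $\bv$ ranges over $\sC$; moreover, $|\bz| = |\bl \oplus \bv| = d_\sH(\bl, \bv)$, and $\bz \ne \bl$ translates to $\bv \ne \mathbf{0}$. So the event is equivalent to ``$\sC$ contains a non-zero codeword $\bv$ that lies in the Hamming ball of radius $t$ around $\bl$''.

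Second, I would apply a union bound over the candidate codewords $\bv$ in that ball:
\begin{equation*}
\pr_\sC\!\left[\exists \bv \in \sC \setminus \{\mathbf{0}\},\ d_\sH(\bl, \bv) \le t\right]
\ \le\ \sum_{\substack{\bv \ne \mathbf{0}\\ d_\sH(\bl, \bv) \le t}} \pr_\sC[\bv \in \sC].
\end{equation*}
The number of summands is at most the Hamming-ball volume $\sum_{i=0}^{t} \binom{n}{i}$, which by the classical binomial-entropy estimate is at most $2^{n H_2(t/n)}$ whenever $t/n \le 1/2$. For the per-codeword probability I would use the parity-check description of $\sC$: $\bv \in \sC$ iff $\bv P_\sC^\transp = \mathbf{0}$. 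Under the natural random-code model in which $P_\sC$ is a uniformly random $r \times n$ matrix over $\mathbf{F}_2$, the vector $\bv P_\sC^\transp$ is uniform in $\mathbf{F}_2^r$ for any fixed non-zero $\bv$, so $\pr_\sC[\bv \in \sC] = 2^{-r}$. (Under the alternative model in which $\sC$ is a uniformly random $k$-dimensional subspace, the probability is $(2^k - 1)/(2^n - 1) < 2^{-r}$, which is still sufficient.) Multiplying the three estimates gives $2^{-r} \cdot 2^{n H_2(t/n)} = 2^{n[H_2(t/n) - r/n]}$, as claimed.

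The only real subtlety, and thus the main obstacle, is pinning down the random-code model carefully so that the per-codeword bound $\pr_\sC[\bv \in \sC] \le 2^{-r}$ holds cleanly for every fixed non-zero $\bv$; everything else is a standard union bound combined with the textbook entropy bound on the binomial tail. Notably, no deeper structural property of $\sC$ (minimum distance, dual distance, covering radius, etc.) is used, which is precisely why the bound is stated in this generality and is later useful for a probabilistic existence argument for good codes.
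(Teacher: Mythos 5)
Your proof is correct, and it reaches the stated bound by a genuinely different (in fact, dual) union bound from the one in the paper. The paper parametrizes the nonzero codewords as $\bu G_\sC$ for $\bu \in \mathbf{F}_2^k \setminus \{\mathbf{0}\}$, observes that each such codeword is uniform over $\mathbf{F}_2^n \setminus \{\mathbf{0}\}$, bounds the probability that it lands in the Hamming ball of radius $t$ around $\bl$ by $V_2(n,t)/(2^n-1)$, and then unions over the $2^k-1$ messages; you instead union over the at most $2^{nH_2(t/n)}$ words $\bv$ in the ball and bound the probability that each fixed nonzero $\bv$ is a codeword by $2^{-r}$. The two decompositions are two ways of slicing the same double count and give the identical final expression. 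Your route is the more standard Gilbert--Varshamov-style argument and has the advantage of meshing directly with the parity-check description of $\sC$, which is what Alice actually samples in the protocol; the one caveat is the random-code model, which you correctly identify as the only delicate point. The paper's model is a uniformly random generator matrix re-sampled until full rank (equivalently, in the application, the uniform distribution over subspaces spanned by the rows of the full-rank matrix $\begin{pmatrix} P_\sC \\ P_\sK \end{pmatrix}$), which is your ``alternative model'' with per-word probability $(2^k-1)/(2^n-1) < 2^{-r}$ --- so your fallback, not your primary i.i.d.-matrix model, is the one that literally matches the paper, but either yields the claimed bound.
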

\begin{proof}
This proof is based on the proof
of~\cite[Lemma~4.15 in page~115]{RothBook}.
Let $\bl \in \mathbf{F}_2^n$ be a specific word.
If we choose a random $[n,k]$ binary linear code $\sC$
by choosing a uniformly random $k \times n$ generator matrix $G_\sC$
over $\mathbf{F}_2$~\footnote{In fact, it is possible that the randomly
chosen generator matrix
$G_\sC$ satisfies $\mathrm{rank}(G_\sC) < k$, which gives a ``bad''
code (not a valid $[n,k]$ error-correcting code). In this case,
we can assume the process is run again
until we get a valid code---namely, until we get a matrix $G_\sC$
satisfying $\mathrm{rank}(G_\sC) = k$.},
then for any fixed non-zero vector
$\bu \in \mathbf{F}_2^k \setminus \{\mathbf{0}\}$,
the resulting product $\bu G_\sC$ is uniformly distributed
over $\mathbf{F}_2^n \setminus \{\mathbf{0}\}$,
and so the sum $\bl \oplus \bu G_\sC$ is uniformly distributed
over $\mathbf{F}_2^n \setminus \{\bl\}$. Therefore:
\begin{eqnarray}
\pr_\sC \left[ |\bl \oplus \bu G_\sC| \le t \ \mid \ \bl \right]
&=& \sum_{\bz \in \mathbf{F}_2^n ~ : ~ |\bz| \le t}
\pr_\sC \left[ \bl \oplus \bu G_\sC = \bz \ \mid \ \bl \right] \nonumber \\
&=& \sum_{\bz \in \mathbf{F}_2^n \setminus \{\bl\} ~ : ~ |\bz| \le t}
\frac{1}{2^n - 1} \le \frac{1}{2^n - 1} \cdot V_2(n,t),
\end{eqnarray}
where $V_2(n,t)$ is the number of words $\bz \in \mathbf{F}_2^n$
that satisfy $|\bz| \le t$
(see definition in~\cite[page~95]{RothBook}).
In~\cite[Lemma~4.7 in page~105]{RothBook} it was proved that
$V_2(n,t) \le 2^{n H_2(t/n)}$
(assuming $0 \le \frac{t}{n} \le \frac{1}{2}$), so:
\begin{equation}
\pr_\sC \left[ |\bl \oplus \bu G_\sC| \le t \ \mid \ \bl \right]
\le \frac{1}{2^n - 1} \cdot V_2(n,t)
\le \frac{1}{2^n - 1} \cdot 2^{n H_2(t/n)}.
\end{equation}

We can therefore compute:
\begin{eqnarray}
\pr_\sC \left[ \exists \bz \in \bl + \sC : \bz \ne \bl, |\bz| \le t
\ \mid \ \bl \right]
&\le& \pr_\sC \left[ \exists \bu \in \mathbf{F}_2^k
\setminus \{\mathbf{0}\} : |\bl \oplus \bu G_\sC|
\le t \ \mid \ \bl \right] \nonumber \\
&\le& \sum_{\bu \in \mathbf{F}_2^k \setminus \{\mathbf{0}\}}
\pr_\sC \left[ |\bl \oplus \bu G_\sC| \le t \ \mid \ \bl \right]
\nonumber \\
&\le& \sum_{\bu \in \mathbf{F}_2^k \setminus \{\mathbf{0}\}}
\frac{1}{2^n - 1} \cdot 2^{n H_2(t/n)} \nonumber \\
&=& \frac{2^k - 1}{2^n - 1} \cdot 2^{n H_2(t/n)}
\le \frac{2^k}{2^n} \cdot 2^{n H_2(t/n)} \nonumber \\
&=& 2^{n H_2(t/n) - (n-k)} = 2^{n [H_2(t/n) - r/n]}.
\end{eqnarray}
\end{proof}

Finally, we combine all results to study the failure probability
of correcting a specific error $\be$ in a randomly chosen code $\sC$:
\begin{corollary}\label{corollary_ecc_decoding}
Let $n, k, t \ge 1$ be integers
such that $0 \le \frac{t}{n} \le \frac{1}{2}$.
For any error string $\be \in \mathbf{F}_2^n$ satisfying $|\be| \le t$,
the probability that a {\em randomly chosen} $[n,k]$ binary
linear error-correcting code $\sC$ cannot correct the error $\be$
using a nearest-codeword decoder is
\begin{equation}
\pr_\sC \left[ \sC\textrm{ cannot correct }\be \mid \be \right]
\le 2^{n[H_2(t/n) - r/n]},
\end{equation}
where $r \triangleq n - k$ and
$H_2(x) \triangleq -x \log_2(x) - (1-x) \log_2(1-x)$.
\end{corollary}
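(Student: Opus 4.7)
The plan is to combine Lemma~\ref{lemma_ecc_indep} with Proposition~\ref{prop_ecc_random} in the most direct way possible: Lemma~\ref{lemma_ecc_indep} characterizes decoder failure in terms of the existence of a non-zero codeword close to $\be$, and Proposition~\ref{prop_ecc_random} (with $\bl = \be$) already bounds exactly the probability of such an event over a random code. So this corollary should fall out essentially by rewriting the failure event in the language of Proposition~\ref{prop_ecc_random}.

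First, I would invoke item~\ref{lemma_ecc_indep_1} of Lemma~\ref{lemma_ecc_indep}: the nearest-codeword decoder fails on the error word $\be$ iff the codeword nearest to $\be$ is \emph{not} uniquely $\mathbf{0}$. Since $d_\sH(\mathbf{0}, \be) = |\be| \le t$, this non-unique-nearest condition is equivalent to the existence of some $\ba \in \sC \setminus \{\mathbf{0}\}$ satisfying $d_\sH(\ba, \be) \le |\be| \le t$; this single inequality covers both the case of a strictly closer non-zero codeword and the tie case (where $\mathbf{0}$ and $\ba$ are equidistant from $\be$ and the no-tie-breaking convention causes failure).

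Second, I would translate this into the form used in Proposition~\ref{prop_ecc_random}: setting $\bz \triangleq \ba \oplus \be$, one has $\bz \oplus \be = \ba \in \sC$, hence $\bz \in \be + \sC$; also $\bz \ne \be$ because $\ba \ne \mathbf{0}$; and $|\bz| = |\ba \oplus \be| = d_\sH(\ba,\be) \le t$. Therefore the event ``$\sC$ cannot correct $\be$'' is contained in the event ``$\exists \bz \in \be + \sC$ with $\bz \ne \be$ and $|\bz| \le t$'', and applying Proposition~\ref{prop_ecc_random} with $\bl = \be$ gives exactly the stated bound $2^{n[H_2(t/n) - r/n]}$. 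There is no real obstacle; the only subtlety worth flagging is that the tie case is correctly absorbed by the non-strict inequality $|\bz| \le t$, so the hypothesis $|\be| \le t$ is precisely the right strength and no extra slack is needed.
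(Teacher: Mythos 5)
Your proposal is correct and follows essentially the same route as the paper: both reduce decoder failure via Lemma~\ref{lemma_ecc_indep} to the existence of a non-zero codeword $\ba$ with $d_\sH(\ba,\be) \le |\be| \le t$, substitute $\bz = \ba \oplus \be$, and invoke Proposition~\ref{prop_ecc_random} with $\bl = \be$. Your explicit remark that the tie case is absorbed by the non-strict inequality is a slightly more careful phrasing of the same equivalence the paper states.
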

\begin{proof}
According to Lemmas~\ref{lemma_ecc_indep}
and~\ref{lemma_ecc_indep_coset}, the error word $\be$
is correctly decoded by a nearest-codeword decoder if and only if
the codeword nearest to $\be$ is uniquely $\mathbf{0}$.
In other words, $\be$ is wrongly decoded by $\sC$ if and only if
there exists a codeword $\bc \in \sC \setminus \{\mathbf{0}\}$
such that $|\be \oplus \bc| \le |\be| \le t$,
which implies that $\bz \triangleq \be \oplus \bc$ satisfies
$\bz \in \be + \sC$, $\bz \ne \be$, and $|\bz| \le t$.
Therefore, according to Proposition~\ref{prop_ecc_random}:
\begin{equation}
\pr_\sC \left[ \sC\textrm{ cannot correct }\be \mid \be \right]
\le \pr_\sC \left[ \exists \bz \in \be + \sC : \bz \ne \be, |\bz| \le t
\ \mid \ \be \right] \le 2^{n[H_2(t/n) - r/n]}.
\end{equation}
\end{proof}

\section{\label{generalized-bb84_description}Full Definition
of the Generalized BB84 Protocols}
The protocols for which we prove security in this paper belong to
a generalized class of BB84-like protocols. Below we formally define
this general class of protocols. Some of the details in this definition
are decided by each specific protocol, but most details are
shared by all protocols.
\begin{enumerate}
\item \label{protocol_pre_parameters} Before the protocol begins,
Alice and Bob choose some shared (and public) parameters:
the integer numbers $N, n, r, m$ (such that $r + m \le n < N$),
the sets $B$ and $\{S_\bb\}_{\bb \in B}$
and probability distributions over them
(decided by the specific protocol)
that will control the future choice of bitstrings
$\bb, \bs \in \mathbf{F}^{N}_2$,
and the testing function $T$ (decided by the specific protocol).

Formally, for choosing the sets $B$ and $\{S_\bb\}_{\bb \in B}$
and the corresponding probability distributions,
Alice and Bob should choose
the set $B \subseteq \mathbf{F}^{N}_2$ of basis strings,
the probabilities $\pr(\bb)$ for all $\bb \in B$,
the sets $S_\bb \subseteq \mathbf{F}^{N}_2$ of $\bs$ strings
for all $\bb \in B$,
and the probabilities $\pr(\bs \mid \bb)$
for all $\bb \in B$ and $\bs \in S_\bb$.
We require that $|\bs| = n$ for all $\bs \in S_\bb$.
(Equivalently, Alice and Bob choose a subset $R$ of the set
$\{(\bb,\bs) \mid \bb, \bs \in \mathbf{F}^{N}_2, |\bs| = n\}$ and
a probability distribution $\pr(\bb,\bs)$.)
The testing function $T:\mathbf{F}^{N-n}_2 \times \mathbf{F}^{N-n}_2
\times \mathbf{F}^{N}_2 \rightarrow \{0, 1\}$
must get $(\bi_\sT \oplus \bj_\sT, \bb_\sT, \bs)$ as inputs
and give $0$ or $1$ as an output.
In Section~\ref{sec_protocols} we give examples of protocols
and their formal definitions using these notations.

\item \label{protocol_send_quantum} Alice randomly chooses
an $N$-bit string $\bi \in \mathbf{F}^{N}_2$
(the choice of $\bi$ is uniformly random),
an $N$-bit string $\bb \in B$,
an $N$-bit string $\bs \in S_\bb$ (that must satisfy $|\bs| = n$),
an $r \times n$ parity-check matrix $P_\sC$
(corresponding to a linear error-correcting code $\sC$),
and an $m \times n$ privacy amplification matrix $P_\sK$
(representing a linear key-generation function).
It is required that {\em all} $r+m$ rows of the matrices
$P_\sC$ and $P_\sK$ put together are linearly independent;
both matrices are chosen uniformly at random under this condition,
and Alice keeps them secret at the current sage.

Then, Alice sends the $N$ qubit states
$\ket{i_1}_{b_1}, \ket{i_2}_{b_2}, \ldots, \ket{i_{N}}_{b_{N}}$,
one after the other, to Bob using the quantum channel.
(The probability distributions $\pr(\bb),\pr(\bs \mid \bb)$ and
the sets $B,S_\bb$ were chosen in Step~\ref{protocol_pre_parameters}
according to the specific protocol.)
Bob keeps each received qubit in a quantum memory,
not measuring it yet\footnote{
Here we assume that Bob has a quantum memory and can delay his
measurement. In practical implementations, Bob usually cannot do that,
but he is assumed to choose his own random basis string $\bb'' \in B$
and measure in the bases it dictates;
later, Alice and Bob discard the qubits measured in the wrong
basis. In that case, we need to assume that Alice sends more than
$N$ qubits, so that $N$ qubits are finally detected by Bob
and measured in the correct basis. In Appendix~A
of~\cite{BBBMR06}
it is explained why this change of the protocol
does not hurt security.}.

\item \label{protocol_send_bases} Alice sends to Bob
over the classical channel the bitstring $\bb = b_1 \ldots b_N$.
Bob measures each of the qubits he saved in the correct basis (namely,
when measuring the $i$-th qubit, he measures it in the $z$ basis
if $b_i = 0$, and he measures it in the $x$ basis if $b_i = 1$).

The bitstring measured by Bob is denoted by $\bj$.
The XOR of $\bi$ and $\bj$ is denoted $\bc \triangleq \bi \oplus \bj$.
If there is no noise and no eavesdropping, then $\bi = \bj$
(that is, $\bc = \mathbf{0}$).

\item Alice sends $\bs$ to Bob over the classical channel.
The INFO bits (that will be used for creating the final key)
are the $n$ bits with $s_j = 1$,
while the TEST bits (that will be used for testing)
are the $N - n$ bits with $s_j = 0$.
We denote the substrings of $\bi, \bj, \bc, \bb$ that correspond to
the INFO bits by $\bi_\sI$, $\bj_\sI$, $\bc_\sI$,
and $\bb_\sI$, respectively;
and we denote the substrings of $\bi, \bj, \bc, \bb$ that correspond to
the TEST bits by $\bi_\sT$, $\bj_\sT$, $\bc_\sT$,
and $\bb_\sT$, respectively.

\item \label{protocol_calc_test} Alice and Bob both publish
the $N - n$ bit values they have for all TEST bits
($\bi_\sT$ and $\bj_\sT$, respectively),
and they compute their XOR $\bc_\sT = \bi_\sT \oplus \bj_\sT$.
They compute $T(\bc_\sT, \bb_\sT, \bs)$:
if it is $0$, they abort the protocol;
if it is $1$, they continue the run of the protocol.
(The testing function $T$ was chosen
in Step~\ref{protocol_pre_parameters}
according to the specific protocol.)

\item The values of the $n$ remaining bits
(the INFO bits, with $s_j = 1$) are kept in secret by Alice and Bob.
The bitstring of Alice is $\bi_\sI$,
the bitstring of Bob is $\bj_\sI$,
and their XOR is $\bc_\sI$.

\item Alice sends to Bob over the classical channel
her chosen parity-check matrix $P_\sC$
(that corresponds to the error-correcting code $\sC$)
and the {\em syndrome} of $\bi_\sI$ with respect to $\sC$,
which consists of $r$ bits and is defined as
$\bxi \triangleq \bi_\sI P_\sC^\transp$.
Bob uses $\bxi$ to correct the errors in his $\bj_\sI$ string
(so that it should be the same as $\bi_\sI$)
using the ``code coset'' decoding method presented
in Subsubsection~\ref{subsubsec_prelims_ecc}.

\item Alice sends to Bob over the classical channel
her chosen privacy amplification matrix $P_\sK$.
The final key $\bk$ consists of $m$ bits and is defined as
$\bk \triangleq \bi_\sI P_\sK^\transp$; both Alice and Bob compute it.
\end{enumerate}

\section{Bound on the Security Definition
for Generalized BB84 Protocols}
\subsection{\label{subsec_invb}The Hypothetical
``inverted-INFO-basis'' Protocol}
Our security proof uses an alternative, hypothetical protocol
in which Alice sends to Bob the qubits after
inverting the bases of the INFO bits
(without changing the bases of the TEST bits).
We call this protocol ``hypothetical'' because
it is never actually used by Alice and Bob,
and we do not perform any reduction to it (or from it),
but we compute probabilities of certain events
in the hypothetical protocol for use in our security bound.
In particular, we use the error rate in the hypothetical protocol
for bounding the trace distance in the security definition
of the real protocol.

In the hypothetical protocol, Alice, Bob, and Eve
do everything exactly as they
would do in the real protocol, except that Alice and Bob use
(and publish) the basis string $\bb^0 \triangleq \bb \oplus \bs$
instead of $\bb$:
namely, they use the basis string $\bb_\sT$ for the TEST bits
and the basis string $\overline{\bb_\sI}$
(the bitwise NOT of $\bb_\sI$) for the INFO bits.

Formally, this hypothetical protocol is defined by replacing
Steps~\ref{protocol_send_quantum}--\ref{protocol_send_bases}
of the original protocol
(as described in Section~\ref{generalized-bb84_description})
by the following steps:
\begin{enumerate}\addtocounter{enumi}{1}
\item \label{invb_protocol_send_quantum} Alice randomly chooses
an $N$-bit string $\bi \in \mathbf{F}^{N}_2$,
an $N$-bit string $\bb \in B$,
an $N$-bit string $\bs \in S_\bb$ (that must satisfy $|\bs| = n$),
an $r \times n$ parity-check matrix $P_\sC$,
and an $m \times n$ privacy amplification matrix $P_\sK$.

Then, Alice computes the $N$-bit string
$\bb^0 \triangleq \bb \oplus \bs$,
and she sends the $N$ qubit states
$\ket{i_1}_{b^0_1}, \ket{i_2}_{b^0_2}, \ldots, \ket{i_{N}}_{b^0_{N}}$,
one after the other, to Bob using the quantum channel.
Bob keeps each received qubit in a quantum memory,
not measuring it yet.

\item \label{invb_protocol_send_bases} Alice sends to Bob
over the classical channel the bitstring $\bb^0 = b^0_1 \ldots b^0_N$.
Bob measures each of the qubits he saved in the correct basis (namely,
when measuring the $i$-th qubit, he measures it in the $z$ basis
if $b^0_i = 0$, and he measures it in the $x$ basis if $b^0_i = 1$).
\end{enumerate}

We notice that in this protocol, Alice {\em chooses} $\bb$ and $\bs$
in the same way as she would choose them in the real protocol, but
{\em uses} (and sends to Bob for his use) $\bb^0$ and $\bs$ instead.

In the security proof, we will use the notation of $\pr_\invb$
for computing the probability of a certain event assuming that
Alice and Bob use the hypothetical protocol.
In particular, we note that $\pr_\invb(\cdot \mid \bb, \bs)$
is a conditional probability on Alice {\em choosing} the bitstrings
$\bb, \bs$ (while she actually {\em uses} the basis string $\bb^0$).

We should note that $\pr_\invb(\cdot \mid \bb, \bs)$ is
the same as $\pr(\cdot \mid \bb^0, \bs)$
{\em if both notations are well-defined}:
namely, the hypothetical protocol given that Alice chooses $\bb, \bs$
(and thus uses $\bb^0, \bs$)
is the same as the real protocol given that Alice chooses $\bb^0, \bs$.
However, the second notation is not always well-defined,
because it may be the case that $\bb \in B$ while $\bb^0 \notin B$,
or that $\bs \in S_\bb$ while $\bs \notin S_{\bb^0}$;
therefore, it may be the case that $\bb^0$
is not an allowed basis string for the real protocol.
In the standard BB84 protocol (see Subsection~\ref{subsec_bb84}),
such problems are impossible, and this is
why~\cite{BBBMR06} used
the notation of $\pr(\cdot \mid \bb^0, \bs)$ instead of
$\pr_\invb(\cdot \mid \bb, \bs)$; however, in our paper, we discuss
generalized BB84 protocols, so we must use the notation of
$\pr_\invb(\cdot \mid \bb, \bs)$.\footnote{
It is also possible that $\pr(\bb, \bs) \ne \pr(\bb^0, \bs)$,
in which case the use of $\bb^0, \bs$ in the real protocol does not
happen with the same probability as the use of $\bb^0, \bs$
in the hypothetical protocol. However, if both probabilities
$\pr(\bb, \bs), \pr(\bb^0, \bs)$ are non-zero,
then $\pr_\invb(\cdot \mid \bb, \bs) = \pr(\cdot \mid \bb^0, \bs)$.}

\subsection{\label{sec_joint}The General Joint Attack of Eve}
Before the QKD protocol is performed (and, thus, independently of
$\bi$, $\bb$, and $\bs$), Eve chooses a specific joint
attack she wants to perform. In a {\em joint attack},
all qubits are attacked
using one giant probe (ancillary state) kept by Eve.
Eve saves her probe in a quantum memory
and can keep it indefinitely, even after the final key
is used by Alice and Bob; and she can perform,
at any time of her choice, an optimal measurement
of her giant probe, chosen based on all the information
she has at the time of the measurement (including the classical
information sent during the protocol, and including information
she acquires when Alice and Bob use the final key).

Given that Alice sends to Bob the state
$\ket{\bi}_{\bb} \triangleq \otimes_{j = 1}^N \ket{i_j}_{b_j}$
(namely, the $N$-bit string is $\bi$
and the $N$-bit basis string is $\bb$),
Eve attaches a probe state $\ket{0}_\sE$
and applies a unitary operator $U$ of her choice to the
compound system $\ket{0}_\sE \ket{\bi}_{\bb}$.
Then, Eve keeps to herself (in a quantum memory) her probe state,
and she sends to Bob the $N$-qubit quantum state sent from Alice to Bob
(which may have been modified due to her attack $U$).

The most general joint attack $U$ of Eve is
\begin{equation}
U \ket{0}_\sE \ket{\bi}_{\bb} = \sum_{\bj \in \mathbf{F}_2^N}
\ket{E'_{\bi,\bj}}_\bb \ket{\bj}_{\bb},
\end{equation}
where $\ket{E'_{\bi,\bj}}_\bb$ are non-normalized states
in Eve's probe system. We note that\footnote{In~\cite{BBBMR06},
no probabilities were conditioned on $P_\sC, P_\sK$,
because these matrices were assumed to be constant and public.
In our paper we assume $P_\sC,P_\sK$ to be randomly chosen,
so it appears all probability distributions based on~\cite{BBBMR06}
should have been conditioned on $P_\sC, P_\sK$;
however, in our paper Alice chooses $P_\sC, P_\sK$ uniformly at random,
keeps them secret, and does not use them until
Eve's attack is over, so probabilities related only
to Alice's other random choices ($\bi, \bb, \bs$)
and Eve's attack ($\bj, \bc \triangleq \bi \oplus \bj$)
are completely independent of $P_\sC, P_\sK$.}
\begin{equation}
\braket{E'_{\bi,\bj}}{E'_{\bi,\bj}}_\bb =
\pr(\bj \mid \bi, \bb, \bs) =
\pr(\bj \mid \bi, \bb, \bs, P_\sC, P_\sK).
\end{equation}
Writing the INFO and TEST bits of Alice and Bob separately
($\bi_\sI, \bi_\sT$ instead of $\bi$,
and $\bj_\sI, \bj_\sT$ instead of $\bj$),
we can denote $\ket{E'_{\bi,\bj}}_\bb$
by $\ket{E'_{\bi_\sI,\bi_\sT,\bj_\sI,\bj_\sT}}_\bb$.

Subsection~3.4 of~\cite{BBBMR06} introduced the notation of
$\ket{E_{\bi_\sI,\bj_\sI}}_{\bb, \bs}$;
this notation is useful because it treats $\bi_\sT$ and $\bj_\sT$
as constants, assuming their values to be known
(since they are ultimately published by Alice and Bob,
and then they are known to Eve).
It is defined as
\begin{equation}
\ket{E_{\bi_\sI,\bj_\sI}}_{\bb, \bs} \triangleq
\frac{1}{\sqrt{\pr(\bj_\sT \mid \bi_\sI, \bi_\sT, \bb, \bs)}}
\ket{E'_{\bi_\sI,\bi_\sT,\bj_\sI,\bj_\sT}}_\bb.
\end{equation}
We note that $\ket{E_{\bi_\sI,\bj_\sI}}_{\bb, \bs}$ also depends on
$\bi_\sT,\bj_\sT$ (and not only on $\bi_\sI, \bj_\sI, \bb, \bs$).
According to Equations~(3.22)--(3.23) of~\cite{BBBMR06},
assuming that Alice chooses $\bi_\sI, \bi_\sT, \bb, \bs$
and Bob measures $\bj_\sT$, the normalized state of Eve and Bob is
\begin{equation}
\ket{\psi_{\bi_\sI}} = \sum_{\bj_\sI \in \mathbf{F}_2^n}
\ket{E_{\bi_\sI,\bj_\sI}}_{\bb, \bs} \ket{\bj_\sI}_{\bb},
\end{equation}
and it also holds that
\begin{equation}
\braket{E_{\bi_\sI,\bj_\sI}}{E_{\bi_\sI,\bj_\sI}}_{\bb, \bs} =
\pr(\bj_\sI \mid \bi_\sI, \bi_\sT, \bj_\sT, \bb, \bs).
\end{equation}

Let us define $\left( \rho_{\bi_\sI,\bj_\sI}^{\bb, \bs} \right)_\sE$
(which also depends on $\bi_\sT, \bj_\sT$) to be the normalized
state of Eve if Alice chooses $\bi_\sI, \bi_\sT, \bb, \bs$
and Bob measures $\bj_\sI, \bj_\sT$. In other words,
$\left( \rho_{\bi_\sI,\bj_\sI}^{\bb, \bs} \right)_\sE$
is the normalized density matrix of both
$\ket{E_{\bi_\sI,\bj_\sI}}_{\bb, \bs}$
and $\ket{E'_{\bi, \bj}}_\bb$, so
\begin{equation}\label{rho_Eve_ij}
\left( \rho_{\bi_\sI,\bj_\sI}^{\bb, \bs} \right)_\sE \triangleq
\frac{\ket{E_{\bi_\sI,\bj_\sI}}_{\bb, \bs}
\bra{E_{\bi_\sI,\bj_\sI}}_{\bb, \bs}}
{\pr(\bj_\sI \mid \bi_\sI, \bi_\sT, \bj_\sT, \bb, \bs)} =
\frac{\ket{E'_{\bi, \bj}}_\bb \bra{E'_{\bi, \bj}}_\bb}
{\pr(\bj \mid \bi, \bb, \bs)} =
\frac{\ket{E'_{\bi, \bj}}_\bb \bra{E'_{\bi, \bj}}_\bb}
{\pr(\bj \mid \bi, \bb, \bs, P_\sC, P_\sK)}.
\end{equation}

Eve's state after her attack (tracing out Bob) is
\begin{equation}\label{rho_Eve}
\left( \rho^{\bi_\sI} \right)_\sE
\triangleq \tr_{\mathrm{Bob}}(\ket{\psi_{\bi_\sI}}
\bra{\psi_{\bi_\sI}}) = \sum_{\bj_\sI \in \mathbf{F}_2^n}
\ket{E_{\bi_\sI,\bj_\sI}}_{\bb, \bs}
\bra{E_{\bi_\sI,\bj_\sI}}_{\bb, \bs}
= \sum_{\bj_\sI \in \mathbf{F}_2^n}
\pr(\bj_\sI \mid \bi_\sI, \bi_\sT, \bj_\sT, \bb, \bs)
\left( \rho_{\bi_\sI,\bj_\sI}^{\bb, \bs} \right)_\sE,
\end{equation}
and according to Equation~(4.2) of~\cite{BBBMR06},
we define the purification $\ket{\varphi_{\bi_\sI}}_\sE$
of $\left( \rho^{\bi_\sI} \right)_\sE$
(so that $\left( \rho^{\bi_\sI} \right)_\sE$
is a partial trace of $\ket{\varphi_{\bi_\sI}}_\sE$) as
\begin{equation}\label{phi_Eve_purified}
\ket{\varphi_{\bi_\sI}}_\sE
\triangleq \sum_{\bj_\sI \in \mathbf{F}_2^n}
\ket{E_{\bi_\sI,\bj_\sI}}_{\bb, \bs} \ket{\bi_\sI \oplus \bj_\sI}.
\end{equation}

We also define a ``Fourier basis'' of non-normalized states
$\{\ket{\eta_\bl}_\sE\}_{\bl \in \mathbf{F}_2^n}$
(according to Definition~4.2 in~\cite{BBBMR06}) as
\begin{equation}
\ket{\eta_\bl}_\sE \triangleq \frac{1}{2^n}
\sum_{\bi_\sI \in \mathbf{F}_2^n}
(-1)^{\bi_\sI \cdot \bl} \ket{\varphi_{\bi_\sI}}_\sE,
\end{equation}
define the two notations $d_\bl$ and $\ket{\widehat{\eta}_\bl}_\sE$
(according to the same definition in~\cite{BBBMR06}) as
\begin{eqnarray}
d_\bl
&\triangleq& \sqrt{\braket{\eta_\bl}{\eta_\bl}_\sE},\label{phi_d}
\label{norm_eta}\\
\ket{\widehat{\eta}_\bl}_\sE
&\triangleq& \frac{\ket{\eta_\bl}_\sE}{d_\bl},\label{normalized_eta}
\end{eqnarray}
and deduce (according to Equation~(4.4) in~\cite{BBBMR06}) that
\begin{equation}\label{phi_as_eta}
\ket{\varphi_{\bi_\sI}}_\sE = \sum_{\bl \in \mathbf{F}_2^n}
(-1)^{\bi_\sI \cdot \bl} \ket{\eta_\bl}_\sE.
\end{equation}

\subsection{The Symmetrized Attack of Eve}
In~\cite{BBBMR06}, the most general joint attack was
not directly analyzed: for simplicity,
it was assumed that Eve applies a process
named {\em symmetrization}, resulting in a {\em symmetrized} attack.
The process of symmetrization is always beneficial to Eve
(it does not change the error rate, and we prove in
Proposition~\ref{prop_symm_general} that it does not decrease
Eve's information), so security against all symmetrized attacks
implies security against all possible joint attacks.

In Eve's original attack, she has her own probe subsystem $\sE$;
in the symmetrization process, Eve adds another probe subsystem $\sM$,
in the initial state of $\ket{0_x}_\sM \triangleq \frac{1}{\sqrt{2^N}}
\sum_{\bbm \in \mathbf{F}_2^N} \ket{\bbm}_\sM$.
Given the original attack
$U$ (applied to Alice's qubits and the probe $\sE$),
the symmetrized attack $U^\sym$ (applied to Alice's qubits
and both probes $\sE$ and $\sM$) is defined by
\begin{equation}
U^\sym \triangleq (\mathbf{I}_\sE \otimes S^\dagger)
(U \otimes \mathbf{I}_\sM) (\mathbf{I}_\sE \otimes S),
\end{equation}
where $S$ is a unitary operation applied to Alice's qubits
and to the probe $\sM$, and it operates as follows:
\begin{equation}
S \ket{\bi}_\bb \ket{\bbm}_\sM = (-1)^{(\bi \oplus \bb) \cdot \bbm}
\ket{\bi \oplus \bbm}_\bb \ket{\bbm}_\sM.
\end{equation}
Intuitively, Eve first XORs Alice's bit values with a random string
$\bbm$ (kept by her), then applies her original attack,
and finally reverses the XOR with $\bbm$.
Full definition and explanations are available in Subsection~3.1
of~\cite{BBBMR06}.

In this paper we use several properties of the symmetrized
attack. First of all, the ``Basic Lemma of Symmetrization'' (Lemma~3.1
of~\cite{BBBMR06}) gives the expression for
$\ket{E^\sym_{\bi, \bj}\,\!'}_\bb$ (of the symmetrized attack)
as a function of $\ket{E'_{\bi, \bj}}_\bb$ (of the original attack):
\begin{equation}\label{symm_basic_lemma}
\ket{E^\sym_{\bi, \bj}\,\!'}_\bb = \frac{1}{\sqrt{2^N}}
\sum_{\bbm \in \mathbf{F}_2^N} (-1)^{(\bi \oplus \bj) \cdot \bbm}
\ket{E'_{\bi \oplus \bbm, \bj \oplus \bbm}}_\bb \ket{\bbm}_\sM.
\end{equation}

The second property we use, proved in Corollary~3.3
of~\cite{BBBMR06},
is the fact that the probabilities of error strings $\bc_\sI$ and
$\bc_\sT$ (if {\em not conditioning} on $\bi$) are not affected
by the symmetrization. Namely,
\begin{equation}\label{symm_probs_c}
\pr^\sym(\bc_\sI, \bc_\sT \mid \bb, \bs) =
\pr(\bc_\sI, \bc_\sT \mid \bb, \bs).
\end{equation}
This property is true for all basis strings $\bb$. In particular,
it is true for the basis string $\bb^0 \triangleq \bb \oplus \bs$
used in the hypothetical ``inverted-INFO-basis'' protocol defined in
Subsection~\ref{subsec_invb}; therefore,
\begin{equation}\label{symm_probs_c_invb}
\pr_\invb^\sym(\bc_\sI, \bc_\sT \mid \bb, \bs) =
\pr_\invb(\bc_\sI, \bc_\sT \mid \bb, \bs).
\end{equation}

The third property we use, proved in Lemma~3.8
of~\cite{BBBMR06},
is the fact that the probabilities for errors in the TEST bits
are not affected by the bases used for the INFO bits:
\begin{equation}\label{symm_prob_test_err}
\pr^\sym(\bj_\sT \mid \bi_\sT, \bb, \bs) =
\pr^\sym(\bj_\sT \mid \bi_\sT, \bb_\sT, \bs).
\end{equation}
In particular, since the only difference between the hypothetical
``inverted-INFO-basis'' protocol and the real protocol is
the basis string used for the {\em INFO bits}
($\overline{\bb_\sI}$ and $\bb_\sI$, respectively),
this property means that the probabilities of errors in the TEST bits
are identical for both protocols:
\begin{equation}\label{symm_prob_test_err_invb}
\pr^\sym_\invb(\bj_\sT \mid \bi_\sT, \bb, \bs) =
\pr^\sym(\bj_\sT \mid \bi_\sT, \bb, \bs).
\end{equation}

The fourth property we use, proved in Corollary~3.6
of~\cite{BBBMR06},
is the fact that the probability of any string of INFO bits $\bi_\sI$
is uniform (that is, $\frac{1}{2^n}$) even when conditioning on
the four parameters $\bi_\sT, \bj_\sT, \bb, \bs$
that are ultimately known to Eve
(we note that $\bj_\sT$ is affected by Eve's attack). Namely,
\begin{equation}\label{symm_prob_info_bits}
\pr^\sym(\bi_\sI \mid \bi_\sT, \bj_\sT, \bb, \bs) =
\frac{1}{2^n}.
\end{equation}
In particular, the same is true when conditioning on $P_\sC, P_\sK$
(because, as noted in Subsection~\ref{sec_joint},
the randomly chosen matrices $P_\sC, P_\sK$
are completely independent of $\bi, \bj, \bb, \bs$). Thus,
\begin{equation}\label{symm_prob_info_bits_PC_PK}
\pr^\sym(\bi_\sI \mid \bi_\sT, \bj_\sT, \bb, \bs, P_\sC, P_\sK) =
\pr^\sym(\bi_\sI \mid \bi_\sT, \bj_\sT, \bb, \bs) =
\frac{1}{2^n}.
\end{equation}
Given specific $P_\sC, P_\sK$, we can thus compute the probability
of each specific syndrome $\bxi$
(namely, the probability that for a specific value of
$\bxi \in \mathbf{F}_2^r$, the chosen value of $\bi_\sI$
satisfies $\bi_\sI P_\sC^\transp = \bxi$):
\begin{eqnarray}
\pr^\sym(\bxi \mid \bi_\sT, \bj_\sT, \bb, \bs, P_\sC, P_\sK)
&=& \sum_{\bi_\sI\ \mid\ \bi_\sI P_\sC^\transp = \bxi}
\pr^\sym(\bi_\sI \mid \bi_\sT, \bj_\sT, \bb, \bs, P_\sC, P_\sK)
\nonumber \\
&=& \sum_{\bi_\sI\ \mid\ \bi_\sI P_\sC^\transp = \bxi} \frac{1}{2^n}
= 2^{n-r} \cdot \frac{1}{2^n} = \frac{1}{2^r}
\label{symm_prob_info_bits_xi_1}
\end{eqnarray}
(because the matrix $P_\sC$ has rank $r$,
so for each specific syndrome $\bxi \in \mathbf{F}_2^r$
there are exactly $2^{n-r}$ values of $\bi_\sI$
satisfying $\bi_\sI P_\sC^\transp = \bxi$).
In particular,
\begin{eqnarray}
\pr^\sym(\bxi \mid \bi_\sT, \bj_\sT, \bb, \bs)
&=& \sum_{P_\sC,P_\sK}
\pr^\sym(P_\sC, P_\sK \mid \bi_\sT, \bj_\sT, \bb, \bs)
\cdot \pr^\sym(\bxi \mid \bi_\sT, \bj_\sT, \bb, \bs, P_\sC, P_\sK)
\nonumber \\
&=& \sum_{P_\sC,P_\sK} \pr^\sym(P_\sC, P_\sK
\mid \bi_\sT, \bj_\sT, \bb, \bs) \cdot \frac{1}{2^r}
= \frac{1}{2^r},\label{symm_prob_info_bits_xi_2}
\end{eqnarray}
so
\begin{equation}\label{symm_prob_info_bits_xi_3}
\pr^\sym(\bxi \mid \bi_\sT, \bj_\sT, \bb, \bs, P_\sC, P_\sK)
= \frac{1}{2^r}
= \pr^\sym(\bxi \mid \bi_\sT, \bj_\sT, \bb, \bs).
\end{equation}
In addition, we can find that for any $\bi_\sI, P_\sC, P_\sK$ and
$\bxi \triangleq \bi_\sI P_\sC^\transp$,
according to Equations~\eqref{symm_prob_info_bits_PC_PK}
and~\eqref{symm_prob_info_bits_xi_1}:
\begin{equation}\label{symm_prob_info_bits_iI}
\pr^\sym(\bi_\sI \mid \bi_\sT, \bj_\sT, \bb, \bs, \bxi, P_\sC, P_\sK)
= \frac{\pr^\sym(\bi_\sI \mid \bi_\sT, \bj_\sT, \bb, \bs, P_\sC,
P_\sK)}{\pr^\sym(\bxi \mid \bi_\sT, \bj_\sT, \bb, \bs, P_\sC, P_\sK)}
= \frac{1 / 2^n}{1 / 2^r} = \frac{1}{2^{n-r}}.
\end{equation}

\subsection{\label{subsec_ivd}Main Result: Connecting Information to Disturbance}
Our security proof of the generalized BB84 protocols is very
similar to the security proof of BB84 itself, that was detailed
in~\cite{BBBMR06}. Most parts of the proof are
not affected at all by the changes made to BB84 to get the
generalized BB84 protocols (changes detailed in
Section~\ref{generalized-bb84_description} of the current paper),
because these parts assume fixed strings $\bb,\bs$
and fixed matrices $P_\sC,P_\sK$;
therefore, in some parts of our current proofs,
we refer to results from~\cite{BBBMR06}.

However, in this Subsection we improve
a pivotal result of~\cite{BBBMR06},
resulting in a significant improvement of the allowed error rate
derived from our security proof.

Given the error-correction parity-check matrix $P_\sC$
(which corresponds to the error-correcting code $\sC$)
and the privacy amplification matrix $P_\sK$,
we denote the rows of $P_\sC$
as the vectors $v_1, \ldots, v_r$ in $\mathbf{F}_2^n$,
and the rows of $P_\sK$
as the vectors $v_{r+1}, \ldots, v_{r+m}$;
these vectors must all be linearly independent.
We choose $n-r-m$ additional vectors
$\{v_{r+m+1}, \ldots, v_n\}$ that extend $\{v_1, \ldots, v_{r+m}\}$
to a basis of $\mathbf{F}_2^n$.
We thus also denote, for any $1 \le r' \le r+m$,
\begin{eqnarray}
V_{r'} &\triangleq& \Span \lbrace v_1, \ldots, v_{r'} \rbrace, \\
V_{r'}^c &\triangleq& \Span \lbrace v_{r'+1}, \ldots, v_n \rbrace, \\
V_{r'}^\except &\triangleq& \Span \lbrace v_1, \ldots, v_{r'-1},
v_{r'+1}, \ldots, v_{r+m} \rbrace,
\end{eqnarray}
which means that $V_{r'}$ and $V_{r'}^c$ are two complementary vector
spaces satisfying $\mathbf{F}_2^n = V_{r'}^c \oplus V_{r'}$ (namely,
each vector $\bl \in \mathbf{F}_2^n$ has a unique representation
$\bl = \bx \oplus \by$ with $\bx \in V_{r'}^c, \by \in V_{r'}$),
and $V_{r'}^\except$ is the $(r+m-1)$-dimensional
vector space that spans all error correction and
privacy amplification vectors except $v_{r'}$.

For a $1$-bit final key $k \in \{0, 1\}$ (that is, for $m = 1$),
and given a symmetrized attack of Eve,
we define $\widehat{\rho}^\sym_k$
to be the state of Eve corresponding to the final key $k$,
given that she knows $\bxi$, averaging over the matrices $P_\sC, P_\sK$
(which are ultimately known to Eve, and are thus included in the state
as classical information that will eventually be available to Eve)
and the bitstring $\bi_\sI$ (which is not ultimately known to Eve,
and is forgotten by Alice and Bob---who will only remember the final
key $k$---and is thus not included as classical information). Thus,
\begin{equation}\label{rhohatk}
\widehat{\rho}^\sym_k \triangleq \frac{1}{2^{n-r-1}}\sum_{P_\sC, P_\sK,
\bi_\sI\,\big|{\scriptsize\begin{matrix}\bi_\sI P_\sC^\transp = \bxi\\
\bi_\sI \cdot v_{r+1} = k\end{matrix}}} \pr^\sym(P_\sC, P_\sK) \cdot
\left( \rho^{\bi_\sI} \right)_\sE^\sym \otimes \ket{P_\sC, P_\sK}_\sC
\bra{P_\sC, P_\sK}_\sC,
\end{equation}
where $\left( \rho^{\bi_\sI} \right)_\sE^\sym$,
as defined in Equation~\eqref{rho_Eve},
is Eve's state after the (symmetrized) attack, given that
Alice sent the INFO bitstring $\bi_\sI$ (and given the bitstrings
$\bi_\sT, \bj_\sT, \bb, \bs$, that are ultimately known to Eve).

We now prove the following pivotal result
(based on Lemma~4.5 and Proposition~4.6 of~\cite{BBBMR06},
which are proved in Subsection~4.4 and Appendix~D.2 of~\cite{BBBMR06}),
which is the heart of our security proof---linking the information
obtained by Eve to the disturbance she would cause
in the hypothetical ``inverted-INFO-basis'' protocol:
\begin{theorem}
In the case of a $1$-bit final key (i.e., $m=1$),
for any {\em symmetrized} attack
and any integer $0 \le t \le \frac{n}{2}$,
\begin{equation}\label{boundrho}
\frac{1}{2} \tr |\widehat{\rho}^\sym_0 - \widehat{\rho}^\sym_1| \le
2 \sqrt{\pr^\sym_\invb\left[|\bC_\sI| \ge t
\ \mid\ \bi_\sT, \bj_\sT, \bb, \bs \right]
+ 2^{n[H_2(t/n) - (n-r-1)/n]}},
\end{equation}
where $\bC_\sI$ is the random variable whose value equals
$\bc_\sI \triangleq \bi_\sI \oplus \bj_\sI$;
$\pr^\sym_\invb$ means that the probability is taken over
the hypothetical ``inverted-INFO-basis'' protocol
defined in Subsection~\ref{subsec_invb}
(to which Eve applies the same symmetrized attack
that she applies to the real protocol);
and $H_2(x) \triangleq -x \log_2(x) - (1-x) \log_2(1-x)$.
\end{theorem}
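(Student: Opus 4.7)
The plan is to follow the structure of Lemma~4.5 and Proposition~4.6 of~\cite{BBBMR06}, adapted to trace distance and augmented by the code-randomness argument of Proposition~\ref{prop_ecc_random}. I would substitute the Fourier expansion~\eqref{phi_as_eta} into the purification~\eqref{phi_Eve_purified} and then into~\eqref{rhohatk}, and use that the classical register $\ket{P_\sC, P_\sK}_\sC\bra{P_\sC, P_\sK}_\sC$ is block-diagonal, so that the trace distance decomposes into an expectation over $(P_\sC, P_\sK)$ of conditional trace distances. For each fixed $(P_\sC, P_\sK)$, the inner character sum $\sum_{\bi_\sI} (-1)^{\bi_\sI \cdot (\bl \oplus \bl')}$ restricted to $\bi_\sI P_\sC^\transp = \bxi$ and $\bi_\sI \cdot v_{r+1} = k$ vanishes unless $\bl \oplus \bl' \in V_{r+1}$, and otherwise equals $2^{n-r-1}$ times an explicit $\bxi$- and $k$-dependent phase. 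Taking the difference $\widehat{\rho}^\sym_0 - \widehat{\rho}^\sym_1$ further kills all contributions with $\bl \oplus \bl' \in V_{r+1}^\except = V_r$ (using $m=1$), leaving only pairs with $\bl \oplus \bl' = v_{r+1} \oplus \mathbf{w}$ for some $\mathbf{w} \in V_r$.

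Next I would apply the triangle inequality together with the standard bound $\bigl\| \tr_{\mathrm{aux}}(\ket{x}\bra{y}) \bigr\|_1 \le \|x\| \cdot \|y\|$ (which follows from $\|\tr_B X\|_1 \le \|X\|_1$ and $\|ab^\dagger\|_1 = \|a\|\,\|b\|$) to reduce the conditional trace distance to a double sum of products $d_\bl \cdot d_{\bl \oplus v_{r+1} \oplus \mathbf{w}}$, and then apply Cauchy--Schwarz to obtain a bound of the form $\sqrt{\sum_{\bl \in B(P_\sC, P_\sK)} d_\bl^2}$. Here $B(P_\sC, P_\sK)$ is the set of Fourier indices that the code-coset decoder of the \emph{augmented} $(r+1)$-dimensional code $\Span\{v_1, \ldots, v_{r+1}\}$ would fail to uniquely decode to $\mathbf{0}$---equivalently, by Lemmas~\ref{lemma_ecc_indep}--\ref{lemma_ecc_indep_coset}, the $\bl$ for which $\mathbf{0}$ is not strictly closest among the augmented codewords.

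The remaining step, adapting Proposition~4.6 of~\cite{BBBMR06}, links $d_\bl^2$ to the disturbance distribution. Using the Fourier definition of $\ket{\eta_\bl}_\sE$ together with the symmetrization identities~\eqref{symm_probs_c_invb}--\eqref{symm_prob_test_err_invb}, one identifies $d_\bl^2$ (after the appropriate averaging and tracing out of Bob) with the probability $\pr^\sym_\invb\!\left[\bC_\sI = \bl \mid \bi_\sT, \bj_\sT, \bb, \bs\right]$: intuitively, inverting the INFO bases converts Bob's measurement into exactly the Fourier-type measurement of Eve's probe whose outcome distribution is $\{d_\bl^2\}_\bl$. I would then split $B(P_\sC, P_\sK)$ as $\{\bl : |\bl| \ge t\} \cup \{\bl : |\bl| < t \text{ and } \bl \text{ is not uniquely decoded}\}$: the first set contributes at most $\pr^\sym_\invb[|\bC_\sI| \ge t \mid \bi_\sT, \bj_\sT, \bb, \bs]$, while for the second I would swap the $(P_\sC, P_\sK)$-expectation with the $\bl$-sum and invoke Proposition~\ref{prop_ecc_random}, applied to the augmented $(r+1)$-dimensional code, to obtain $2^{n[H_2(t/n) - (n-r-1)/n]}$. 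Combining these two bounds under the square root, together with the factor $2$ from Cauchy--Schwarz, yields~\eqref{boundrho}.

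The main obstacle I anticipate is the bookkeeping of the $(P_\sC, P_\sK)$-averaging, because the bad set $B$ itself depends on the code: to bound $\mathbb{E}_{P_\sC, P_\sK}\!\bigl[\sum_{\bl \in B} d_\bl^2 \cdot \mathbf{1}[|\bl| < t]\bigr]$ one must carefully interchange expectation and sum and apply Proposition~\ref{prop_ecc_random} \emph{separately} to each fixed candidate $\bl$. Obtaining the exponent $(n-r-1)/n$---which reflects the augmented $(r+1)$-dimensional code spanned by the parity-check \emph{and} privacy-amplification rows, rather than the weaker $r/n$ one would get from using only the parity-check rows---is precisely the sharpening that upgrades the BBBMR threshold from $7.56\%$ to the state-of-the-art $11\%$.
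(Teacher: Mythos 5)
Your proposal is correct and follows essentially the same route as the paper's proof: lifting to the purifications, the Fourier/character-sum reduction that leaves only pairs differing by $v_{r+1}$ modulo $V_r$, the split at weight $t$, the application of Proposition~\ref{prop_ecc_random} to the augmented $(r+1)$-dimensional code $\Span\{v_1,\ldots,v_{r+1}\}$ (after interchanging the code-expectation with the sum over $\bl$), and the identification $(d^\sym_{\bl})^2 = \pr^\sym_\invb[\bC_\sI = \bl \mid \bi_\sT,\bj_\sT,\bb,\bs]$. The only cosmetic difference is that you obtain the factor $2\sqrt{\cdot}$ via Cauchy--Schwarz where the paper uses $2xy \le \alpha x^2 + y^2/\alpha$ with an asymmetric assignment and a final optimization over $\alpha$; these are equivalent.
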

\begin{proof}
This proof is mainly based on Appendix~D.2 of~\cite{BBBMR06},
with some important improvements.

We first define the state $\widetilde{\rho}^\sym_k$,
by giving Eve the purification $\ket{\varphi^\sym_{\bi_\sI}}_\sE$
of $\left( \rho^{\bi_\sI} \right)_\sE^\sym$
(that was defined in Equation~\eqref{phi_Eve_purified})
instead of $\left( \rho^{\bi_\sI} \right)_\sE^\sym$ itself:
\begin{equation}\label{rhotildek}
\widetilde{\rho}^\sym_k \triangleq \frac{1}{2^{n-r-1}}\sum_{P_\sC,
P_\sK, \bi_\sI\,
\big|{\scriptsize\begin{matrix}\bi_\sI P_\sC^\transp = \bxi\\
\bi_\sI \cdot v_{r+1} = k\end{matrix}}} \pr^\sym(P_\sC, P_\sK) \cdot
\ket{\varphi^\sym_{\bi_\sI}}_\sE \bra{\varphi^\sym_{\bi_\sI}}_\sE
\otimes \ket{P_\sC, P_\sK}_\sC \bra{P_\sC, P_\sK}_\sC.
\end{equation}
The state $\widetilde{\rho}^\sym_k$
is a lift-up of $\widehat{\rho}^\sym_k$---namely,
$\widehat{\rho}^\sym_k$
is a partial trace of $\widetilde{\rho}^\sym_k$),
(Note that $\widetilde{\rho}^\sym_k$ was defined in Equation~(4.10)
of~\cite{BBBMR06}, but was denoted there as $\rho_k(v_{r+1}, \bxi)$.)
According to~\cite[Theorem~9.2 and page~407]{NCBook}, and using
the fact that $\widehat{\rho}^\sym_k$ is a partial trace of
$\widetilde{\rho}^\sym_k$, we find out that
\begin{equation}\label{hat_tilde_rho}
\frac{1}{2} \tr |\widehat{\rho}^\sym_0 - \widehat{\rho}^\sym_1| \le
\frac{1}{2} \tr |\widetilde{\rho}^\sym_0 - \widetilde{\rho}^\sym_1|.
\end{equation}

Now, for each $\bxi \in \mathbf{F}_2^r$,
we denote $\bi_\bxi \in \mathbf{F}_2^n$
as an arbitrary, fixed word having the syndrome $\bxi$
(namely, satisfying $\bi_\bxi P_\sC^\transp = \bxi$).
For any $\bi_\sI \in \mathbf{F}_2^n$ that has the same syndrome $\bxi$
(namely, that satisfies $\bi_\sI P_\sC^\transp = \bxi$),
it must hold that:
\begin{equation}
(\bi_\sI - \bi_\bxi) P_\sC^\transp = \bxi - \bxi = \mathbf{0}.
\end{equation}
Thus, because $V_r$ is the row space of $P_\sC$,
we have $(\bi_\sI - \bi_\bxi) \cdot \by = 0$
for any $\by \in V_r$.
Based on this observation, Equation~\eqref{phi_as_eta},
and the fact that
each vector $\bl \in \mathbf{F}_2^n$ has a unique representation
$\bl = \bx \oplus \by$ with $\bx \in V_r^c, \by \in V_r$, we find:
\begin{eqnarray}
\ket{\varphi_{\bi_\sI}^\sym}_\sE &=& \sum_{\bl \in \mathbf{F}_2^n}
(-1)^{\bi_\sI \cdot \bl} \ket{\eta_\bl^\sym}_\sE
= \sum_{\bx \in V_r^c, \by \in V_r} (-1)^{\bi_\sI \cdot (\bx \oplus
\by)} \ket{\eta_{\bx \oplus \by}^\sym}_\sE \nonumber \\
&=& \sum_{\bx \in V_r^c} (-1)^{\bi_\sI \cdot \bx} \sum_{\by \in V_r}
(-1)^{\bi_\sI \cdot \by} \ket{\eta_{\bx \oplus \by}^\sym}_\sE
= \sum_{\bx \in V_r^c} (-1)^{\bi_\sI \cdot \bx} \sum_{\by \in V_r}
(-1)^{\bi_\bxi \cdot \by} \ket{\eta_{\bx \oplus \by}^\sym}_\sE
\nonumber \\
&=& \sum_{\bx \in V_r^c}
(-1)^{\bi_\sI \cdot \bx} \ket{\eta'^{\,\sym}_\bx}_\sE,
\label{eq_phi_eta}
\end{eqnarray}
where we define, for any $\bx \in V_r^c$:
\begin{eqnarray}
\ket{\eta'^{\,\sym}_\bx}_\sE &\triangleq& \sum_{\by \in V_r}
(-1)^{\bi_\bxi \cdot \by} \ket{\eta_{\bx \oplus \by}^\sym}_\sE, \\
d'^{\,\sym}_\bx &\triangleq&
\sqrt{\braket{\eta'^{\,\sym}_\bx}{\eta'^{\,\sym}_\bx}_\sE}, \\
\ket{\widehat{\eta}'^{\,\sym}_\bx}_\sE &\triangleq&
\frac{\ket{\eta'^{\,\sym}_\bx}_\sE}{d'^{\,\sym}_\bx}.
\end{eqnarray}
(These quantities depend on $\bxi$,
but they do not depend on the entire $\bi_\sI$.)
According to Proposition~4.3 of~\cite{BBBMR06},
it holds for symmetrized attacks
that $\braket{\eta^\sym_\bl}{\eta^\sym_{\bl'}} = 0$
for any $\bl \ne \bl'$; therefore,
\begin{equation}\label{d'_from_d}
d'^{\,\sym}_\bx
= \sqrt{\braket{\eta'^{\,\sym}_\bx}{\eta'^{\,\sym}_\bx}_\sE}
= \sqrt{\sum_{\by, \by' \in V_r}
(-1)^{\bi_\bxi \cdot (\by \oplus \by')}
\braket{\eta^\sym_{\bx \oplus \by}}{\eta^\sym_{\bx \oplus \by'}}_\sE}
= \sqrt{\sum_{\by \in V_r} (d_{\bx \oplus \by}^\sym)^2},
\end{equation}
where the last equality is according to Equation~\eqref{phi_d}.

We can now use Equation~\eqref{eq_phi_eta}
to compute $\widetilde{\rho}^\sym_k$: (we notice that
the set $\{\bi_\sI \mid \bi_\sI P_\sC^\transp = \bxi\}$ is actually
the code coset $\bi_\bxi + \sC \triangleq
\{\bi_\bxi \oplus \ba \mid \ba \in \sC\}$
corresponding to the syndrome $\bxi$)
\begin{eqnarray}
\widetilde{\rho}^\sym_k &\triangleq& \frac{1}{2^{n-r-1}}\sum_{P_\sC,
P_\sK, \bi_\sI\,
\big|{\scriptsize\begin{matrix}\bi_\sI P_\sC^\transp = \bxi\\
\bi_\sI \cdot v_{r+1} = k\end{matrix}}} \pr^\sym(P_\sC, P_\sK) \cdot
\ket{\varphi^\sym_{\bi_\sI}}_\sE \bra{\varphi^\sym_{\bi_\sI}}_\sE
\otimes \ket{P_\sC, P_\sK}_\sC \bra{P_\sC, P_\sK}_\sC \nonumber \\
&=& \frac{1}{2^{n-r-1}} \sum_{P_\sC, P_\sK,
\ba \in \sC\ \mid\ (\bi_\bxi \oplus \ba) \cdot v_{r+1} = k}
\pr^\sym(P_\sC, P_\sK) \cdot
\ket{\varphi^\sym_{\bi_\bxi \oplus \ba}}_\sE
\bra{\varphi^\sym_{\bi_\bxi \oplus \ba}}_\sE
\otimes \ket{P_\sC, P_\sK}_\sC \bra{P_\sC, P_\sK}_\sC \nonumber \\
&=& \frac{1}{2^{n-r-1}} \sum_{P_\sC, P_\sK,
\ba \in \sC\ \mid\ (\bi_\bxi \oplus \ba) \cdot v_{r+1} = k}
\pr^\sym(P_\sC, P_\sK) \nonumber \\
&\cdot& \left[ \sum_{\bx, \bx' \in V_r^c}
(-1)^{(\bi_\bxi \oplus \ba) \cdot (\bx \oplus \bx')}
\ket{\eta'^{\,\sym}_\bx}_\sE \bra{\eta'^{\,\sym}_{\bx'}}_\sE \right]
\otimes \ket{P_\sC, P_\sK}_\sC \bra{P_\sC, P_\sK}_\sC.
\end{eqnarray}
Therefore, the difference between $\widetilde{\rho}^\sym_0$ and
$\widetilde{\rho}^\sym_1$ is:
\begin{eqnarray}
\widetilde{\rho}^\sym_0 - \widetilde{\rho}^\sym_1
&=& \sum_{k \in \{0,1\}} (-1)^k \cdot \widetilde{\rho}^\sym_k
\nonumber \\
&=& \frac{1}{2^{n-r-1}} \sum_{P_\sC, P_\sK, \ba \in \sC}
\pr^\sym(P_\sC, P_\sK) \cdot
(-1)^{(\bi_\bxi \oplus \ba) \cdot v_{r+1}} \nonumber \\
&\cdot& \sum_{\bx, \bx' \in V_r^c}
(-1)^{(\bi_\bxi \oplus \ba) \cdot (\bx \oplus \bx')}
\ket{\eta'^{\,\sym}_\bx}_\sE \bra{\eta'^{\,\sym}_{\bx'}}_\sE
\otimes \ket{P_\sC, P_\sK}_\sC \bra{P_\sC, P_\sK}_\sC \nonumber \\
&=& \frac{1}{2^{n-r-1}} \sum_{P_\sC, P_\sK, \bx, \bx' \in V_r^c}
\pr^\sym(P_\sC, P_\sK)  \nonumber \\
&\cdot& \left( \sum_{\ba \in \sC} (-1)^{(\bi_\bxi \oplus \ba) \cdot
(\bx \oplus \bx' \oplus v_{r+1})} \right)
\ket{\eta'^{\,\sym}_\bx}_\sE \bra{\eta'^{\,\sym}_{\bx'}}_\sE
\otimes \ket{P_\sC, P_\sK}_\sC \bra{P_\sC, P_\sK}_\sC \nonumber \\
&=& \frac{1}{2^{n-r-1}} \sum_{P_\sC, P_\sK, \bx, \bx' \in V_r^c}
\pr^\sym(P_\sC, P_\sK) \cdot
(-1)^{\bi_\bxi \cdot (\bx \oplus \bx' \oplus v_{r+1})} \nonumber \\
&\cdot& \left( \sum_{\ba \in \sC} (-1)^{\ba \cdot
(\bx \oplus \bx' \oplus v_{r+1})} \right)
\ket{\eta'^{\,\sym}_\bx}_\sE \bra{\eta'^{\,\sym}_{\bx'}}_\sE
\otimes \ket{P_\sC, P_\sK}_\sC \bra{P_\sC, P_\sK}_\sC \nonumber \\
&=& \frac{1}{2^{n-r-1}} \sum_{P_\sC, P_\sK, \bx, \bx' \in V_r^c}
\pr^\sym(P_\sC, P_\sK) \cdot
(-1)^{\bi_\bxi \cdot (\bx \oplus \bx' \oplus v_{r+1})}
\cdot \left( \sum_{\ba \in \sC} (-1)^{\ba \cdot
(\bx \oplus \bx' \oplus v_{r+1})} \right) \nonumber \\
&\cdot& d'^{\,\sym}_\bx d'^{\,\sym}_{\bx'}
\ket{\widehat{\eta}'^{\,\sym}_\bx}_\sE
\bra{\widehat{\eta}'^{\,\sym}_{\bx'}}_\sE
\otimes \ket{P_\sC, P_\sK}_\sC \bra{P_\sC, P_\sK}_\sC.
\end{eqnarray}
According to Lemma~D.1 of~\cite{BBBMR06},
assuming that $\sC$ is a linear code over $\mathbf{F}_2^n$,
for any $\bl_0 \in \mathbf{F}_2^n \setminus \sC^\perp$
(where $\sC^\perp$ is the dual code of $\sC$)
it holds that $\sum_{\ba \in \sC} (-1)^{\ba \cdot \bl_0} = 0$.

We notice that in our case $\sC^\perp = V_r \triangleq
\Span\{v_1, \ldots, v_r\}$, because the dual code $\sC^\perp$
is the row space of $P_\sC$
(see Subsubsection~\ref{subsubsec_prelims_ecc}).
The sum $\sum_{\ba \in \sC} (-1)^{\ba \cdot
(\bx \oplus \bx' \oplus v_{r+1})}$ above is thus non-zero
if and only if $\bx \oplus \bx' \oplus v_{r+1} \in V_r$,
and because it always holds that
$\bx \oplus \bx' \oplus v_{r+1} \in V_r^c$
(since $\bx, \bx', v_{r+1} \in V_r^c$), the sum is non-zero
if and only if $\bx \oplus \bx' \oplus v_{r+1} = \mathbf{0}$
(because $\mathbf{F}_2^n = V_r^c \oplus V_r$, so according
to the properties of direct sum, $V_r^c \cap V_r = \{\mathbf{0}\}$).
Therefore, the sum is non-zero
if and only if $\bx' = \bx \oplus v_{r+1}$ (in which case the sum is
$\sum_{\ba \in \sC} (-1)^{\ba \cdot
(\bx \oplus \bx' \oplus v_{r+1})}
= \sum_{\ba \in \sC} (-1)^{\ba \cdot \mathbf{0}}
= \sum_{\ba \in \sC} (-1)^0 = |\sC| = 2^k = 2^{n-r}$), so:
\begin{equation}
\widetilde{\rho}^\sym_0 - \widetilde{\rho}^\sym_1 =
2 \sum_{P_\sC, P_\sK, \bx \in V_r^c} \pr^\sym(P_\sC, P_\sK) \cdot
d'^{\,\sym}_\bx d'^{\,\sym}_{\bx \oplus v_{r+1}}
\ket{\widehat{\eta}'^{\,\sym}_\bx}_\sE
\bra{\widehat{\eta}'^{\,\sym}_{\bx \oplus v_{r+1}}}_\sE
\otimes \ket{P_\sC, P_\sK}_\sC \bra{P_\sC, P_\sK}_\sC.
\end{equation}

Now, according to the definitions of
$V_r^c \triangleq \Span \lbrace v_{r+1}, \ldots, v_n \rbrace$ and
$V_{r+1}^c \triangleq \Span \lbrace v_{r+2}, \ldots, v_n \rbrace$ and
the linear independence of $v_1, \ldots, v_n$, we get the fact
\begin{equation}\label{eq_vr_plus_1}
V_r^c = V_{r+1}^c \cup \{\bx \oplus v_{r+1} \mid \bx \in V_{r+1}^c\}
\end{equation}
(a {\em disjoint} union), from which we obtain the following result:
\begin{eqnarray}
\widetilde{\rho}^\sym_0 - \widetilde{\rho}^\sym_1 &=&
2 \sum_{P_\sC, P_\sK, \bx \in V_{r+1}^c} \pr^\sym(P_\sC, P_\sK)
\cdot d'^{\,\sym}_\bx d'^{\,\sym}_{\bx \oplus v_{r+1}} \nonumber \\
&\cdot& \left[ \ket{\widehat{\eta}'^{\,\sym}_\bx}_\sE
\bra{\widehat{\eta}'^{\,\sym}_{\bx \oplus v_{r+1}}}_\sE +
\ket{\widehat{\eta}'^{\,\sym}_{\bx \oplus v_{r+1}}}_\sE
\bra{\widehat{\eta}'^{\,\sym}_\bx}_\sE \right] \nonumber \\
&\otimes& \ket{P_\sC, P_\sK}_\sC \bra{P_\sC, P_\sK}_\sC.
\end{eqnarray}

Using Equation~\eqref{hat_tilde_rho}, we deduce:
\begin{eqnarray}
\frac{1}{2} \tr |\widehat{\rho}^\sym_0 - \widehat{\rho}^\sym_1| &\le&
\frac{1}{2} \tr |\widetilde{\rho}^\sym_0 - \widetilde{\rho}^\sym_1|
\nonumber \\
&=& \tr \left| \sum_{P_\sC, P_\sK, \bx \in V_{r+1}^c}
\pr^\sym(P_\sC, P_\sK) \cdot
d'^{\,\sym}_\bx d'^{\,\sym}_{\bx \oplus v_{r+1}} \right. \nonumber \\
&\cdot& \left. \left[ \ket{\widehat{\eta}'^{\,\sym}_\bx}_\sE
\bra{\widehat{\eta}'^{\,\sym}_{\bx \oplus v_{r+1}}}_\sE +
\ket{\widehat{\eta}'^{\,\sym}_{\bx \oplus v_{r+1}}}_\sE
\bra{\widehat{\eta}'^{\,\sym}_\bx}_\sE \right]
\otimes \ket{P_\sC, P_\sK}_\sC \bra{P_\sC, P_\sK}_\sC
\right| \nonumber \\
&\le& \sum_{P_\sC, P_\sK, \bx \in V_{r+1}^c} \pr^\sym(P_\sC, P_\sK)
\cdot d'^{\,\sym}_\bx d'^{\,\sym}_{\bx \oplus v_{r+1}} \nonumber \\
&\cdot& \tr \left| \left[ \ket{\widehat{\eta}'^{\,\sym}_\bx}_\sE
\bra{\widehat{\eta}'^{\,\sym}_{\bx \oplus v_{r+1}}}_\sE +
\ket{\widehat{\eta}'^{\,\sym}_{\bx \oplus v_{r+1}}}_\sE
\bra{\widehat{\eta}'^{\,\sym}_\bx} \right]_\sE
\otimes \ket{P_\sC, P_\sK}_\sC \bra{P_\sC, P_\sK}_\sC \right|
\nonumber \\
&\le& \sum_{P_\sC, P_\sK, \bx \in V_{r+1}^c} \pr^\sym(P_\sC, P_\sK)
\cdot 2 d'^{\,\sym}_\bx d'^{\,\sym}_{\bx \oplus v_{r+1}}.
\end{eqnarray}
We now use the general inequality $2xy \le \alpha x^2 +
\frac{y^2}{\alpha}$ for any $\alpha > 0$ and $x,y \in \mathbb{R}$
(derived in Appendix~D.2 of~\cite{BBBMR06}) to get:
\begin{eqnarray}
\frac{1}{2} \tr |\widehat{\rho}^\sym_0 - \widehat{\rho}^\sym_1|
&\le& \sum_{P_\sC, P_\sK, \bx \in V_{r+1}^c} \pr^\sym(P_\sC, P_\sK)
\cdot 2 d'^{\,\sym}_\bx d'^{\,\sym}_{\bx \oplus v_{r+1}} \nonumber \\
&=& \sum_{P_\sC, P_\sK, \bx \in V_{r+1}^c\ \mid\ d_\sH(\bx, V_r) \ge t}
\pr^\sym(P_\sC, P_\sK) \cdot
2 d'^{\,\sym}_\bx d'^{\,\sym}_{\bx \oplus v_{r+1}} \nonumber \\
&+& \sum_{P_\sC, P_\sK, \bx \in V_{r+1}^c\ \mid\ d_\sH(\bx, V_r) < t}
\pr^\sym(P_\sC, P_\sK) \cdot
2 d'^{\,\sym}_\bx d'^{\,\sym}_{\bx \oplus v_{r+1}} \nonumber \\
&\le& \sum_{P_\sC, P_\sK, \bx \in V_{r+1}^c\ \mid\ d_\sH(\bx, V_r) \ge
t} \pr^\sym(P_\sC, P_\sK)
\cdot \left[ \alpha (d'^{\,\sym}_{\bx \oplus v_{r+1}})^2
+ \frac{(d'^{\,\sym}_\bx)^2}{\alpha} \right] \nonumber \\
&+& \sum_{P_\sC, P_\sK, \bx \in V_{r+1}^c\ \mid\ d_\sH(\bx, V_r) < t}
\pr^\sym(P_\sC, P_\sK) \cdot \left[ \alpha (d'^{\,\sym}_\bx)^2
+ \frac{(d'^{\,\sym}_{\bx \oplus v_{r+1}})^2}{\alpha} \right]
\nonumber \\
&=& \alpha
\sum_{P_\sC, P_\sK, \bx \in V_{r+1}^c\ \mid\ d_\sH(\bx, V_r) \ge t}
\pr^\sym(P_\sC, P_\sK)
\cdot (d'^{\,\sym}_{\bx \oplus v_{r+1}})^2 \nonumber \\
&+& \alpha
\sum_{P_\sC, P_\sK, \bx \in V_{r+1}^c\ \mid\ d_\sH(\bx, V_r) < t}
\pr^\sym(P_\sC, P_\sK) \cdot (d'^{\,\sym}_\bx)^2 \nonumber \\
&+& \frac{1}{\alpha}
\sum_{P_\sC, P_\sK, \bx \in V_{r+1}^c\ \mid\ d_\sH(\bx, V_r) \ge t}
\pr^\sym(P_\sC, P_\sK) \cdot (d'^{\,\sym}_\bx)^2 \nonumber \\
&+& \frac{1}{\alpha}
\sum_{P_\sC, P_\sK, \bx \in V_{r+1}^c\ \mid\ d_\sH(\bx, V_r) < t}
\pr^\sym(P_\sC, P_\sK)
\cdot (d'^{\,\sym}_{\bx \oplus v_{r+1}})^2.\label{eq_inf_vs_dist_main}
\end{eqnarray}
The first two sums in Equation~\eqref{eq_inf_vs_dist_main}
are bounded as follows:
\begin{eqnarray}
&&\alpha
\sum_{P_\sC, P_\sK, \bx \in V_{r+1}^c\ \mid\ d_\sH(\bx, V_r) \ge t}
\pr^\sym(P_\sC, P_\sK)
\cdot (d'^{\,\sym}_{\bx \oplus v_{r+1}})^2 \nonumber \\
&+& \alpha
\sum_{P_\sC, P_\sK, \bx \in V_{r+1}^c\ \mid\ d_\sH(\bx, V_r) < t}
\pr^\sym(P_\sC, P_\sK)
\cdot (d'^{\,\sym}_\bx)^2 \nonumber \\
&\le& \alpha \sum_{P_\sC, P_\sK} \pr^\sym(P_\sC, P_\sK) \cdot
\left[ \sum_{\bx \in V_{r+1}^c} (d'^{\,\sym}_{\bx \oplus v_{r+1}})^2
+ \sum_{\bx \in V_{r+1}^c} (d'^{\,\sym}_\bx)^2 \right] \nonumber \\
&=& \alpha \sum_{P_\sC, P_\sK, \bx \in V_r^c} \pr^\sym(P_\sC, P_\sK)
\cdot (d'^{\,\sym}_\bx)^2,\label{eq_inf_vs_dist_1}
\end{eqnarray}
using the disjoint union
$V_r^c = V_{r+1}^c \cup \{\bx \oplus v_{r+1} \mid \bx \in V_{r+1}^c\}$
from Equation~\eqref{eq_vr_plus_1}.

In addition, the last two sums in Equation~\eqref{eq_inf_vs_dist_main}
are bounded as follows:
\begin{eqnarray}
&& \frac{1}{\alpha}
\sum_{P_\sC, P_\sK, \bx \in V_{r+1}^c\ \mid\ d_\sH(\bx, V_r) \ge t}
\pr^\sym(P_\sC, P_\sK) \cdot (d'^{\,\sym}_\bx)^2 \nonumber \\
&+& \frac{1}{\alpha}
\sum_{P_\sC, P_\sK, \bx \in V_{r+1}^c\ \mid\ d_\sH(\bx, V_r) < t}
\pr^\sym(P_\sC, P_\sK)
\cdot (d'^{\,\sym}_{\bx \oplus v_{r+1}})^2 \nonumber \\
&\le& \frac{1}{\alpha}
\sum_{P_\sC, P_\sK, \bx \in V_{r+1}^c\ \mid\ d_\sH(\bx, V_r) \ge t}
\pr^\sym(P_\sC, P_\sK) \cdot (d'^{\,\sym}_\bx)^2 \nonumber \\
&+& \frac{1}{\alpha}
\sum_{P_\sC, P_\sK, \bx \in V_{r+1}^c\ \mid\ d_\sH(\bx \oplus
v_{r+1}, V_r) \ge t} \pr^\sym(P_\sC, P_\sK)
\cdot (d'^{\,\sym}_{\bx \oplus v_{r+1}})^2 \nonumber \\
&+& \frac{1}{\alpha}
\sum_{P_\sC, P_\sK, \bx \in V_{r+1}^c\ \mid\ (d_\sH(\bx, V_r) < t)
\wedge (d_\sH(\bx \oplus v_{r+1}, V_r) < t)}
\pr^\sym(P_\sC, P_\sK) \cdot (d'^{\,\sym}_{\bx \oplus v_{r+1}})^2.
\label{eq_inf_vs_dist_2}
\end{eqnarray}
Using, again, the disjoint union
$V_r^c = V_{r+1}^c \cup \{\bx \oplus v_{r+1} \mid \bx \in V_{r+1}^c\}$
from Equation~\eqref{eq_vr_plus_1}
(for the first two sums in Equation~\eqref{eq_inf_vs_dist_2}),
and using the fact that
$\{\bx \oplus v_{r+1} \mid \bx \in V_{r+1}^c\} \subseteq V_r^c$
(for the third sum in Equation~\eqref{eq_inf_vs_dist_2}), we get:
\begin{eqnarray}
&\le& \frac{1}{\alpha}
\sum_{P_\sC, P_\sK, \bx \in V_r^c\ \mid\ d_\sH(\bx, V_r) \ge t}
\pr^\sym(P_\sC, P_\sK) \cdot (d'^{\,\sym}_\bx)^2 \nonumber \\
&+& \frac{1}{\alpha}
\sum_{P_\sC, P_\sK, \bx \in V_r^c\ \mid\ (d_\sH(\bx \oplus v_{r+1},
V_r) < t) \wedge (d_\sH(\bx, V_r) < t)}
\pr^\sym(P_\sC, P_\sK) \cdot (d'^{\,\sym}_\bx)^2.
\label{eq_inf_vs_dist_3}
\end{eqnarray}
Substituting Equations~\eqref{eq_inf_vs_dist_1}
and~\eqref{eq_inf_vs_dist_3} into Equation~\eqref{eq_inf_vs_dist_main},
and using the identity $d'^{\,\sym}_\bx
= \sqrt{\sum_{\by \in V_r} (d_{\bx \oplus \by}^\sym)^2}$
for any $\bx \in V_r^c$ from Equation~\eqref{d'_from_d}, we find:
\begin{eqnarray}
\frac{1}{2} \tr |\widehat{\rho}^\sym_0 - \widehat{\rho}^\sym_1|
&\le& \alpha \sum_{P_\sC, P_\sK, \bx \in V_r^c} \pr^\sym(P_\sC, P_\sK)
\cdot (d'^{\,\sym}_\bx)^2 \nonumber \\
&+& \frac{1}{\alpha}
\sum_{P_\sC, P_\sK, \bx \in V_r^c\ \mid\ d_\sH(\bx, V_r) \ge t}
\pr^\sym(P_\sC, P_\sK) \cdot (d'^{\,\sym}_\bx)^2 \nonumber \\
&+& \frac{1}{\alpha}
\sum_{P_\sC, P_\sK, \bx \in V_r^c\ \mid\ (d_\sH(\bx \oplus v_{r+1},
V_r) < t) \wedge (d_\sH(\bx, V_r) < t)}
\pr^\sym(P_\sC, P_\sK) \cdot (d'^{\,\sym}_\bx)^2 \nonumber \\
&=& \alpha \sum_{P_\sC, P_\sK, \bx \in V_r^c, \by \in V_r}
\pr^\sym(P_\sC, P_\sK) \cdot (d^\sym_{\bx \oplus \by})^2 \nonumber \\
&+& \frac{1}{\alpha} \sum_{P_\sC, P_\sK, \bx \in V_r^c, \by \in
V_r\ \mid\ d_\sH(\bx, V_r) \ge t}
\pr^\sym(P_\sC, P_\sK) \cdot (d^\sym_{\bx \oplus \by})^2 \nonumber \\
&+& \frac{1}{\alpha}
\sum_{P_\sC, P_\sK, \bx \in V_r^c, \by \in
V_r\ \mid\ (d_\sH(\bx \oplus v_{r+1}, V_r) < t) \wedge
(d_\sH(\bx, V_r) < t)} \pr^\sym(P_\sC, P_\sK) \nonumber \\
&\cdot& (d^\sym_{\bx \oplus \by})^2.\label{eq_inf_vs_dist_final}
\end{eqnarray}
We can now use the fact that
$\mathbf{F}_2^n = V_r^c \oplus V_r$ (namely,
each vector $\bl \in \mathbf{F}_2^n$ has a unique representation
$\bl = \bx \oplus \by$ with $\bx \in V_r^c, \by \in V_r$)
and the fact that $d^\sym_\bl$
is independent of $P_\sC,P_\sK$\footnote{This is correct
because $d^\sym_\bl$ depends on $\ket{\eta^\sym_\bl}_\sE,
\{\ket{\varphi^\sym_{\bi_\sI}}_\sE\}_{\bi_\sI \in \mathbf{F}_2^n},
\{\ket{E^\sym_{\bi_\sI,\bj_\sI}}_{\bb, \bs}\}_{\bi_\sI,
\bj_\sI \in \mathbf{F}_2^n}$,
which are all independent of $P_\sC,P_\sK$, because Alice
chooses $P_\sC,P_\sK$ uniformly at random,
and she does not use them and keeps them secret
until Eve's attack is over.}
for computing the three sums in Equation~\eqref{eq_inf_vs_dist_final}:
\begin{enumerate}
\item For the first sum in Equation~\eqref{eq_inf_vs_dist_final},
we find:
\begin{eqnarray}
&&\alpha \sum_{P_\sC, P_\sK, \bx \in V_r^c, \by \in V_r}
\pr^\sym(P_\sC, P_\sK) \cdot (d^\sym_{\bx \oplus \by})^2 \nonumber \\
&=& \alpha \sum_{P_\sC, P_\sK, \bl \in \mathbf{F}_2^n}
\pr^\sym(P_\sC, P_\sK) \cdot (d^\sym_\bl)^2 \nonumber \\
&=& \alpha \sum_{\bl \in \mathbf{F}_2^n} \left[ \sum_{P_\sC, P_\sK}
\pr^\sym(P_\sC, P_\sK) \right] \cdot (d^\sym_\bl)^2
= \alpha \sum_{\bl \in \mathbf{F}_2^n} (d^\sym_\bl)^2.
\label{eq_inf_vs_dist_final_1a}
\end{eqnarray}
According to Proposition~4.3 of~\cite{BBBMR06},
it holds for symmetrized attacks
that $\braket{\eta^\sym_\bl}{\eta^\sym_{\bl'}} = 0$
for any $\bl \ne \bl'$; therefore,
according to Equations~\eqref{norm_eta} and~\eqref{phi_as_eta},
the normalized state $\ket{\varphi^\sym_{\bi_\sI}}_\sE$ satisfies:
\begin{eqnarray}
1 &=& \braket{\varphi^\sym_{\bi_\sI}}{\varphi^\sym_{\bi_\sI}}_\sE
= \sum_{\bl, \bl' \in \mathbf{F}_2^n}
(-1)^{\bi_\sI \cdot (\bl \oplus \bl')}
\braket{\eta^\sym_\bl}{\eta^\sym_{\bl'}}_\sE \nonumber \\
&=& \sum_{\bl \in \mathbf{F}_2^n}
\braket{\eta^\sym_\bl}{\eta^\sym_\bl}_\sE
= \sum_{\bl \in \mathbf{F}_2^n} (d^\sym_\bl)^2.
\label{eq_inf_vs_dist_final_1b}
\end{eqnarray}
Substituting Equation~\eqref{eq_inf_vs_dist_final_1b} into
Equation~\eqref{eq_inf_vs_dist_final_1a},
we find the first sum in Equation~\eqref{eq_inf_vs_dist_final} to be:
\begin{equation}\label{eq_inf_vs_dist_final_1}
\alpha \sum_{P_\sC, P_\sK, \bx \in V_r^c, \by \in V_r}
\pr^\sym(P_\sC, P_\sK) \cdot (d^\sym_{\bx \oplus \by})^2
= \alpha \sum_{\bl \in \mathbf{F}_2^n} (d^\sym_\bl)^2
= \alpha.
\end{equation}
\item For the second sum in Equation~\eqref{eq_inf_vs_dist_final},
we find:
\begin{eqnarray}
&&\frac{1}{\alpha} \sum_{P_\sC, P_\sK, \bx \in V_r^c, \by \in
V_r\ \mid\ d_\sH(\bx, V_r) \ge t}
\pr^\sym(P_\sC, P_\sK) \cdot (d^\sym_{\bx \oplus \by})^2 \nonumber \\
&\le& \frac{1}{\alpha} \sum_{P_\sC, P_\sK, \bx \in V_r^c, \by \in
V_r\ \mid\ d_\sH(\bx, \by) \ge t}
\pr^\sym(P_\sC, P_\sK) \cdot (d^\sym_{\bx \oplus \by})^2 \nonumber \\
&=& \frac{1}{\alpha} \sum_{P_\sC, P_\sK,
\bl \in \mathbf{F}_2^n\ :\ |\bl| \ge t}
\pr^\sym(P_\sC, P_\sK) \cdot (d^\sym_\bl)^2 \nonumber \\
&=& \frac{1}{\alpha} \sum_{|\bl| \ge t}
\left[ \sum_{P_\sC, P_\sK} \pr^\sym(P_\sC, P_\sK) \right]
\cdot (d^\sym_\bl)^2 \nonumber \\
&=& \frac{1}{\alpha} \sum_{|\bl| \ge t} (d^\sym_\bl)^2
= \frac{1}{\alpha} \sum_{|\bc_\sI| \ge t}
(d^\sym_{\bc_\sI})^2.\label{eq_inf_vs_dist_final_2}
\end{eqnarray}
\item For the third sum in Equation~\eqref{eq_inf_vs_dist_final}, we
find: (based on the fact that for any $\bx \in V_r^c, \by \in V_r$ and
the corresponding $\bl \triangleq \bx \oplus \by \in \mathbf{F}_2^n$,
it holds that $d_\sH(\bx, V_r) = d_\sH(\bl, V_r)$ and
$d_\sH(\bx \oplus v_{r+1}, V_r) = d_\sH(\bl \oplus v_{r+1}, V_r)$)
\begin{eqnarray}
&&\frac{1}{\alpha} \sum_{P_\sC, P_\sK, \bx \in V_r^c, \by \in
V_r\ \mid\ (d_\sH(\bx \oplus v_{r+1}, V_r) < t) \wedge
(d_\sH(\bx, V_r) < t)} \pr^\sym(P_\sC, P_\sK)
\cdot (d^\sym_{\bx \oplus \by})^2 \nonumber \\
&=& \frac{1}{\alpha} \sum_{P_\sC, P_\sK, \bl \in
\mathbf{F}_2^n\ \mid\ (d_\sH(\bl \oplus v_{r+1}, V_r) < t) \wedge
(d_\sH(\bl, V_r) < t)}
\pr^\sym(P_\sC, P_\sK) \cdot (d^\sym_\bl)^2 \nonumber \\
&=& \frac{1}{\alpha} \sum_{\bl \in \mathbf{F}_2^n}
\left[ \sum_{P_\sC, P_\sK\ \mid\ (d_\sH(\bl \oplus v_{r+1}, V_r) < t)
\wedge (d_\sH(\bl, V_r) < t)} \pr^\sym(P_\sC, P_\sK) \right]
\cdot (d^\sym_\bl)^2 \nonumber \\
&=& \frac{1}{\alpha} \sum_{\bl \in \mathbf{F}_2^n}
\pr^\sym\left[(d_\sH(\bl \oplus v_{r+1}, V_r) < t) \wedge
(d_\sH(\bl, V_r) < t)\ \mid \ \bl\right]
\cdot (d^\sym_\bl)^2 \nonumber \\
&\le& \frac{1}{\alpha} \sum_{\bl \in \mathbf{F}_2^n}
\pr^\sym\left[d_\sH(\bl \oplus v_{r+1}, V_r) < t\ \mid \ \bl\right]
\cdot (d^\sym_\bl)^2 \nonumber \\
&=& \frac{1}{\alpha} \sum_{\bl \in \mathbf{F}_2^n}
\pr^\sym\left[\exists \by \in V_r :
|\bl \oplus v_{r+1} \oplus \by| < t\ \mid \ \bl\right]
\cdot (d^\sym_\bl)^2,\label{eq_inf_vs_dist_final_3a}
\end{eqnarray}
where the probability is actually taken over the randomly-chosen
matrices $P_\sC \in \mathbf{F}_2^{r \times n},
P_\sK \triangleq \left(v_{r+1}\right) \in \mathbf{F}_2^{1 \times n}$,
which constitute together an $(r+1) \times n$ generator matrix
for an $(r+1)$-dimensional random error-correcting code
over $\mathbf{F}_2^n$. This random code, in fact,
is $\sC^\perp_\ext \triangleq \Span\{\sC^\perp, v_{r+1}\}$,
where $\sC^\perp = V_r \triangleq
\Span\{v_1, \ldots, v_r\}$ is the $r$-dimensional dual code
(see Subsubsection~\ref{subsubsec_prelims_ecc})
of the error-correcting code $\sC$ belonging to the parity-check
matrix $P_\sC$, and $v_{r+1}$ is the only row of the matrix $P_\sK$.
Therefore:
\begin{equation}
\sC^\perp_\ext \triangleq \Span\{\sC^\perp, v_{r+1}\}
= \Span\{V_r, v_{r+1}\} = \Span\{v_1, \ldots, v_r, v_{r+1}\}.
\end{equation}
Now, applying Proposition~\ref{prop_ecc_random}
to this $(r+1)$-dimensional random code $\sC^\perp_\ext = V_{r+1}$
(the parameters for the
Proposition are our $n$, $k = r + 1$, our $0 \le t \le \frac{n}{2}$,
and our $\bl \in \mathbf{F}_2^n$),
we find that the probability that there exists a word
$\bz \triangleq \bl \oplus v_{r+1} \oplus \by \in \bl + \sC^\perp_\ext$
which satisfies $\bz \ne \bl$
(because $\{V_r, v_{r+1}\}$ are linearly independent,
so $v_{r+1} \in V_r^c \setminus \{\mathbf{0}\}$)
and $|\bz| \triangleq |\bl \oplus v_{r+1} \oplus \by| \le t$
is bounded by:
\begin{eqnarray}
\pr^\sym\left[\exists \by \in V_r :
|\bl \oplus v_{r+1} \oplus \by| < t\ \mid \ \bl\right] &\le&
\pr^\sym \left[ \exists \bz \in \bl + \sC^\perp_\ext
: \bz \ne \bl, |\bz| \le t\ \mid \ \bl \right] \nonumber \\
&\le& 2^{n[H_2(t/n) - (n-r-1)/n]}.\label{eq_inf_vs_dist_final_3b}
\end{eqnarray}
Substituting Equations~\eqref{eq_inf_vs_dist_final_3b}
and~\eqref{eq_inf_vs_dist_final_1b}
into Equation~\eqref{eq_inf_vs_dist_final_3a},
we find the third sum in Equation~\eqref{eq_inf_vs_dist_final} to be:
\begin{eqnarray}
&&\frac{1}{\alpha} \sum_{P_\sC, P_\sK, \bx \in V_r^c, \by \in
V_r\ \mid\ (d_\sH(\bx \oplus v_{r+1}, V_r) < t) \wedge
(d_\sH(\bx, V_r) < t)} \pr^\sym(P_\sC, P_\sK)
\cdot (d^\sym_{\bx \oplus \by})^2 \nonumber \\
&\le& \frac{1}{\alpha} \sum_{\bl \in \mathbf{F}_2^n}
\pr^\sym\left[\exists \by \in V_r :
|\bl \oplus v_{r+1} \oplus \by| < t\ \mid \ \bl\right]
\cdot (d^\sym_\bl)^2 \nonumber \\
&\le& \frac{1}{\alpha} \sum_{\bl \in \mathbf{F}_2^n}
2^{n[H_2(t/n) - (n-r-1)/n]} \cdot (d^\sym_\bl)^2 \nonumber \\
&=& \frac{1}{\alpha} \cdot 2^{n[H_2(t/n) - (n-r-1)/n]}
\sum_{\bl \in \mathbf{F}_2^n} (d^\sym_\bl)^2
= \frac{1}{\alpha} \cdot 2^{n[H_2(t/n) - (n-r-1)/n]}.
\label{eq_inf_vs_dist_final_3}
\end{eqnarray}
\end{enumerate}
Substituting all three Equations~\eqref{eq_inf_vs_dist_final_1},
\eqref{eq_inf_vs_dist_final_2}, and~\eqref{eq_inf_vs_dist_final_3}
into Equation~\eqref{eq_inf_vs_dist_final}, we find:
\begin{equation}
\frac{1}{2} \tr |\widehat{\rho}^\sym_0 - \widehat{\rho}^\sym_1|
\le \alpha
+ \frac{1}{\alpha} \cdot \sum_{|\bc_\sI| \ge t} (d^\sym_{\bc_\sI})^2
+ \frac{1}{\alpha} \cdot 2^{n[H_2(t/n) - (n-r-1)/n]}.
\end{equation}
Choosing $\alpha = \sqrt{\sum_{|\bc_\sI| \ge t}
(d^\sym_{\bc_\sI})^2 + 2^{n[H_2(t/n) - (n-r-1)/n]}}$, we obtain:
\begin{equation}
\frac{1}{2} \tr |\widehat{\rho}^\sym_0 - \widehat{\rho}^\sym_1| \le
2 \sqrt{\sum_{|\bc_\sI| \ge t}
(d^\sym_{\bc_\sI})^2 + 2^{n[H_2(t/n) - (n-r-1)/n]}}.
\end{equation}
According to Lemma~4.4 of~\cite{BBBMR06}, for any
$\bc_\sI \in \mathbf{F}_2^n$ it holds that:
\begin{equation}
(d^\sym_{\bc_\sI})^2 = \pr^\sym_\invb\left[\bC_\sI =
\bc_\sI\ \mid\ \bi_\sT, \bj_\sT, \bb, \bs \right].
\end{equation}
Therefore, we finally get:
\begin{equation}\label{boundrho2}
\frac{1}{2} \tr |\widehat{\rho}^\sym_0 - \widehat{\rho}^\sym_1|
\le 2 \sqrt{\pr^\sym_\invb\left[|\bC_\sI| \ge t
\ \mid\ \bi_\sT, \bj_\sT, \bb, \bs \right]
+ 2^{n[H_2(t/n) - (n-r-1)/n]}}.
\end{equation}
\end{proof}

\subsection{Bounding the Differences Between Eve's States}
So far, we have discussed a $1$-bit key.
We will now discuss a general $m$-bit key $\bk$.
We define $\widehat{\rho}^\sym_\bk$ to be the state of Eve
corresponding to the final key $\bk$, given that she knows $\bxi$:
\begin{equation}\label{rhohatk_multi}
\widehat{\rho}^\sym_\bk \triangleq \frac{1}{2^{n-r-m}}\sum_{P_\sC,
P_\sK,\bi_\sI\,\big|{\scriptsize\begin{matrix}\bi_\sI P_\sC^\transp =
\bxi\\ \bi_\sI P_\sK^\transp = \bk\end{matrix}}}
\pr^\sym(P_\sC, P_\sK) \cdot
\left( \rho^{\bi_\sI} \right)_\sE^\sym
\otimes \ket{P_\sC, P_\sK}_\sC \bra{P_\sC, P_\sK}_\sC.
\end{equation}
We note (for use in Subsection~\ref{sec_full_composability})
that if we substitute $(\rho^{\bi_\sI})^\sym$
from Equation~\eqref{rho_Eve}, we get
\begin{eqnarray}
\widehat{\rho}^\sym_\bk &=& \frac{1}{2^{n-r-m}}\sum_{P_\sC,P_\sK,
\bi_\sI,\bj_\sI\,
\big|{\scriptsize\begin{matrix}\bi_\sI P_\sC^\transp =
\bxi\\ \bi_\sI P_\sK^\transp = \bk\end{matrix}}}
\pr^\sym(P_\sC, P_\sK) \cdot
\pr^\sym(\bj_\sI \mid \bi_\sI, \bi_\sT, \bj_\sT, \bb, \bs) \nonumber \\
&\cdot& \left( \rho_{\bi_\sI,\bj_\sI}^{\bb, \bs} \right)_\sE^\sym
\otimes \ket{P_\sC, P_\sK}_\sC \bra{P_\sC, P_\sK}_\sC.
\label{rhohatk_multi2}
\end{eqnarray}

\begin{proposition}\label{probbound}
For any two keys $\bk,\bk'$ of $m$ bits,
any symmetrized attack,
and any integer $0 \le t \le \frac{n}{2}$,
\begin{equation}
\frac{1}{2} \tr |\widehat{\rho}^\sym_{\bk} -
\widehat{\rho}^\sym_{\bk'}|
\le 2m \sqrt{\pr^\sym_\invb\left[|\bC_\sI| \ge t
\ \mid\ \bi_\sT, \bj_\sT, \bb, \bs \right]
+ 2^{n[H_2(t/n) - (n-r-m)/n]}},\label{eqlemma}
\end{equation}
where $\bC_\sI$ is the random variable whose value equals
$\bc_\sI \triangleq \bi_\sI \oplus \bj_\sI$,
and $H_2(x) \triangleq -x \log_2(x) - (1-x) \log_2(1-x)$.
\end{proposition}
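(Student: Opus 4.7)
The plan is to reduce the general $m$-bit case to the $1$-bit case of the previous theorem by a standard hybrid argument combined with the triangle inequality for trace distance. First, I would fix a path of intermediate keys $\bk = \bk^{(0)}, \bk^{(1)}, \ldots, \bk^{(m)} = \bk'$ in $\mathbf{F}_2^m$ such that each pair $(\bk^{(j-1)}, \bk^{(j)})$ differs in exactly one coordinate (say the $j$-th). Since the Hamming distance between $\bk$ and $\bk'$ is at most $m$, such a path of length at most $m$ always exists, and the triangle inequality gives
\begin{equation}
\frac{1}{2}\tr|\widehat{\rho}^\sym_{\bk} - \widehat{\rho}^\sym_{\bk'}|
\le \sum_{j=1}^{m} \frac{1}{2}\tr|\widehat{\rho}^\sym_{\bk^{(j-1)}} - \widehat{\rho}^\sym_{\bk^{(j)}}|,
\end{equation}
so it is enough to bound each single-bit-flip term by $2\sqrt{\pr^\sym_\invb[|\bC_\sI|\ge t\mid \bi_\sT, \bj_\sT, \bb, \bs] + 2^{n[H_2(t/n) - (n-r-m)/n]}}$ and the factor of $m$ will emerge from summing these $m$ identical bounds.

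Next, for each hybrid step I would reduce the single-bit-flip trace distance to the $1$-bit theorem by treating the $m-1$ coinciding key bits as additional syndrome constraints. Inspecting the definition of $\widehat{\rho}^\sym_\bk$ in Equation~\eqref{rhohatk_multi}, the admissible $\bi_\sI$ for a fixed key $\bk$ is the affine set cut out by $\bi_\sI P_\sC^\transp = \bxi$ together with $\bi_\sI \cdot v_{r+j'} = k_{j'}$ for all $j' = 1,\ldots,m$. When $\bk^{(j-1)}$ and $\bk^{(j)}$ differ only in coordinate $j$, the $r+(m-1)$ constraints from $P_\sC$ together with $\{v_{r+j'}\}_{j'\ne j}$ all coincide; absorbing them into an \emph{effective} parity-check matrix reduces this difference to exactly the $1$-bit problem in which the single privacy-amplification vector is $v_{r+j}$. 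The corresponding extended dual code, in the sense of the previous theorem, is $\sC^\perp_\ext = \Span\{v_1, \ldots, v_{r+m}\}$, which has dimension $r+m$ rather than $r+1$; so when Proposition~\ref{prop_ecc_random} is invoked inside the proof of the previous theorem, the exponent $(n-r-1)/n$ is replaced by $(n-r-m)/n$. Every other algebraic manipulation (the Fourier expansion via $\ket{\eta^\sym_\bl}_\sE$, the orthogonality from Proposition~4.3 of~\cite{BBBMR06}, the $\sum_{\ba \in \sC} (-1)^{\ba\cdot \bl_0} = 0$ identity, and the $2xy \le \alpha x^2 + y^2/\alpha$ step) carries over verbatim.

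The main obstacle is the bookkeeping around the randomness: one has to verify that the reduction to an $(r+m)$-dimensional random extended dual code is legitimate, given that the matrices $P_\sC, P_\sK$ are jointly chosen uniformly subject to all $r+m$ rows being linearly independent, and that conditioning on the $m-1$ ``frozen'' rows of $P_\sK$ preserves the required uniformity over the remaining rows. Once this is checked, the probability appearing in the resulting bound is $\pr^\sym_\invb[|\bC_\sI|\ge t\mid \bi_\sT, \bj_\sT, \bb, \bs]$ exactly as in the $1$-bit theorem (it depends only on the error statistics and not on the specific $v_{r+j}$), so each of the $m$ hybrid contributions is identical and the final bound in Equation~\eqref{eqlemma} follows immediately.
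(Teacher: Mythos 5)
Your proposal is correct and follows essentially the same route as the paper: a hybrid chain of intermediate keys differing in one bit, the triangle inequality, and a reduction of each single-bit step to the $1$-bit theorem by absorbing the $m-1$ frozen key bits into the syndrome constraints, with the extended dual code growing to dimension $r+m$ and thus changing the exponent to $(n-r-m)/n$. The paper states this reduction more tersely ("attaching the other $m-1$ key bits to $\bxi$"), but your more explicit account of why the exponent changes matches the intended argument exactly.
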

\begin{proof}
We define the key $\bk_j$, for $0 \le j \le m$, to consist of the first
$j$ bits of $\bk'$ and the last $m-j$ bits of $\bk$. This means that
$\bk_0 = \bk$, $\bk_m = \bk'$, and $\bk_{j-1}$ differs from $\bk_j$
at most on a single bit (the $j$-th bit).

First, we find a bound on $\frac{1}{2} \tr
|\widehat{\rho}^\sym_{\bk_{j-1}} - \widehat{\rho}^\sym_{\bk_j}|$:
since $\bk_{j-1}$ differs from $\bk_j$
at most on a single bit (the $j$-th bit, given by the formula
$\bi_\sI \cdot v_{r+j}$), we can use the same proof
that gave us Equation~\eqref{boundrho2}, attaching the other
(identical) $m - 1$ key bits to $\bxi$ of the original proof
(for the same parameter $t$); and we find out that
\begin{equation}
\frac{1}{2} \tr |\widehat{\rho}^\sym_{\bk_{j-1}} -
\widehat{\rho}^\sym_{\bk_j}| \le
2 \sqrt{\pr^\sym_\invb\left[|\bC_\sI| \ge t
\ \mid\ \bi_\sT, \bj_\sT, \bb, \bs \right]
+ 2^{n[H_2(t/n) - (n-r-m)/n]}}.\label{StatesDiff1}
\end{equation}
Now we use the triangle inequality for norms to find
\begin{align}
\frac{1}{2} \tr |\widehat{\rho}^\sym_{\bk} -
\widehat{\rho}^\sym_{\bk'}|
&= \frac{1}{2} \tr |\widehat{\rho}^\sym_{\bk_0} -
\widehat{\rho}^\sym_{\bk_m}|
\le \sum_{j = 1}^m \frac{1}{2} \tr |\widehat{\rho}^\sym_{\bk_{j-1}}
- \widehat{\rho}^\sym_{\bk_j}| \nonumber \\
&\le 2m \sqrt{\pr^\sym_\invb\left[|\bC_\sI| \ge t
\ \mid\ \bi_\sT, \bj_\sT, \bb, \bs \right]
+ 2^{n[H_2(t/n) - (n-r-m)/n]}}.\label{StatesDiff3}
\end{align}
\end{proof}

We would now like to bound the expected value
(namely, the average value) of the trace distance between
two states of Eve corresponding to two final keys.
However, we should take into account that if the test fails,
no final key is generated,
in which case we define the distance to be $0$.
We thus define the random variable $\Delta^\sym_\Eve(\bk,\bk')$
for any two final keys $\bk,\bk'$:
\begin{equation}
\Delta^\sym_\Eve(\bk, \bk' \mid \bi_\sT, \bj_\sT, \bb, \bs, \bxi)
\triangleq \begin{cases}
\frac{1}{2} \tr |\widehat{\rho}^\sym_{\bk} -
\widehat{\rho}^\sym_{\bk'}|
& \text{if $T(\bi_\sT \oplus \bj_\sT, \bb_\sT, \bs) = 1$} \\
0 & \text{otherwise} \end{cases}.\label{defipa}
\end{equation}

We need to bound the expected value
$\langle \Delta^\sym_\Eve(\bk, \bk') \rangle$,
that is given by:
\begin{align}
\langle \Delta^\sym_\Eve(\bk, \bk') \rangle
= \sum_{\scriptsize\begin{matrix}\bi_\sT, \bj_\sT
\in \mathbf{F}_2^{N-n},
\\ \bb \in B, \bs \in S_\bb,
\bxi \in \mathbf{F}_2^r\end{matrix}} \left[\right.&
\Delta^\sym_\Eve(\bk, \bk' \mid \bi_\sT, \bj_\sT, \bb, \bs, \bxi)
\nonumber \\
&\cdot \left.\pr^\sym(\bi_\sT, \bj_\sT, \bb, \bs, \bxi)\right].
\label{eqipa}
\end{align}
(In Subsection~\ref{sec_full_composability} we prove that this
expected value is indeed the quantity we need to bound
for proving fully composable security.)

\begin{theorem}\label{thmsecurity}
For any two final keys $\bk, \bk'$
and any integer $0 \le t \le \frac{n}{2}$,
\begin{equation}
\langle \Delta^\sym_\Eve(\bk, \bk')\rangle \le 2m
\sqrt{\pr_\invb\left[ \left( \frac{|\bC_\sI|}{n} \ge \frac{t}{n}
\right) \wedge \left( \bT = 1 \right) \right]
+ 2^{n[H_2(t/n) - (n-r-m)/n]}},\label{boundotherbasis}
\end{equation}
where $\frac{|\bC_\sI|}{n}$ is a random variable
whose value is the error rate on the INFO bits;
$\bT$ is a random variable whose value is $1$ if the test passes
and $0$ otherwise;
and $H_2(x) \triangleq -x \log_2(x) - (1-x) \log_2(1-x)$.
We note that the protocol considered for
the probability in the right-hand-side is
the hypothetical ``inverted-INFO-basis'' protocol
defined in Subsection~\ref{subsec_invb},
in which Alice and Bob use the basis string
$\bb^0 \triangleq \bb \oplus \bs$ instead of $\bb$;
we also note that the probability in the right-hand-side
is the probability for the original (non-symmetrized) attack.
\end{theorem}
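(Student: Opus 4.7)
The plan is to substitute the pointwise bound from Proposition~\ref{probbound} into the expectation in Equation~\eqref{eqipa}, apply a Jensen-type inequality to pull the weighted sum inside the square root, and then translate the residual symmetrized probability back to the non-symmetrized ``inverted-INFO-basis'' probability that appears in the statement.

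First, by Equation~\eqref{defipa} the sum in Equation~\eqref{eqipa} automatically restricts to tuples $(\bi_\sT,\bj_\sT,\bb,\bs)$ satisfying $T(\bi_\sT\oplus\bj_\sT,\bb_\sT,\bs)=1$. On each such tuple Proposition~\ref{probbound} gives $\tfrac{1}{2}\tr|\widehat{\rho}^\sym_\bk-\widehat{\rho}^\sym_{\bk'}|\le 2m\sqrt{\pr^\sym_\invb[|\bC_\sI|\ge t\mid\bi_\sT,\bj_\sT,\bb,\bs]+2^{n[H_2(t/n)-(n-r-m)/n]}}$, whose right-hand side is independent of $\bxi$; summing $\pr^\sym(\bi_\sT,\bj_\sT,\bb,\bs,\bxi)$ over $\bxi$ therefore collapses the $\bxi$-weight into $\pr^\sym(\bi_\sT,\bj_\sT,\bb,\bs)$. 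I would then apply the Cauchy--Schwarz form of Jensen's inequality, $\sum_i p_i\sqrt{a_i+c}\le\sqrt{\sum_i p_i a_i+c}$ (valid for $p_i,a_i,c\ge 0$ with $\sum_i p_i\le 1$), to pull the factor $2m$ and the square root outside the entire sum.

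It then remains to identify the expression under the root with the probability in the theorem statement. The marginal $\pr^\sym(\bi_\sT,\bj_\sT,\bb,\bs)$ coincides with $\pr^\sym_\invb(\bi_\sT,\bj_\sT,\bb,\bs)$: the string $\bs$ vanishes on TEST coordinates (so $\bb^0_\sT=\bb_\sT$), Alice's distribution on $(\bb,\bs)$ is the same in both protocols, and by Equation~\eqref{symm_prob_test_err_invb} the conditional law of $\bj_\sT$ given $(\bi_\sT,\bb,\bs)$ is identical in the two protocols. Substituting this identity, the first summand under the square root becomes $\pr^\sym_\invb[|\bC_\sI|\ge t\wedge\bT=1]$. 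Since both events $|\bC_\sI|\ge t$ and $\bT=1$ depend only on $(\bc_\sI,\bc_\sT,\bb,\bs)$, the symmetrization invariance of the joint error distribution (Equation~\eqref{symm_probs_c_invb}) allows the $\sym$ superscript to be dropped, yielding $\pr_\invb[|\bC_\sI|\ge t\wedge\bT=1]$, which is precisely the bound claimed.

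The main obstacle I anticipate is bookkeeping across the four probability models in play --- symmetrized vs.\ not, and real vs.\ inverted --- and making sure each transport step refers to an event that is actually measurable with respect to the conditioning variables. Once Equations~\eqref{symm_prob_test_err_invb} and~\eqref{symm_probs_c_invb} are earmarked as the right transport lemmas and $\bxi$ is summed out at the right moment (before applying Jensen), the remaining manipulations are routine algebra.
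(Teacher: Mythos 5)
Your proposal is correct and follows essentially the same route as the paper's proof: substitute the pointwise bound of Proposition~\ref{probbound}, sum out $\bxi$, pull the square root through the weighted sum (the paper squares the expectation and uses convexity of $x^2$, which is the same Jensen step as your concavity-of-$\sqrt{\cdot}$ formulation, and your version of the inequality for $\sum_i p_i \le 1$ is valid), and then transport via $\pr^\sym(\bi_\sT,\bj_\sT,\bb,\bs)=\pr^\sym_\invb(\bi_\sT,\bj_\sT,\bb,\bs)$ and Equation~\eqref{symm_probs_c_invb} to reach $\pr_\invb[(|\bC_\sI|\ge t)\wedge(\bT=1)]$. The only detail worth making explicit is that dropping the $\sym$ superscript on the joint probability also uses the (trivial) invariance of the $(\bb,\bs)$ marginal under symmetrization, which the paper records separately as Equation~\eqref{symm_invb_b_s}.
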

\begin{proof}
We use convexity of $x^2$---namely,
the fact that for all $\{p_i\}_i$
satisfying $p_i \ge 0$ and $\sum_i p_i = 1$, it holds that
$\left( \sum_i p_i x_i \right)^2 \le \sum_i p_i x_i^2$.
We also use the fact that
\begin{eqnarray}\label{symm_invb_probs}
&&\pr^\sym_\invb(\bi_\sT, \bj_\sT, \bb, \bs) \nonumber \\
&=& \pr^\sym_\invb(\bi_\sT, \bb, \bs) \cdot
\pr^\sym_\invb(\bj_\sT \mid \bi_\sT, \bb, \bs) \nonumber \\
&=& \pr^\sym(\bi_\sT, \bb, \bs) \cdot
\pr^\sym(\bj_\sT \mid \bi_\sT, \bb, \bs) \nonumber \\
&=& \pr^\sym(\bi_\sT, \bj_\sT, \bb, \bs),
\end{eqnarray}
which is correct because $\bi_\sT, \bb, \bs$ are all {\em chosen}
in the same way in both the hypothetical ``inverted-INFO-basis''
protocol and the real protocol (even though
different basis strings are {\em used} in these protocols),
and because according to the third property of the symmetrized attack
(Equation~\eqref{symm_prob_test_err_invb}),
$\pr^\sym_\invb(\bj_\sT \mid \bi_\sT, \bb, \bs) =
\pr^\sym(\bj_\sT \mid \bi_\sT, \bb, \bs)$.

In addition, we use the result
\begin{eqnarray}\label{symm_invb_sec_c}
&&\pr^\sym_\invb\left[ \left( |\bC_\sI | \ge t \right)
\wedge \left( \bT = 1 \right) \mid \bb, \bs \right]
\nonumber \\
&=& \pr_\invb\left[ \left( |\bC_\sI | \ge t \right)
\wedge \left( \bT = 1 \right) \mid \bb, \bs \right],
\end{eqnarray}
which is correct because according to the second property
of the symmetrized attack (Equation~\eqref{symm_probs_c_invb}),
$\pr^\sym_\invb(\bc_\sI, \bc_\sT \mid \bb, \bs) =
\pr_\invb(\bc_\sI, \bc_\sT \mid \bb, \bs)$,
and because the random variable $\bT$ depends only
on the random variable $\bC_\sT$ and the parameters $\bb_\sT, \bs$.

We also use the result
\begin{equation}\label{symm_invb_b_s}
\pr^\sym_\invb(\bb, \bs) = \pr_\invb(\bb, \bs),
\end{equation}
which is correct because Alice's random choice of
$\bb, \bs$ is independent of Eve's attack.

We find out that:
\begin{align}
&\langle \Delta^\sym_\Eve(\bk, \bk')\rangle^2 &
\nonumber \\
&= \left[ \sum_{\bi_\sT, \bj_\sT, \bb, \bs, \bxi}
\Delta^\sym_\Eve(\bk, \bk' \mid \bi_\sT, \bj_\sT, \bb, \bs, \bxi)
\cdot \pr^\sym(\bi_\sT, \bj_\sT, \bb, \bs, \bxi)\right]^2
&\text{(by \eqref{eqipa})} \nonumber \\
&\le \sum_{\bi_\sT, \bj_\sT, \bb, \bs, \bxi} \left[
\Delta^\sym_\Eve(\bk, \bk' \mid \bi_\sT, \bj_\sT, \bb, \bs, \bxi)
\right]^2
\cdot \pr^\sym(\bi_\sT, \bj_\sT, \bb, \bs, \bxi)
\qquad &\text{\llap{(by convexity of $x^2$)}} \nonumber \\
&= \sum_{\bi_\sT, \bj_\sT, \bb, \bs, \bxi \mid \bT = 1}
\left( \frac{1}{2} \tr |\widehat{\rho}^\sym_{\bk} -
\widehat{\rho}^\sym_{\bk'}| \right)^2
\cdot \pr^\sym(\bi_\sT, \bj_\sT, \bb, \bs, \bxi)
&\text{(by \eqref{defipa})} \nonumber \\
&\le 4m^2 \cdot \sum_{\bi_\sT, \bj_\sT, \bb, \bs, \bxi \mid \bT = 1}
\left( \pr^\sym_\invb\left[|\bC_\sI| \ge t \mid
\bi_\sT, \bj_\sT, \bb, \bs \right] +
2^{n[H_2(t/n) - (n-r-m)/n]} \right) & \nonumber \\
&~\cdot~~\pr^\sym(\bi_\sT, \bj_\sT, \bb, \bs, \bxi)
&\text{(by \eqref{eqlemma})} \nonumber \\
&= 4m^2 \cdot \sum_{\bi_\sT, \bj_\sT, \bb, \bs \mid \bT = 1}
\left( \pr^\sym_\invb\left[ |\bC_\sI| \ge t
\mid \bi_\sT, \bj_\sT, \bb, \bs \right]
+ 2^{n[H_2(t/n) - (n-r-m)/n]} \right) & \nonumber \\
&~\cdot~~\pr^\sym(\bi_\sT, \bj_\sT, \bb, \bs) & \nonumber \\
&= 4m^2 \cdot \sum_{\bi_\sT, \bj_\sT, \bb, \bs \mid \bT = 1}
\left( \pr^\sym_\invb\left[ |\bC_\sI| \ge t
\mid \bi_\sT, \bj_\sT, \bb, \bs \right] \right. & \nonumber \\
&\left.~\cdot~~ \pr^\sym(\bi_\sT, \bj_\sT, \bb, \bs)
+ 2^{n[H_2(t/n) - (n-r-m)/n]}
\cdot \pr^\sym(\bi_\sT, \bj_\sT, \bb, \bs) \right) & \nonumber \\
&= 4m^2 \cdot \left( \sum_{\bi_\sT, \bj_\sT, \bb, \bs}
\pr^\sym_\invb\left[ \left(|\bC_\sI | \ge t \right) \wedge
\left( \bT = 1 \right)
\mid \bi_\sT, \bj_\sT, \bb, \bs \right] \right. & \nonumber \\
&\left.~\cdot~~\pr^\sym(\bi_\sT, \bj_\sT, \bb, \bs)
+ 2^{n[H_2(t/n) - (n-r-m)/n]}
\cdot \pr^\sym \left( \bT = 1 \right) \right) & \nonumber \\
&\le 4m^2 \cdot \left( \sum_{\bi_\sT, \bj_\sT, \bb, \bs}
\pr^\sym_\invb\left[ \left(|\bC_\sI | \ge t \right) \wedge
\left( \bT = 1 \right)
\mid \bi_\sT, \bj_\sT, \bb, \bs \right] \right. & \nonumber \\
&\left.~\cdot~~\pr^\sym(\bi_\sT, \bj_\sT, \bb, \bs)
+ 2^{n[H_2(t/n) - (n-r-m)/n]} \right) & \nonumber \\
&= 4m^2 \cdot \left( \sum_{\bi_\sT, \bj_\sT, \bb, \bs}
\pr^\sym_\invb\left[ \left(|\bC_\sI | \ge t \right) \wedge
\left( \bT = 1 \right) \mid \bi_\sT, \bj_\sT, \bb, \bs \right] \right.
& \nonumber \\
&\left.~\cdot~~\pr^\sym_\invb(\bi_\sT, \bj_\sT, \bb, \bs)
+ 2^{n[H_2(t/n) - (n-r-m)/n]} \right)
&\text{(by \eqref{symm_invb_probs})} \nonumber \\
&= 4m^2 \cdot \left( \sum_{\bb, \bs}
\pr^\sym_\invb\left[ \left(|\bC_\sI | \ge t \right) \wedge
\left( \bT = 1 \right) \mid \bb, \bs \right] \right. & \nonumber \\
&\left.~\cdot~~\pr^\sym_\invb(\bb, \bs)
+ 2^{n[H_2(t/n) - (n-r-m)/n]} \right) & \nonumber \\
&= 4m^2 \cdot \left( \sum_{\bb, \bs}
\pr_\invb\left[ \left(|\bC_\sI | \ge t \right) \wedge
\left( \bT = 1 \right) \mid \bb, \bs \right] \right. & \nonumber \\
&\left.~\cdot~~\pr_\invb(\bb, \bs)
+ 2^{n[H_2(t/n) - (n-r-m)/n]} \right) &\text{(by
\eqref{symm_invb_sec_c}--\eqref{symm_invb_b_s})} \nonumber \\
&= 4m^2 \cdot \left( \pr_\invb\left[ \left(|\bC_\sI | \ge t \right)
\wedge \left( \bT = 1 \right) \right]
+ 2^{n[H_2(t/n) - (n-r-m)/n]} \right). &
\end{align}
\end{proof}

\subsection{\label{sec_full_composability}Bound for Fully Composable
Security}
We now prove a crucial part of the claim that generalized BB84
protocols satisfy the definition of composable security
for a QKD protocol:
we derive an upper bound for the expression
$\frac{1}{2} \tr \left| \rho_{\mathrm{ABE}}
- \rho_\sU \otimes \rho_\sE \right|$,
where $\rho_{\mathrm{ABE}}$ is the actual joint state
of Alice, Bob, and Eve at the end of the protocol;
$\rho_\sU$ is an ideal (random, secret, and shared)
key distributed to Alice and Bob;
and $\rho_\sE$ is the partial trace of $\rho_{\mathrm{ABE}}$
over the system $\mathrm{AB}$.
In other words, we upper-bound the trace distance between the system
after the real QKD protocol and the system after an ideal key
distribution protocol (which first performs the real QKD protocol
and then magically distributes to Alice and Bob a random, secret,
and shared key).

The states $\rho_{\mathrm{ABE}}$ and $\rho_\sU$ are
\begin{eqnarray}
\rho_{\mathrm{ABE}} &=& \sum_{\bi, \bj, \bb, \bs, P_\sC, P_\sK
\mid \bT = 1} \pr\left( \bi, \bj, \bb, \bs, P_\sC, P_\sK \right) \cdot
\ket{\bk}_\sA \bra{\bk}_\sA \otimes
\ket{\bk^\sB}_\sB \bra{\bk^\sB}_\sB \nonumber \\
&\otimes& \left( \rho_{\bi_\sI,\bj_\sI}^{\bb, \bs} \right)_\sE \otimes
\ket{\bi_\sT, \bj_\sT, \bb, \bs, \bxi, P_\sC, P_\sK}_\sC
\bra{\bi_\sT, \bj_\sT, \bb, \bs, \bxi,
P_\sC, P_\sK}_\sC,\label{eq_rho_ABE_def} \\
\rho_\sU &=& \frac{1}{2^m} \sum_{\bk} \ket{\bk}_\sA \bra{\bk}_\sA
\otimes \ket{\bk}_\sB \bra{\bk}_\sB,\label{eq_rho_U_def}
\end{eqnarray}
where $\left( \rho_{\bi_\sI,\bj_\sI}^{\bb, \bs} \right)_\sE$
was defined in Equation~\eqref{rho_Eve_ij}
to be Eve's normalized quantum state if Alice chooses
$\bi_\sI, \bi_\sT, \bb, \bs$
and Bob measures $\bj_\sI, \bj_\sT$.
All the other states in
Equations~\eqref{eq_rho_ABE_def}--\eqref{eq_rho_U_def}
actually represent classical information:
subsystems $\sA$ and $\sB$ represent the final keys held by Alice
($\bk \triangleq \bi_\sI P_\sK^\transp$) and Bob (his key $\bk^\sB$ is
computed from $\bj_\sI$, $\bxi \triangleq \bi_\sI P_\sC^\transp$,
and $P_\sC, P_\sK$), respectively,
and subsystem $\sC$ represents the information published
in the unjammable classical channel during the protocol
(this information is known to Alice, Bob, and Eve)---namely,
$\bi_\sT, \bj_\sT$ (all the TEST bits),
$\bb$ (the basis string),
$\bs$ (the string representing the partition into INFO and TEST bits),
$\bxi \triangleq \bi_\sI P_\sC^\transp$ (the syndrome),
and $P_\sC, P_\sK$
(the error correction and privacy amplification matrices).

We note that in the definition of $\rho_{\mathrm{ABE}}$,
we sum only over the events in which the test is {\em passed}
(namely, in which the protocol is not aborted by Alice and Bob):
in such cases, Alice and Bob generate an $m$-bit key.
The cases in which the protocol aborts do not exist
in the sum---namely, they are represented by the zero operator,
which matches standard definitions of composable security
(see~\cite[Subsection 6.1.2]{renner_thesis08}).
Thus, $\rho_{\mathrm{ABE}}$ is a non-normalized state,
and $\tr(\rho_{\mathrm{ABE}})$ is the probability
that the test is passed.

To help us bound the trace distance, we define
the following intermediate state:
\begin{eqnarray}
\sigma_{\mathrm{ABE}} &\triangleq&
\sum_{\bi, \bj, \bb, \bs, P_\sC, P_\sK \mid \bT = 1}
\pr\left( \bi, \bj, \bb, \bs, P_\sC, P_\sK \right) \cdot
\ket{\bk}_\sA \bra{\bk}_\sA \otimes
\ket{\bk}_\sB \bra{\bk}_\sB \nonumber \\
&\otimes& \left( \rho_{\bi_\sI, \bj_\sI}^{\bb, \bs} \right)_\sE
\otimes \ket{\bi_\sT, \bj_\sT, \bb, \bs, \bxi, P_\sC, P_\sK}_\sC
\bra{\bi_\sT, \bj_\sT, \bb, \bs, \bxi,
P_\sC, P_\sK}_\sC.\label{eq_sigma_ABE_def}
\end{eqnarray}
This state is identical to $\rho_{\mathrm{ABE}}$, except that Bob holds
Alice's final key ($\bk$) instead of his own calculated final key
($\bk^\sB$). In particular, the similarity between
$\rho_{\mathrm{ABE}}$ and $\sigma_{\mathrm{ABE}}$ means that,
by definition,
$\rho_\sE \triangleq \tr_{\mathrm{AB}}
\left( \rho_{\mathrm{ABE}} \right)$
and $\sigma_\sE \triangleq
\tr_{\mathrm{AB}} \left( \sigma_{\mathrm{ABE}} \right)$
are identical---that is, $\rho_\sE = \sigma_\sE$.

A similar intermediate state $\sigma_{\mathrm{ABE}}^\sym$ is defined
in case Eve uses the symmetrized attack:
\begin{eqnarray}
\sigma_{\mathrm{ABE}}^\sym &\triangleq&
\sum_{\bi, \bj, \bb, \bs, P_\sC, P_\sK \mid \bT = 1}
\pr^\sym\left( \bi, \bj, \bb, \bs, P_\sC, P_\sK \right) \cdot
\ket{\bk}_\sA \bra{\bk}_\sA \otimes
\ket{\bk}_\sB \bra{\bk}_\sB \nonumber \\
&\otimes& \left( \rho_{\bi_\sI, \bj_\sI}^{\bb, \bs} \right)^\sym_\sE
\otimes \ket{\bi_\sT, \bj_\sT, \bb, \bs, \bxi, P_\sC, P_\sK}_\sC
\bra{\bi_\sT, \bj_\sT, \bb, \bs, \bxi,
P_\sC, P_\sK}_\sC.\label{eq_sigma_ABE_sym_def}
\end{eqnarray}

\begin{proposition}\label{prop_symm_sec}
For any symmetrized attack
and any integer $0 \le t \le \frac{n}{2}$, it holds that
\begin{align}
&\frac{1}{2} \tr \left| \sigma_{\mathrm{ABE}}^\sym -
\rho_\sU \otimes \sigma_\sE^\sym \right| \nonumber \\
&\le 2m \sqrt{\pr_\invb\left[ \left( \frac{|\bC_\sI|}{n} \ge
\frac{t}{n} \right) \wedge \left( \bT = 1 \right) \right]
+ 2^{n[H_2(t/n) - (n-r-m)/n]}},
\end{align}
for $\sigma_{\mathrm{ABE}}^\sym$ and $\rho_\sU$
defined in Equations~\eqref{eq_sigma_ABE_sym_def}
and~\eqref{eq_rho_U_def}, respectively;
the partial trace state
$\sigma_\sE^\sym \triangleq
\tr_{\mathrm{AB}} \left( \sigma_{\mathrm{ABE}}^\sym \right)$;
and $H_2(x) \triangleq -x \log_2(x) - (1-x) \log_2(1-x)$.
We note that the probability in the right-hand-side is the probability
for the original (non-symmetrized) attack.
\end{proposition}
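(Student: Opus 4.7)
The plan is to rewrite $\sigma_{\mathrm{ABE}}^\sym$ as a sum that is block-diagonal in the classical registers on $\sA, \sB, \sC$, with the Eve-component in each block equal to one of the averaged states $\widehat{\rho}_\bk^\sym$ from Equations~\eqref{rhohatk_multi}--\eqref{rhohatk_multi2}, and then to reduce the trace distance $\frac{1}{2}\tr|\sigma_{\mathrm{ABE}}^\sym - \rho_\sU \otimes \sigma_\sE^\sym|$ to an average of the pairwise expectations $\langle \Delta^\sym_\Eve(\bk, \bk')\rangle$ already controlled by Theorem~\ref{thmsecurity}.

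First, I would regroup the sum in~\eqref{eq_sigma_ABE_sym_def} according to the value $c \triangleq (\bi_\sT, \bj_\sT, \bb, \bs, \bxi)$ recorded in subsystem $\sC$ and to the key $\bk \triangleq \bi_\sI P_\sK^\transp$, leaving $P_\sC, P_\sK$, the $\bi_\sI$'s satisfying both $\bi_\sI P_\sC^\transp = \bxi$ and $\bi_\sI P_\sK^\transp = \bk$, and $\bj_\sI$ in the innermost sums. Invoking the symmetrized-attack identities~\eqref{symm_prob_info_bits_PC_PK}--\eqref{symm_prob_info_bits_xi_3} (uniformity of $\bi_\sI$ given $\bi_\sT,\bj_\sT,\bb,\bs,P_\sC,P_\sK$, and of $\bxi$ given $\bi_\sT, \bj_\sT, \bb, \bs$) together with formula~\eqref{rhohatk_multi2}, the innermost sum condenses precisely to $\pr^\sym(c)\cdot 2^{-m}\cdot \widehat{\rho}_\bk^\sym$, so
\begin{equation*}
\sigma_{\mathrm{ABE}}^\sym = \sum_{c \mid \bT = 1} \pr^\sym(c) \cdot \frac{1}{2^m} \sum_\bk \ket{\bk}_\sA\bra{\bk}_\sA \otimes \ket{\bk}_\sB\bra{\bk}_\sB \otimes \widehat{\rho}_\bk^\sym,
\end{equation*}
where $\widehat{\rho}_\bk^\sym$ still carries the $P_\sC, P_\sK$ classical register and implicitly depends on $c$. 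Tracing out $\sA, \sB$ then gives $\sigma_\sE^\sym = \sum_c \pr^\sym(c)\cdot 2^{-m}\sum_\bk \widehat{\rho}_\bk^\sym$, and $\rho_\sU \otimes \sigma_\sE^\sym$ takes the parallel double-sum form with an extra $2^{-m}$ on the key registers.

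Next, I would subtract these expressions and exploit the mutual orthogonality of the projectors $\ket{\bk'}\bra{\bk'}_\sA \otimes \ket{\bk'}\bra{\bk'}_\sB \otimes \ket{c}\bra{c}_\sC$ across different $(c, \bk')$: the trace norm splits into a sum over these blocks, and applying the triangle inequality inside each block yields
\begin{equation*}
\frac{1}{2}\tr\bigl|\sigma_{\mathrm{ABE}}^\sym - \rho_\sU \otimes \sigma_\sE^\sym\bigr| \le \frac{1}{2^{2m}} \sum_{\bk, \bk'} \sum_{c \mid \bT = 1} \pr^\sym(c) \cdot \frac{1}{2}\tr|\widehat{\rho}_{\bk'}^\sym - \widehat{\rho}_\bk^\sym|.
\end{equation*}
By the definitions~\eqref{defipa}--\eqref{eqipa}, each inner $c$-sum is exactly $\langle \Delta^\sym_\Eve(\bk, \bk')\rangle$ (the $\bT = 0$ contributions vanish by~\eqref{defipa}). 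Applying Theorem~\ref{thmsecurity} to every one of the $2^{2m}$ pairs $(\bk, \bk')$ bounds each $\langle \Delta^\sym_\Eve(\bk, \bk')\rangle$ by the same quantity $2m\sqrt{\pr_\invb[\cdot] + 2^{n[H_2(t/n)-(n-r-m)/n]}}$, and the $2^{-2m}$ prefactor exactly cancels the outer double sum, giving the stated bound.

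The main obstacle I anticipate is the bookkeeping in the first step: one must verify that the $2^{-(n-r-m)}$ normalisation in the definition of $\widehat{\rho}_\bk^\sym$, the $2^{-n}$ coming from $\pr^\sym(\bi_\sI\mid\bi_\sT,\bj_\sT,\bb,\bs,P_\sC,P_\sK)$, the $2^{-r}$ factor converting $\pr^\sym(\bi_\sT, \bj_\sT, \bb, \bs)$ into $\pr^\sym(c)$, and the count $2^{n-r-m}$ of $\bi_\sI$'s solving both linear constraints all combine cleanly to $\pr^\sym(c)\cdot 2^{-m}\cdot \widehat{\rho}_\bk^\sym$, so that one recognises Eve's conditional state per block as exactly the object already controlled by the previous results. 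Once this identification is secured, the remainder is a routine block-diagonal trace-norm decomposition followed by a direct invocation of Theorem~\ref{thmsecurity}.
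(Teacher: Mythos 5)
Your proposal is correct and follows essentially the same route as the paper's proof: both decompose $\sigma_{\mathrm{ABE}}^\sym$ into blocks indexed by the classical register $(\bi_\sT, \bj_\sT, \bb, \bs, \bxi)$ and the key $\bk$, use the uniformity identities \eqref{symm_prob_info_bits_xi_3} and \eqref{symm_prob_info_bits_iI} to identify each block's Eve-component with $2^{-m}\,\widehat{\rho}^\sym_\bk$ via \eqref{rhohatk_multi2}, and then apply the triangle inequality blockwise to reduce the trace distance to $\frac{1}{2^{2m}}\sum_{\bk,\bk'} \langle \Delta^\sym_\Eve(\bk,\bk')\rangle$, which Theorem~\ref{thmsecurity} bounds. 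The bookkeeping you flag as the main obstacle does combine cleanly exactly as you describe.
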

\begin{proof}
We notice that only two terms in $\sigma_{\mathrm{ABE}}^\sym$
depend {\em directly} on $\bi_\sI, \bj_\sI$
(and not only on $\bk, \bxi$):
the probability $\pr^\sym\left( \bi, \bj, \bb, \bs,
P_\sC, P_\sK \right)$, and Eve's state
$\left( \rho_{\bi_\sI, \bj_\sI}^{\bb, \bs} \right)^\sym_\sE$.
For any $\bi, \bj, \bb, \bs, P_\sC, P_\sK$
and $\bxi \triangleq \bi_\sI P_\sC^\transp$,
the probability can be reformulated as
\begin{eqnarray}
\pr^\sym\left( \bi, \bj, \bb, \bs, P_\sC, P_\sK \right)
&=& \pr^\sym\left( \bi_\sT, \bj_\sT, \bb, \bs, P_\sC, P_\sK \right)
\cdot \pr^\sym\left( \bxi \mid \bi_\sT, \bj_\sT, \bb, \bs, P_\sC,
P_\sK \right) \nonumber \\
&\cdot& \pr^\sym\left( \bi_\sI \mid \bi_\sT, \bj_\sT, \bb, \bs, \bxi,
P_\sC, P_\sK \right)
\cdot \pr^\sym\left( \bj_\sI \mid \bi_\sI, \bi_\sT, \bj_\sT,
\bb, \bs, \bxi, P_\sC, P_\sK \right) \nonumber \\
&=& \pr^\sym\left( P_\sC, P_\sK \right)
\cdot \pr^\sym\left( \bi_\sT, \bj_\sT, \bb, \bs \right)
\cdot \pr^\sym\left( \bxi \mid \bi_\sT, \bj_\sT, \bb, \bs \right)
\nonumber \\
&\cdot& \frac{1}{2^{n-r}} \cdot \pr^\sym\left( \bj_\sI \mid
\bi_\sI, \bi_\sT, \bj_\sT, \bb, \bs \right) \nonumber \\
&=& \pr^\sym\left( P_\sC, P_\sK \right)
\cdot \pr^\sym\left( \bi_\sT, \bj_\sT, \bb, \bs, \bxi \right)
\nonumber \\
&\cdot& \frac{1}{2^m} \cdot \frac{1}{2^{n-r-m}}
\cdot \pr^\sym\left( \bj_\sI \mid
\bi_\sI, \bi_\sT, \bj_\sT, \bb, \bs \right).
\end{eqnarray}
(This is correct according to the conclusions
of the fourth property of the symmetrized
attack---Equations~\eqref{symm_prob_info_bits_xi_3}
and~\eqref{symm_prob_info_bits_iI};
and because as noted in Subsection~\ref{sec_joint},
the matrices $P_\sC, P_\sK$ are completely independent of
$\bi_\sI, \bi_\sT, \bj_\sI, \bj_\sT, \bb, \bs$, and therefore
$\pr^\sym\left( \bi_\sT, \bj_\sT, \bb, \bs, P_\sC, P_\sK \right)
= \pr^\sym\left( P_\sC, P_\sK \right)
\cdot \pr^\sym\left( \bi_\sT, \bj_\sT, \bb, \bs \right)$ and
$\pr^\sym\left( \bj_\sI \mid \bi_\sI, \bi_\sT, \bj_\sT,
\bb, \bs, \bxi, P_\sC, P_\sK \right) = \pr^\sym\left( \bj_\sI \mid
\bi_\sI, \bi_\sT, \bj_\sT, \bb, \bs \right)$.)

Therefore, the state $\sigma_{\mathrm{ABE}}^\sym$
from Equation~\eqref{eq_sigma_ABE_sym_def}
takes the following form:
\begin{eqnarray}
\sigma_{\mathrm{ABE}}^\sym &=& \frac{1}{2^m}
\sum_{\bk, \bi_\sT, \bj_\sT, \bb, \bs, \bxi \mid \bT = 1}
\pr^\sym\left( \bi_\sT, \bj_\sT, \bb, \bs, \bxi \right)
\cdot \ket{\bk}_\sA \bra{\bk}_\sA \otimes
\ket{\bk}_\sB \bra{\bk}_\sB \nonumber \\
&\otimes& \left[ \frac{1}{2^{n-r-m}} \sum_{P_\sC, P_\sK, \bi_\sI,
\bj_\sI \big|{\scriptsize\begin{matrix}\bi_\sI P_\sC^\transp = \bxi\\
\bi_\sI P_\sK^\transp = \bk\end{matrix}}}
\pr^\sym\left( P_\sC, P_\sK \right)
\cdot \pr^\sym\left( \bj_\sI \mid \bi_\sI, \bi_\sT, \bj_\sT, \bb, \bs
\right) \right. \nonumber \\
&\cdot& \left.
\left( \rho_{\bi_\sI, \bj_\sI}^{\bb, \bs} \right)^\sym_\sE
\otimes \ket{P_\sC, P_\sK}_\sC \bra{P_\sC, P_\sK}_\sC \right]
\otimes \ket{\bi_\sT, \bj_\sT, \bb, \bs, \bxi}_\sC
\bra{\bi_\sT, \bj_\sT, \bb, \bs, \bxi}_\sC \nonumber \\
&=& \frac{1}{2^m}
\sum_{\bk, \bi_\sT, \bj_\sT, \bb, \bs, \bxi \mid \bT = 1}
\pr^\sym\left( \bi_\sT, \bj_\sT, \bb, \bs, \bxi \right)
\cdot \ket{\bk}_\sA \bra{\bk}_\sA \otimes
\ket{\bk}_\sB \bra{\bk}_\sB \nonumber \\
&\otimes& \widehat{\rho}^\sym_\bk
\otimes \ket{\bi_\sT, \bj_\sT, \bb, \bs, \bxi}_\sC
\bra{\bi_\sT, \bj_\sT, \bb, \bs, \bxi}_\sC.
\end{eqnarray}
(This expression for $\widehat{\rho}^\sym_\bk$ was found
in Equation~\eqref{rhohatk_multi2}.)

The partial trace $\sigma_\sE^\sym \triangleq
\tr_{\mathrm{AB}} \left( \sigma_{\mathrm{ABE}}^\sym \right)$ is
\begin{eqnarray}
\sigma_\sE^\sym &=& \frac{1}{2^m}
\sum_{\bk, \bi_\sT, \bj_\sT, \bb, \bs, \bxi \mid \bT = 1}
\pr^\sym\left( \bi_\sT, \bj_\sT, \bb, \bs, \bxi \right)
\nonumber \\
&\cdot& \widehat{\rho}^\sym_\bk
\otimes \ket{\bi_\sT, \bj_\sT, \bb, \bs, \bxi}_\sC
\bra{\bi_\sT, \bj_\sT, \bb, \bs, \bxi}_\sC,
\end{eqnarray}
and the state $\rho_\sU \otimes \sigma_\sE^\sym$ is thus
(using Equation~\eqref{eq_rho_U_def})
\begin{eqnarray}
\rho_\sU \otimes \sigma_\sE^\sym &=& \frac{1}{2^{2m}}
\sum_{\bk, \bk', \bi_\sT, \bj_\sT, \bb, \bs, \bxi \mid \bT = 1}
\pr^\sym\left( \bi_\sT, \bj_\sT, \bb, \bs, \bxi \right)
\cdot \ket{\bk}_\sA \bra{\bk}_\sA \otimes
\ket{\bk}_\sB \bra{\bk}_\sB \nonumber \\
&\otimes& \widehat{\rho}^\sym_{\bk'}
\otimes \ket{\bi_\sT, \bj_\sT, \bb, \bs, \bxi}_\sC
\bra{\bi_\sT, \bj_\sT, \bb, \bs, \bxi}_\sC.
\end{eqnarray}
Since $\sigma_{\mathrm{ABE}}^\sym$
and $\rho_\sU \otimes \sigma_\sE^\sym$
are almost identical (except the difference between Eve's states:
$\widehat{\rho}^\sym_{\bk}$ and $\widehat{\rho}^\sym_{\bk'}$),
we can use the triangle inequality for norms,
the definitions of
$\Delta^\sym_\Eve(\bk, \bk' \mid \bi_\sT, \bj_\sT, \bb, \bs, \bxi)$
(Equation~\eqref{defipa})
and $\langle \Delta^\sym_\Eve(\bk, \bk') \rangle$
(Equation~\eqref{eqipa}),
and Theorem~\ref{thmsecurity} to get:
\begin{align}
&\frac{1}{2} \tr \left| \sigma_{\mathrm{ABE}}^\sym -
\rho_\sU \otimes \sigma_\sE^\sym \right| \nonumber \\
&\le \frac{1}{2^{2m}}
\sum_{\bk, \bk', \bi_\sT, \bj_\sT, \bb, \bs, \bxi \mid \bT = 1}
\pr^\sym\left( \bi_\sT, \bj_\sT, \bb, \bs, \bxi \right)
\cdot \frac{1}{2} \tr \left| \widehat{\rho}^\sym_{\bk}
- \widehat{\rho}^\sym_{\bk'} \right| \nonumber \\
&= \frac{1}{2^{2m}} \sum_{\bk, \bk', \bi_\sT, \bj_\sT, \bb, \bs, \bxi}
\pr^\sym\left( \bi_\sT, \bj_\sT, \bb, \bs, \bxi \right) \cdot
\Delta^\sym_\Eve(\bk, \bk' \mid \bi_\sT, \bj_\sT, \bb, \bs, \bxi)
\nonumber \\
&= \frac{1}{2^{2m}} \sum_{\bk, \bk'}
\langle \Delta^\sym_\Eve(\bk, \bk') \rangle
\nonumber \\
&\le 2m \sqrt{\pr_\invb\left[ \left( \frac{|\bC_\sI|}{n} \ge
\frac{t}{n} \right) \wedge \left( \bT = 1 \right) \right]
+ 2^{n[H_2(t/n) - (n-r-m)/n]}}.
\end{align}
\end{proof}

\begin{proposition}\label{prop_symm_general}
For any attack, it holds that
\begin{equation}
\frac{1}{2} \tr \left| \sigma_{\mathrm{ABE}} -
\rho_\sU \otimes \sigma_\sE \right| \le
\frac{1}{2} \tr \left| \sigma_{\mathrm{ABE}}^\sym -
\rho_\sU \otimes \sigma_\sE^\sym \right|,
\end{equation}
for $\sigma_{\mathrm{ABE}}$, $\sigma_{\mathrm{ABE}}^\sym$,
and $\rho_\sU$ defined in
Equations~\eqref{eq_sigma_ABE_def},~\eqref{eq_sigma_ABE_sym_def},
and~\eqref{eq_rho_U_def}, respectively,
and the partial trace states
$\sigma_\sE \triangleq \tr_{\mathrm{AB}}
\left( \sigma_{\mathrm{ABE}} \right)$
and $\sigma_\sE^\sym \triangleq
\tr_{\mathrm{AB}} \left( \sigma_{\mathrm{ABE}}^\sym \right)$.
\end{proposition}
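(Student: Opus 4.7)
The plan is to realize the transition from the unsymmetrized joint state $\sigma_{\mathrm{ABE}}$ to the symmetrized joint state $\sigma_{\mathrm{ABE}}^\sym$ as a CPTP map $\Phi$ acting only on Eve's subsystems (the original probe $\sE$, the symmetrization ancilla $\sM$, and the classical record $\sC$), and then invoke monotonicity of the trace distance under CPTP maps. The operational intuition is that symmetrization is something Eve herself performs by enlarging her probe space with the ancilla $\sM$ initialized in $\ket{0_x}_\sM$ and conjugating her attack by the unitary $S$, so up to classical bookkeeping the passage $\sigma_{\mathrm{ABE}} \leftrightarrow \sigma_{\mathrm{ABE}}^\sym$ is an operation on Eve's side alone.

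First, I would factor $\sigma_{\mathrm{ABE}}$ and $\sigma_{\mathrm{ABE}}^\sym$ into a common form using Equation~\eqref{rho_Eve_ij} and the chain-rule identity $\pr^\sym(\bi,\bb,\bs,P_\sC,P_\sK) = \pr(\bi,\bb,\bs,P_\sC,P_\sK) = \tfrac{1}{2^N}\pr(\bb,\bs)\pr(P_\sC,P_\sK)$, so that both states become sums of terms with identical classical prefactors and identical $\sA,\sB,\sC$ factors, differing only in whether the Eve part is $\ket{E'_{\bi,\bj}}_\bb\bra{E'_{\bi,\bj}}_\bb$ or $\ket{E^\sym_{\bi,\bj}\,'}_\bb\bra{E^\sym_{\bi,\bj}\,'}_\bb$. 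Then I would use the Basic Lemma of Symmetrization (Equation~\eqref{symm_basic_lemma}) to see $\ket{E^\sym_{\bi,\bj}\,'}_\bb$ as an isometric lifting from $\sE$ to $\sE\otimes\sM$ of a specific linear combination of the original vectors $\ket{E'_{\bi\oplus\bbm,\bj\oplus\bbm}}_\bb$, which is exactly the information needed to construct $\Phi$.

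Once $\Phi$ is in place, applying it to both $\sigma_{\mathrm{ABE}}^\sym$ and $\rho_\sU\otimes\sigma_\sE^\sym$ yields $\sigma_{\mathrm{ABE}}$ and $\rho_\sU\otimes\sigma_\sE$ respectively, since $\Phi$ acts as identity on the $\sA,\sB$ registers (so the $\rho_\sU$ factor passes through unchanged, and $\Phi$ restricted to Eve's subsystems automatically sends $\sigma_\sE^\sym$ to $\sigma_\sE$). Monotonicity of the trace distance under CPTP maps then closes the proof:
\begin{equation*}
\tfrac12\tr|\sigma_{\mathrm{ABE}} - \rho_\sU\otimes\sigma_\sE| = \tfrac12\tr|\Phi(\sigma_{\mathrm{ABE}}^\sym) - \Phi(\rho_\sU\otimes\sigma_\sE^\sym)| \le \tfrac12\tr|\sigma_{\mathrm{ABE}}^\sym - \rho_\sU\otimes\sigma_\sE^\sym|.
\end{equation*}

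The main obstacle is constructing $\Phi$ explicitly. The subtlety is that the symmetrization unitary $U^\sym = (\mathbf{I}_\sE\otimes S^\dagger)(U\otimes\mathbf{I}_\sM)(\mathbf{I}_\sE\otimes S)$ formally involves the operators $S,S^\dagger$ acting on Alice's qubits, so producing a map on Eve's side alone requires using the fact that Alice's relevant classical data ($\bi_\sT,\bj_\sT,\bb,\bs,\bxi,P_\sC,P_\sK$) is already recorded in subsystem $\sC$, and that the coherent ancilla $\sM$ carries the compensating superposition with phases $(-1)^{\bc\cdot\bbm}$. Carefully distinguishing classical versus quantum registers, and verifying that the map works uniformly over all $\bi,\bj$ (which Eve does not directly possess) so that the desired image of $\rho_\sU\otimes\sigma_\sE^\sym$ is genuinely $\rho_\sU\otimes\sigma_\sE$, is the principal technical task; I expect the explicit verification to proceed by directly substituting Equation~\eqref{symm_basic_lemma} into the formula for $\sigma_{\mathrm{ABE}}^\sym$ and matching terms against the formula for $\sigma_{\mathrm{ABE}}$.
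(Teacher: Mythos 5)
Your high-level strategy coincides with the paper's: build a concrete CPTP map out of the Basic Lemma of Symmetrization (Equation~\eqref{symm_basic_lemma}) that sends $\sigma_{\mathrm{ABE}}^\sym$ to $\sigma_{\mathrm{ABE}}$ and $\rho_\sU \otimes \sigma_\sE^\sym$ to $\rho_\sU \otimes \sigma_\sE$, then invoke monotonicity of the trace distance. The paper's map has the form $X \mapsto \tr_\sM \left[ V X V^\dagger \right]$ for an explicit controlled-XOR unitary $V$, which is a unitary conjugation followed by a partial trace, exactly the kind of channel you envision.

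There is, however, a genuine gap in your plan: such a map cannot be supported on Eve's registers $\sE \otimes \sM \otimes \sC$ alone while acting as the identity on $\sA$ and $\sB$. Any channel of that form leaves the reduced state $\tr_{\sE\sM\sC}(\cdot)$ on $\mathrm{AB}$ unchanged, and these reduced states genuinely differ between $\sigma_{\mathrm{ABE}}$ and $\sigma_{\mathrm{ABE}}^\sym$ for a general attack. Indeed, by Equation~\eqref{symm_prob_info_bits} the symmetrized attack forces $\bi_\sI$ (hence $\bk$) to be uniform conditioned on $\bi_\sT, \bj_\sT, \bb, \bs$, and therefore uniform conditioned on the event $\bT = 1$; under the original attack, Eve can correlate $\bj_\sT$ (hence the event $\bT = 1$) with $\bi_\sI$---for instance, by flipping a TEST qubit controlled on an INFO qubit---so that the distribution of $\bk$ restricted to $\bT = 1$ is far from uniform, while the total weight $\pr(\bT=1)$ is the same in both states. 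No channel acting only on $\sE \otimes \sM \otimes \sC$ can reconcile these two $\mathrm{AB}$ marginals. The paper's unitary $V$ resolves this precisely by acting nontrivially on $\sA$, $\sB$, and the $\bi_\sT, \bj_\sT, \bxi$ entries of $\sC$: controlled on $\bbm$ in register $\sM$ and on $\bs, P_\sC, P_\sK$ in register $\sC$, it XORs $\ket{\bk}_\sA$ and $\ket{\bk}_\sB$ with $\bk_\bbm \triangleq \bbm_\sI P_\sK^\transp$ (and the classical record with $\bbm_\sT$ and $\bxi_\bbm$), implementing the re-indexing $\bi \mapsto \bi \oplus \bbm$, $\bj \mapsto \bj \oplus \bbm$ consistently across all registers before $\sM$ is traced out. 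A further consequence is that your claim that the image of $\rho_\sU \otimes \sigma_\sE^\sym$ is ``automatically'' $\rho_\sU \otimes \sigma_\sE$ also needs repair: since $V$ is not the identity on $\mathrm{AB}$, this requires a separate (short) verification, which succeeds because XORing the uniform, independent key $\bk$ with $\bk_\bbm$ preserves $\rho_\sU$.
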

\begin{proof}
First, we have to find an expression for
$\left( \rho_{\bi_\sI, \bj_\sI}^{\bb, \bs} \right)^\sym_\sE$.
According to Equation~\eqref{rho_Eve_ij},
\begin{equation}
\left( \rho_{\bi_\sI,\bj_\sI}^{\bb, \bs} \right)^\sym_\sE =
\frac{\left[ \ket{E^\sym_{\bi, \bj}\,\!'}_\bb
\bra{E^\sym_{\bi, \bj}\,\!'}_\bb \right]_\sE}
{\pr^\sym(\bj \mid \bi, \bb, \bs, P_\sC, P_\sK)},
\end{equation}
and according to the ``Basic Lemma of Symmetrization''
(see Equation~\eqref{symm_basic_lemma}),
\begin{equation}
\ket{E^\sym_{\bi, \bj}\,\!'}_\bb = \frac{1}{\sqrt{2^N}}
\sum_{\bbm \in \mathbf{F}_2^N} (-1)^{(\bi \oplus \bj) \cdot \bbm}
\ket{E'_{\bi \oplus \bbm, \bj \oplus \bbm}}_\bb \ket{\bbm}_\sM.
\end{equation}
Therefore,
\begin{equation}
\left( \rho_{\bi_\sI,\bj_\sI}^{\bb, \bs} \right)^\sym_\sE =
\frac{1}{2^N} \sum_{\bbm, \bbm' \in \mathbf{F}_2^N}
\frac{(-1)^{(\bi \oplus \bj) \cdot (\bbm \oplus \bbm')}
\left[ \ket{E'_{\bi \oplus \bbm, \bj \oplus \bbm}}_\bb
\bra{E'_{\bi \oplus \bbm', \bj \oplus \bbm'}}_\bb \otimes
\ket{\bbm}_\sM \bra{\bbm'}_\sM \right]_\sE}
{\pr^\sym(\bj \mid \bi, \bb, \bs, P_\sC, P_\sK)}.
\end{equation}

Substituting this result into Equation~\eqref{eq_sigma_ABE_sym_def},
we find:
\begin{eqnarray}
\sigma_{\mathrm{ABE}}^\sym &=& \frac{1}{2^N}
\sum_{\bi, \bj, \bb, \bs, P_\sC, P_\sK, \bbm, \bbm' \mid \bT = 1}
\pr^\sym\left( \bi, \bj, \bb, \bs, P_\sC, P_\sK \right) \cdot
\ket{\bk}_\sA \bra{\bk}_\sA \otimes
\ket{\bk}_\sB \bra{\bk}_\sB \nonumber \\
&\otimes& \frac{(-1)^{(\bi \oplus \bj) \cdot (\bbm \oplus \bbm')}
\left[ \ket{E'_{\bi \oplus \bbm, \bj \oplus \bbm}}_\bb
\bra{E'_{\bi \oplus \bbm', \bj \oplus \bbm'}}_\bb \otimes
\ket{\bbm}_\sM \bra{\bbm'}_\sM \right]_\sE}
{\pr^\sym(\bj \mid \bi, \bb, \bs, P_\sC, P_\sK)} \nonumber \\
&\otimes& \ket{\bi_\sT, \bj_\sT, \bb, \bs, \bxi, P_\sC, P_\sK}_\sC
\bra{\bi_\sT, \bj_\sT, \bb, \bs, \bxi, P_\sC, P_\sK}_\sC \nonumber \\
&=& \frac{1}{2^N} \sum_{\bi, \bj, \bb, \bs, P_\sC, P_\sK, \bbm, \bbm'
\mid \bT = 1}
\pr^\sym\left( \bi, \bb, \bs, P_\sC, P_\sK \right) \cdot
\ket{\bk}_\sA \bra{\bk}_\sA \otimes
\ket{\bk}_\sB \bra{\bk}_\sB \nonumber \\
&\otimes& (-1)^{(\bi \oplus \bj) \cdot (\bbm \oplus \bbm')}
\left[ \ket{E'_{\bi \oplus \bbm, \bj \oplus \bbm}}_\bb
\bra{E'_{\bi \oplus \bbm', \bj \oplus \bbm'}}_\bb \otimes
\ket{\bbm}_\sM \bra{\bbm'}_\sM \right]_\sE \nonumber \\
&\otimes& \ket{\bi_\sT, \bj_\sT, \bb, \bs, \bxi, P_\sC, P_\sK}_\sC
\bra{\bi_\sT, \bj_\sT, \bb, \bs, \bxi, P_\sC, P_\sK}_\sC.
\end{eqnarray}

We define a unitary operator $V$: given the states $\ket{\bbm}_\sM$
(held by Eve) and $\ket{\bs, P_\sC, P_\sK}_\sC$ (public information
sent over the classical channel),
the unitary operator $V$ performs a XOR of all
states in subsystems $\sA$, $\sB$, and $\sC$ with the relevant parts
of $\bbm$. Namely, if we define $\bbm_\sI$ and $\bbm_\sT$ as the
INFO bits and the TEST bits of $\bbm$, respectively
(which also depend on $\bs$, of course), and if we define
$\bk_\bbm \triangleq \bbm_\sI P_\sK^\transp$ and
$\bxi_\bbm \triangleq \bbm_\sI P_\sC^\transp$
(which also depend on $P_\sC, P_\sK$), then
\begin{eqnarray}
& V& \ket{\bk}_\sA \ket{\bk}_\sB
\left[ \ket{E} \ket{\bbm}_\sM \right]_\sE
\ket{\bi_\sT, \bj_\sT, \bb, \bs, \bxi, P_\sC, P_\sK}_\sC \nonumber \\
&=& \ket{\bk \oplus \bk_\bbm}_\sA \ket{\bk \oplus \bk_\bbm}_\sB
\left[ \ket{E} \ket{\bbm}_\sM \right]_\sE \nonumber \\
&\otimes& \ket{\bi_\sT \oplus \bbm_\sT, \bj_\sT \oplus \bbm_\sT, \bb,
\bs, \bxi \oplus \bxi_\bbm, P_\sC, P_\sK}_\sC.
\end{eqnarray}
Therefore (also using the fact that
$\pr^\sym\left( \bi, \bb, \bs, P_\sC, P_\sK \right)
= \pr\left( \bi, \bb, \bs, P_\sC, P_\sK \right)
= \pr\left( \bi \oplus \bbm, \bb, \bs, P_\sC, P_\sK \right)$),
\begin{eqnarray}
V \sigma_{\mathrm{ABE}}^\sym V^\dagger &=& \frac{1}{2^N}
\sum_{\bi, \bj, \bb, \bs, P_\sC, P_\sK, \bbm, \bbm' \mid \bT = 1}
\pr\left( \bi \oplus \bbm, \bb, \bs, P_\sC, P_\sK \right) \nonumber \\
&\cdot& \ket{\bk \oplus \bk_\bbm}_\sA \bra{\bk \oplus \bk_{\bbm'}}_\sA
\otimes \ket{\bk \oplus \bk_\bbm}_\sB \bra{\bk \oplus \bk_{\bbm'}}_\sB
\nonumber \\
&\otimes& (-1)^{(\bi \oplus \bj) \cdot (\bbm \oplus \bbm')}
\left[ \ket{E'_{\bi \oplus \bbm, \bj \oplus \bbm}}_\bb
\bra{E'_{\bi \oplus \bbm', \bj \oplus \bbm'}}_\bb \otimes
\ket{\bbm}_\sM \bra{\bbm'}_\sM \right]_\sE \nonumber \\
&\otimes& \ket{\bi_\sT \oplus \bbm_\sT, \bj_\sT \oplus \bbm_\sT,
\bb, \bs, \bxi \oplus \bxi_\bbm, P_\sC, P_\sK}_\sC \nonumber \\
&&\bra{\bi_\sT \oplus \bbm'_\sT, \bj_\sT \oplus \bbm'_\sT, \bb, \bs,
\bxi \oplus \bxi_{\bbm'}, P_\sC, P_\sK}_\sC.
\end{eqnarray}
Tracing out subsystem $\sM$ (which is part of Eve's probe), we get
\begin{eqnarray}
\tr_\sM \left[ V \sigma_{\mathrm{ABE}}^\sym V^\dagger \right]
&=& \frac{1}{2^N} \sum_{\bi, \bj, \bb, \bs, P_\sC, P_\sK, \bbm
\mid \bT = 1}
\pr\left( \bi \oplus \bbm, \bb, \bs, P_\sC, P_\sK \right) \nonumber \\
&\cdot& \ket{\bk \oplus \bk_\bbm}_\sA \bra{\bk \oplus \bk_\bbm}_\sA
\otimes \ket{\bk \oplus \bk_\bbm}_\sB \bra{\bk \oplus \bk_\bbm}_\sB
\nonumber \\
&\otimes& \left[ \ket{E'_{\bi \oplus \bbm, \bj \oplus \bbm}}_\bb
\bra{E'_{\bi \oplus \bbm, \bj \oplus \bbm}}_\bb \right]_\sE
\nonumber \\
&\otimes& \ket{\bi_\sT \oplus \bbm_\sT, \bj_\sT \oplus \bbm_\sT,
\bb, \bs, \bxi \oplus \bxi_\bbm, P_\sC, P_\sK}_\sC \nonumber \\
&&\bra{\bi_\sT \oplus \bbm_\sT, \bj_\sT \oplus \bbm_\sT, \bb, \bs,
\bxi \oplus \bxi_\bbm, P_\sC, P_\sK}_\sC.
\end{eqnarray}
Now we change the indexes:
we denote $\bi' \triangleq \bi \oplus \bbm$ and
$\bj' \triangleq \bj \oplus \bbm$ (for a fixed $\bbm$),
and according to the definitions, we immediately get
$\bi'_\sT = \bi_\sT \oplus \bbm_\sT$, ~
$\bj'_\sT = \bj_\sT \oplus \bbm_\sT$,~
$\bk' \triangleq \bi'_\sI P_\sK^\transp = (\bi_\sI \oplus \bbm_\sI)
P_\sK^\transp = \bk \oplus \bk_\bbm$, and similarly
$\bxi' \triangleq \bi'_\sI P_\sC^\transp = \bxi \oplus \bxi_\bbm$.
We also notice that
$\bT$ gets $(\bi_\sT \oplus \bj_\sT, \bb_\sT, \bs)$ as inputs,
and that they all stay the same
(because $\bi'_\sT \oplus \bj'_\sT =
(\bi_\sT \oplus \bbm_\sT) \oplus (\bj_\sT \oplus \bbm_\sT) =
\bi_\sT \oplus \bj_\sT$), and therefore the change of indexes does not
impact the condition $\bT = 1$. Therefore,
\begin{eqnarray}
\tr_\sM \left[ V \sigma_{\mathrm{ABE}}^\sym V^\dagger \right]
&=& \frac{1}{2^N} \sum_{\bi', \bj', \bb, \bs, P_\sC, P_\sK, \bbm
\mid \bT = 1}
\pr\left( \bi', \bb, \bs, P_\sC, P_\sK \right) \cdot
\ket{\bk'}_\sA \bra{\bk'}_\sA \otimes
\ket{\bk'}_\sB \bra{\bk'}_\sB \nonumber \\
&\otimes& \left[ \ket{E'_{\bi', \bj'}}_\bb
\bra{E'_{\bi', \bj'}}_\bb \right]_\sE \nonumber \\
&\otimes& \ket{\bi'_\sT, \bj'_\sT, \bb, \bs, \bxi', P_\sC, P_\sK}_\sC
\bra{\bi'_\sT, \bj'_\sT, \bb, \bs, \bxi', P_\sC, P_\sK}_\sC.
\end{eqnarray}
Using the relation $\left( \rho_{\bi_\sI,\bj_\sI}^{\bb, \bs}
\right)_\sE = \frac{\left[ \ket{E'_{\bi, \bj}}_\bb
\bra{E'_{\bi, \bj}}_\bb \right]_\sE}
{\pr(\bj \mid \bi, \bb, \bs, P_\sC, P_\sK)}$
from Equation~\eqref{rho_Eve_ij}, we get
\begin{eqnarray}
\tr_\sM \left[ V \sigma_{\mathrm{ABE}}^\sym V^\dagger \right]
&=& \sum_{\bi', \bj', \bb, \bs, P_\sC, P_\sK \mid \bT = 1}
\pr\left( \bi', \bb, \bs, P_\sC, P_\sK \right) \cdot
\ket{\bk'}_\sA \bra{\bk'}_\sA \otimes
\ket{\bk'}_\sB \bra{\bk'}_\sB \nonumber \\
&\otimes& \pr(\bj' \mid \bi', \bb, \bs, P_\sC, P_\sK)
\left( \rho_{\bi'_\sI,\bj'_\sI}^{\bb, \bs} \right)_\sE \nonumber \\
&\otimes& \ket{\bi'_\sT, \bj'_\sT, \bb, \bs, \bxi', P_\sC, P_\sK}_\sC
\bra{\bi'_\sT, \bj'_\sT, \bb, \bs, \bxi', P_\sC, P_\sK}_\sC
\nonumber \\
&=& \sum_{\bi', \bj', \bb, \bs, P_\sC, P_\sK \mid \bT = 1}
\pr\left( \bi', \bj', \bb, \bs, P_\sC, P_\sK \right) \cdot
\ket{\bk'}_\sA \bra{\bk'}_\sA \otimes
\ket{\bk'}_\sB \bra{\bk'}_\sB \nonumber \\
&\otimes& \left( \rho_{\bi'_\sI,\bj'_\sI}^{\bb, \bs} \right)_\sE
\otimes \ket{\bi'_\sT, \bj'_\sT, \bb, \bs, \bxi', P_\sC, P_\sK}_\sC
\bra{\bi'_\sT, \bj'_\sT, \bb, \bs, \bxi', P_\sC, P_\sK}_\sC
\nonumber \\
&=& \sigma_{\mathrm{ABE}},
\end{eqnarray}
where the final equality is according to the definition
(Equation~\eqref{eq_sigma_ABE_def}).

To sum up, we get the result
$\sigma_{\mathrm{ABE}} = \tr_\sM \left[ V
\sigma_{\mathrm{ABE}}^\sym V^\dagger \right]$;
a very similar proof gives us the result
$\rho_\sU \otimes \sigma_\sE = \tr_\sM \left[ V
\left( \rho_\sU \otimes \sigma_\sE^\sym \right) V^\dagger \right]$.
Since the trace distance is preserved under
unitary operators~\cite[Equation~(9.21) in page~404]{NCBook}
and does not increase under partial trace~\cite[Theorem~9.2 and
page~407]{NCBook}, we get
\begin{eqnarray}
\frac{1}{2} \tr \left| \sigma_{\mathrm{ABE}} -
\rho_\sU \otimes \sigma_\sE \right|
&=& \frac{1}{2} \tr \left| \tr_\sM \left[ V
\left( \sigma_{\mathrm{ABE}}^\sym -
\rho_\sU \otimes \sigma_\sE^\sym \right) V^\dagger \right] \right|
\nonumber \\
&\le& \frac{1}{2} \tr \left| V \left( \sigma_{\mathrm{ABE}}^\sym -
\rho_\sU \otimes \sigma_\sE^\sym \right) V^\dagger \right| \nonumber \\
&=& \frac{1}{2} \tr \left| \sigma_{\mathrm{ABE}}^\sym -
\rho_\sU \otimes \sigma_\sE^\sym \right|.
\end{eqnarray}
\end{proof}

\begin{proposition}\label{prop_rel}
For any attack,
\begin{equation}
\frac{1}{2} \tr \left| \rho_{\mathrm{ABE}}
- \sigma_{\mathrm{ABE}} \right|
\le \pr\left[ \left( \bk \ne \bk^\sB \right)
\wedge \left( \bT = 1 \right) \right],
\end{equation}
where $\rho_{\mathrm{ABE}}$ and $\sigma_{\mathrm{ABE}}$ were
defined in Equations~\eqref{eq_rho_ABE_def}
and~\eqref{eq_sigma_ABE_def}, respectively;
$\bk$ is the final key computed by Alice;
and $\bk^\sB$ is the final key computed by Bob.
\end{proposition}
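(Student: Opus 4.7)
The plan is to exploit the fact that $\rho_{\mathrm{ABE}}$ and $\sigma_{\mathrm{ABE}}$ are sums over the \emph{same} index set (all tuples $(\bi,\bj,\bb,\bs,P_\sC,P_\sK)$ with $\bT=1$) with the \emph{same} probability weights and the \emph{same} factors on subsystems $\sA$, $\sE$, and $\sC$; the only place they differ is on subsystem $\sB$, where $\rho_{\mathrm{ABE}}$ has Bob's computed key $\ket{\bk^\sB}\bra{\bk^\sB}_\sB$ and $\sigma_{\mathrm{ABE}}$ has Alice's key $\ket{\bk}\bra{\bk}_\sB$. So I would first write
\begin{equation*}
\rho_{\mathrm{ABE}} - \sigma_{\mathrm{ABE}}
= \sum_{\bi,\bj,\bb,\bs,P_\sC,P_\sK\mid \bT=1}
\pr(\bi,\bj,\bb,\bs,P_\sC,P_\sK) \cdot
\ket{\bk}\bra{\bk}_\sA \otimes \bigl[\ket{\bk^\sB}\bra{\bk^\sB}_\sB - \ket{\bk}\bra{\bk}_\sB\bigr] \otimes \bigl(\rho_{\bi_\sI,\bj_\sI}^{\bb,\bs}\bigr)_\sE \otimes \ket{\cdot}\bra{\cdot}_\sC,
\end{equation*}
and immediately drop all terms with $\bk = \bk^\sB$, since the $\sB$-factor vanishes on those.

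Next I would apply the triangle inequality for the trace norm, $\tr|\sum_i A_i|\le \sum_i \tr|A_i|$, term by term over the surviving summands (those with $\bk\ne \bk^\sB$). Each surviving summand is a tensor product of a positive scalar $\pr(\bi,\bj,\bb,\bs,P_\sC,P_\sK)$, the rank-one projector $\ket{\bk}\bra{\bk}_\sA$, the Hermitian operator $\ket{\bk^\sB}\bra{\bk^\sB}_\sB - \ket{\bk}\bra{\bk}_\sB$, the normalized density operator $\bigl(\rho_{\bi_\sI,\bj_\sI}^{\bb,\bs}\bigr)_\sE$, and a rank-one classical projector on $\sC$. The trace norm of a tensor product factorizes, so each such term contributes
\begin{equation*}
\pr(\bi,\bj,\bb,\bs,P_\sC,P_\sK)\cdot 1 \cdot \tr\bigl|\ket{\bk^\sB}\bra{\bk^\sB}_\sB - \ket{\bk}\bra{\bk}_\sB\bigr| \cdot 1 \cdot 1 = 2\,\pr(\bi,\bj,\bb,\bs,P_\sC,P_\sK),
\end{equation*}
since for $\bk\ne\bk^\sB$ the two key-projectors are orthogonal and their difference has eigenvalues $\pm 1$, giving trace norm $2$.

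Dividing by $2$ and summing, I obtain
\begin{equation*}
\frac{1}{2}\tr\bigl|\rho_{\mathrm{ABE}}-\sigma_{\mathrm{ABE}}\bigr|
\le \sum_{\bi,\bj,\bb,\bs,P_\sC,P_\sK\mid \bT=1,\ \bk\ne\bk^\sB}
\pr(\bi,\bj,\bb,\bs,P_\sC,P_\sK)
= \pr\bigl[(\bk\ne \bk^\sB)\wedge (\bT=1)\bigr],
\end{equation*}
which is the desired inequality. There is no real obstacle here: the statement is essentially the standard fact that a classical-quantum state with the same ``quantum part'' but different classical labels has trace distance equal to the probability of label disagreement. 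The only care required is to verify that $\bk$ and $\bk^\sB$ are indeed the only subsystems on which the two states differ, and to keep track of the tensor-product factorization of the trace norm so that the neutral factors (the $\sA$, $\sE$, and $\sC$ parts) contribute a factor of $1$ rather than spoiling the bound.
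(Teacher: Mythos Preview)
Your proof is correct and follows essentially the same approach as the paper's. The only cosmetic difference is that the paper, after writing the difference $\rho_{\mathrm{ABE}}-\sigma_{\mathrm{ABE}}$ in the same form you do, factors out the scalar $\pr[(\bk\ne\bk^\sB)\wedge(\bT=1)]$ and views the remaining sum as the difference of two normalized density matrices (then uses that the trace distance between normalized states is at most~$1$), whereas you apply the triangle inequality term by term and compute $\tr\bigl|\ket{\bk^\sB}\bra{\bk^\sB}-\ket{\bk}\bra{\bk}\bigr|=2$ directly; these are two phrasings of the same observation.
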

\begin{proof}
\begin{eqnarray}
\rho_{\mathrm{ABE}} - \sigma_{\mathrm{ABE}} &=&
\sum_{\bi, \bj, \bb, \bs, P_\sC, P_\sK \mid \bT = 1}
\pr\left( \bi, \bj, \bb, \bs, P_\sC, P_\sK \right) \nonumber \\
&\cdot& \ket{\bk}_\sA \bra{\bk}_\sA
\otimes \left[ \ket{\bk^\sB}_\sB \bra{\bk^\sB}_\sB -
\ket{\bk}_\sB \bra{\bk}_\sB \right]
\otimes \left( \rho_{\bi_\sI, \bj_\sI}^{\bb, \bs} \right)_\sE
\nonumber \\
&\otimes& \ket{\bi_\sT, \bj_\sT, \bb, \bs, \bxi, P_\sC, P_\sK}_\sC
\bra{\bi_\sT, \bj_\sT, \bb, \bs, \bxi, P_\sC, P_\sK}_\sC \nonumber \\
&=& \pr\left[ \left( \bk \ne \bk^\sB \right)
\wedge \left( \bT = 1 \right) \right] \nonumber \\
&\cdot& \sum_{\bi, \bj, \bb, \bs, P_\sC, P_\sK}
\pr\left[ \bi, \bj, \bb, \bs, P_\sC, P_\sK \mid
\left( \bk \ne \bk^\sB \right)
\wedge \left( \bT = 1 \right) \right] \nonumber \\
&\cdot& \ket{\bk}_\sA \bra{\bk}_\sA
\otimes \left[ \ket{\bk^\sB}_\sB \bra{\bk^\sB}_\sB -
\ket{\bk}_\sB \bra{\bk}_\sB \right]
\otimes \left( \rho_{\bi_\sI, \bj_\sI}^{\bb, \bs} \right)_\sE
\nonumber \\
&\otimes& \ket{\bi_\sT, \bj_\sT, \bb, \bs, \bxi, P_\sC, P_\sK}_\sC
\bra{\bi_\sT, \bj_\sT, \bb, \bs, \bxi, P_\sC, P_\sK}_\sC.
\end{eqnarray}
The trace distance between any two normalized
states is bounded by $1$. Therefore,
\begin{equation}
\frac{1}{2} \tr \left| \rho_{\mathrm{ABE}}
- \sigma_{\mathrm{ABE}} \right|
\le \pr\left[ \left( \bk \ne \bk^\sB \right)
\wedge \left( \bT = 1 \right) \right].
\end{equation}
\end{proof}

\begin{corollary}\label{corollary_compos}
For any attack
and any integer $0 \le t \le \frac{n}{2}$,
\begin{align}
&\frac{1}{2} \tr \left| \rho_{\mathrm{ABE}}
- \rho_\sU \otimes \rho_\sE \right| \nonumber \\
&\le \pr\left[ \left( \bk \ne \bk^\sB \right)
\wedge \left( \bT = 1 \right) \right] \nonumber \\
&+ 2m \sqrt{\pr_\invb\left[ \left( \frac{|\bC_\sI|}{n} \ge
\frac{t}{n} \right) \wedge \left( \bT = 1 \right) \right]
+ 2^{n[H_2(t/n) - (n-r-m)/n]}},
\end{align}
for $\rho_{\mathrm{ABE}}$ and $\rho_\sU$
defined in Equations~\eqref{eq_rho_ABE_def}
and~\eqref{eq_rho_U_def}, respectively;
the partial trace state
$\rho_\sE \triangleq \tr_{\mathrm{AB}}
\left( \rho_{\mathrm{ABE}} \right)$;
and $H_2(x) \triangleq -x \log_2(x) - (1-x) \log_2(1-x)$.
\end{corollary}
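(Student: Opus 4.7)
The plan is to combine the three preceding results (Propositions~\ref{prop_symm_sec},~\ref{prop_symm_general}, and~\ref{prop_rel}) via a single application of the triangle inequality for the trace norm, using $\sigma_{\mathrm{ABE}}$ as an intermediate ``relay'' state between $\rho_{\mathrm{ABE}}$ and $\rho_\sU \otimes \rho_\sE$.

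Concretely, I would start from the telescoping identity
\begin{equation}
\rho_{\mathrm{ABE}} - \rho_\sU \otimes \rho_\sE
= (\rho_{\mathrm{ABE}} - \sigma_{\mathrm{ABE}})
+ (\sigma_{\mathrm{ABE}} - \rho_\sU \otimes \sigma_\sE)
+ (\rho_\sU \otimes \sigma_\sE - \rho_\sU \otimes \rho_\sE),
\end{equation}
and apply the triangle inequality so that $\tfrac{1}{2}\tr|\cdot|$ of the left-hand side is bounded by the sum of the three $\tfrac{1}{2}\tr|\cdot|$ terms on the right. The third of these terms vanishes: as already noted after Equation~\eqref{eq_sigma_ABE_def}, the states $\rho_{\mathrm{ABE}}$ and $\sigma_{\mathrm{ABE}}$ differ only in what subsystem $\sB$ records (Bob's actual key versus Alice's), so tracing out $\mathrm{AB}$ gives $\rho_\sE = \sigma_\sE$, and hence $\rho_\sU \otimes \sigma_\sE - \rho_\sU \otimes \rho_\sE = 0$.

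It then remains to bound the first and second terms separately. For the first, I would invoke Proposition~\ref{prop_rel} directly to obtain $\pr\left[(\bk \ne \bk^\sB) \wedge (\bT = 1)\right]$. For the second, I would first invoke Proposition~\ref{prop_symm_general} to pass from $\sigma_{\mathrm{ABE}},\sigma_\sE$ to their symmetrized counterparts $\sigma_{\mathrm{ABE}}^\sym,\sigma_\sE^\sym$, and then invoke Proposition~\ref{prop_symm_sec} to produce the $2m\sqrt{\cdots}$ expression involving the inverted-INFO-basis error-rate probability and the combinatorial term $2^{n[H_2(t/n) - (n-r-m)/n]}$. Summing the two resulting bounds gives exactly the inequality claimed.

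There is essentially no obstacle: the heavy lifting (the algebraic trace-distance bound for a $1$-bit key, the triangle-sum over the $m$ key bits, the symmetrization argument, and the relay from $\rho_{\mathrm{ABE}}$ to $\sigma_{\mathrm{ABE}}$) is already contained in the earlier propositions. The only technical point to verify carefully is the identity $\rho_\sE = \sigma_\sE$, which is what makes the telescoping argument work with only two effective terms rather than three; given this observation, the corollary is an assembly step.
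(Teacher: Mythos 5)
Your proposal is correct and follows essentially the same route as the paper: the paper likewise writes the triangle inequality through the intermediate state $\sigma_{\mathrm{ABE}}$ (using $\rho_\sE = \sigma_\sE$ so that only two terms appear), bounds the first term by Proposition~\ref{prop_rel}, and bounds the second by chaining Proposition~\ref{prop_symm_general} with Proposition~\ref{prop_symm_sec}. Nothing is missing.
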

\begin{proof}
Using Propositions~\ref{prop_symm_sec}, \ref{prop_symm_general},
and~\ref{prop_rel} and the fact that $\rho_\sE = \sigma_\sE$,
we get:
\begin{align}
&\frac{1}{2} \tr \left| \rho_{\mathrm{ABE}}
- \rho_\sU \otimes \rho_\sE \right| \nonumber \\
&\le \frac{1}{2} \tr \left| \rho_{\mathrm{ABE}}
- \sigma_{\mathrm{ABE}} \right|
+ \frac{1}{2} \tr \left| \sigma_{\mathrm{ABE}}
- \rho_\sU \otimes \sigma_\sE \right| \nonumber \\
&\le \frac{1}{2} \tr \left| \rho_{\mathrm{ABE}}
- \sigma_{\mathrm{ABE}} \right|
+ \frac{1}{2} \tr \left| \sigma_{\mathrm{ABE}}^\sym
- \rho_\sU \otimes \sigma_\sE^\sym \right| \nonumber \\
&\le \pr\left[ \left( \bk \ne \bk^\sB \right)
\wedge \left( \bT = 1 \right) \right] \nonumber \\
&+ 2m \sqrt{\pr_\invb\left[ \left( \frac{|\bC_\sI|}{n} \ge
\frac{t}{n} \right) \wedge \left( \bT = 1 \right) \right]
+ 2^{n[H_2(t/n) - (n-r-m)/n]}}.
\end{align}
\end{proof}

We have thus found an upper bound for the expression
$\frac{1}{2} \tr \left| \rho_{\mathrm{ABE}}
- \rho_\sU \otimes \rho_\sE \right|$.
In Section~\ref{sec_protocols} we prove this upper bound to be
exponentially small in $n$ for specific protocols.

\section{\label{sec_protocols}Full Security Proofs for Specific
Protocols}
Below we prove full security for specific important examples
of generalized BB84 protocols.

\subsection{Hoeffding's Theorem}
The rest of our security proof consists mainly of applications of
the following Theorem, proven by Hoeffding in~\cite[Section 6]{Ho63}:
\begin{theorem}[Hoeffding's Theorem]\label{theorem_hoeffding}
Let $X_1, \ldots, X_n$ be a random sample without replacement
taken from a population $c_1, \ldots, c_N$ such that
$a \le c_j \le b$ for all $1 \le j \le N$.
(That is, each $X_i$ gets the value of a random $c_j$,
such that the same $j$ is never chosen for two different
variables $X_i,X_{i'}$.)
If we denote $\overline{X} \triangleq \frac{X_1 + ... + X_n}{n}$ and
$\mu \triangleq E[\overline{X}]$ (namely, $\mu$ is the expected value
of $\overline{X}$), then:
\begin{enumerate}
\item For any $\epsilon > 0$,
\begin{equation}
\pr\left[ \overline{X} - \mu \ge \epsilon \right] \le
e^{-\frac{2n\epsilon^2}{(b - a)^2}}.
\end{equation}
\item $\mu = \frac{1}{N} \sum_{i = 1}^N c_i$: namely, the expected
value of $\overline{X}$ is the average value of the population.
\end{enumerate}
\end{theorem}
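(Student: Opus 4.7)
The plan is to break the statement into the two claims and handle them in order of difficulty. Part (2) is immediate: for sampling without replacement, each individual $X_i$ has marginal distribution uniform on $\{c_1,\ldots,c_N\}$ (the joint distribution is a uniformly random injection from $[n]$ to $[N]$, whose $i$-th coordinate is uniform on $[N]$). Thus $E[X_i] = \frac{1}{N}\sum_j c_j$, and linearity of expectation gives $\mu = E[\overline{X}] = \frac{1}{N}\sum_j c_j$. No further work is required for this part.

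For part (1) I would use the standard Chernoff--Markov approach: for any $s>0$,
\begin{equation}
\pr[\overline{X}-\mu \ge \epsilon] = \pr\bigl[e^{s(X_1+\cdots+X_n - n\mu)} \ge e^{sn\epsilon}\bigr] \le e^{-sn\epsilon} \, E\bigl[e^{s(X_1+\cdots+X_n - n\mu)}\bigr].
\end{equation}
The difficulty, and the main obstacle of the proof, is that the $X_i$ are \emph{not} independent: they are drawn without replacement, so the moment generating function of the sum does not factor. Hoeffding's key reduction (\cite{Ho63}, Theorem~4) shows that this dependence can only help us: for every continuous convex function $\phi$, one has
\begin{equation}
E\bigl[\phi(X_1+\cdots+X_n)\bigr] \le E\bigl[\phi(Y_1+\cdots+Y_n)\bigr],
\end{equation}
where $Y_1,\ldots,Y_n$ are i.i.d.\ samples drawn \emph{with} replacement from the same population. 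I would prove this reduction by writing the without-replacement sum as a conditional expectation of the with-replacement sum given the multiset of values chosen, and applying Jensen's inequality coordinatewise.

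Once that reduction is established, the rest is routine. Applying it with $\phi(x) = e^{s(x - n\mu)}$ (which is convex) reduces the problem to the i.i.d.\ case, where the m.g.f.\ factors:
\begin{equation}
E\bigl[e^{s(Y_1+\cdots+Y_n - n\mu)}\bigr] = \prod_{i=1}^n E\bigl[e^{s(Y_i - \mu)}\bigr].
\end{equation}
Since each $Y_i - \mu$ is a centered random variable taking values in an interval of length $b-a$, Hoeffding's lemma gives $E[e^{s(Y_i-\mu)}] \le e^{s^2(b-a)^2/8}$, proved by the convexity bound $e^{sx} \le \frac{b-x}{b-a}e^{sa} + \frac{x-a}{b-a}e^{sb}$ followed by a Taylor expansion of the logarithm of the resulting expression. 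Combining everything yields $\pr[\overline{X}-\mu \ge \epsilon] \le e^{-sn\epsilon + ns^2(b-a)^2/8}$, and optimizing with $s = 4\epsilon/(b-a)^2$ produces the claimed bound $e^{-2n\epsilon^2/(b-a)^2}$. The genuinely non-trivial step is the without-replacement-to-with-replacement convexity reduction; everything else is a standard Chernoff computation.
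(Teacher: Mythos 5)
The paper does not actually prove this statement: Theorem~\ref{theorem_hoeffding} is quoted from Hoeffding's 1963 paper \cite[Section~6]{Ho63} and used as a black box, so there is no in-paper proof to compare yours against. Your outline is essentially a reconstruction of Hoeffding's own argument. Part (2), via the observation that each $X_i$ is marginally uniform on the population plus linearity of expectation, is correct and complete. Part (1), via the exponential Chernoff--Markov bound, the reduction from sampling without replacement to sampling with replacement for convex test functions (Theorem~4 of \cite{Ho63}), Hoeffding's lemma $E[e^{s(Y_i-\mu)}]\le e^{s^2(b-a)^2/8}$, and optimization at $s=4\epsilon/(b-a)^2$, is the standard route, and the arithmetic ($-sn\epsilon+ns^2(b-a)^2/8$ minimized to $-2n\epsilon^2/(b-a)^2$) checks out.

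The one place where your sketch is looser than it should be is the proof of the convexity reduction itself. Conditioning the with-replacement sum $Y_1+\cdots+Y_n$ on ``the multiset of values chosen'' does nothing, since the sum is a deterministic function of that multiset; and conditioning on the \emph{set of distinct} population elements drawn yields $\frac{n}{k}$ times a without-replacement sample sum of the random size $k$ of that set, which is not yet the size-$n$ without-replacement sum you need. Hoeffding's actual proof uses two applications of Jensen: first the conditional-expectation step given the distinct elements drawn, and then the exchangeability (reverse-martingale) identity $E[\frac{1}{k}\sum_{i\le k}X_i \mid \sum_{i\le n}X_i]=\frac{1}{n}\sum_{i\le n}X_i$ to pass from size-$k$ to size-$n$ samples without replacement. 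Since you invoke Theorem~4 of \cite{Ho63} by citation in any case, this does not invalidate the proposal; but if you intend the reduction to be self-contained, that second step is needed and your one-line description of the coupling is not literally correct as stated.
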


The following Corollary of Hoeffding's theorem
is useful for proving security:
\begin{corollary}\label{corollary_hoeffding}
Given an $(n + n')$-bit string $\bc = c_1 \ldots c_{n + n'}$,
assume we randomly and uniformly choose
a partition of $\bc$ into two substrings,
$\bc_\sA$ of length $n$ and $\bc_\sB$ of length $n'$.
(Formally, this is a random partition of the index set
$\{1, \ldots, n + n'\}$ into two disjoint sets $\sA,\sB$
satisfying $|\sA| = n$, $|\sB| = n'$,
and $\sA \cup \sB = \{1, \ldots, n + n'\}$.)
Then, for any $p > 0$ and $\epsilon > 0$,
\begin{equation}
\pr\left[ \left( \frac{|\bC_\sA|}{n} \ge p + \epsilon \right)
\wedge \left( \frac{|\bC_\sB|}{n'} \le p \right) \right]
\le e^{-2 \left( \frac{n'}{n + n'} \right)^2 n \epsilon^2},
\end{equation}
where $\bC_\sA,\bC_\sB$ are random variables
whose values equal to $\bc_\sA$ and $\bc_\sB$, respectively,
and $|\bC_\sA|,|\bC_\sB|$ are the corresponding numbers
of $1$-bits in these bitstrings.
\end{corollary}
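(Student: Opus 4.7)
The plan is to recognize the random partition as sampling $n$ bits without replacement from the fixed population $c_1, \ldots, c_{n+n'}$, so that Hoeffding's theorem applies almost directly. With $\bc$ fixed, I set $w \triangleq |\bc|$ and let $X_1, \ldots, X_n$ denote the bits falling into the $\sA$-part of the partition, so that $\overline{X} \triangleq \frac{X_1 + \cdots + X_n}{n} = \frac{|\bC_\sA|}{n}$. Since each $c_j \in \{0,1\}$, I may take $a=0$ and $b=1$ in Theorem~\ref{theorem_hoeffding}, and its second part gives $\mu \triangleq E[\overline{X}] = \frac{w}{n+n'}$.

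The key step is to reduce the joint event $\left\{\frac{|\bC_\sA|}{n} \ge p+\epsilon\right\} \wedge \left\{\frac{|\bC_\sB|}{n'} \le p\right\}$ to a one-sided deviation event for $\overline{X}$. Because $|\bC_\sA| + |\bC_\sB| = w$ identically (the $1$-bits in $\bc$ are partitioned between the two substrings), the population mean can be written as the convex combination
\[
\mu = \frac{n}{n+n'} \cdot \frac{|\bC_\sA|}{n} + \frac{n'}{n+n'} \cdot \frac{|\bC_\sB|}{n'}.
\]
On the joint event in question, upper-bounding the second summand by $\frac{n'}{n+n'} p$ and rearranging gives
\[
\overline{X} - \mu \ge \frac{|\bC_\sA|}{n} - \frac{n}{n+n'} \cdot \frac{|\bC_\sA|}{n} - \frac{n'}{n+n'} p = \frac{n'}{n+n'}\left(\frac{|\bC_\sA|}{n} - p\right) \ge \frac{n'\epsilon}{n+n'}.
\]

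Finally, I apply the first part of Theorem~\ref{theorem_hoeffding} with $\epsilon' \triangleq \frac{n'\epsilon}{n+n'}$ and $(b-a)^2 = 1$ to conclude
\[
\pr\left[\left(\frac{|\bC_\sA|}{n} \ge p+\epsilon\right) \wedge \left(\frac{|\bC_\sB|}{n'} \le p\right)\right] \le \pr\left[\overline{X} - \mu \ge \epsilon'\right] \le e^{-2n(\epsilon')^2} = e^{-2\left(\frac{n'}{n+n'}\right)^2 n\epsilon^2}.
\]
There is no substantive obstacle here: the whole argument is driven by the conservation identity $|\bC_\sA| + |\bC_\sB| = w$, which lets me express $\mu$ as a weighted average of the two empirical frequencies. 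The only subtlety is to carry the factor $\frac{n'}{n+n'}$ carefully when converting the two-sided condition into the single deviation bound that Hoeffding's theorem requires, so that the resulting exponent in the final inequality matches the claimed $2\left(\frac{n'}{n+n'}\right)^2 n \epsilon^2$.
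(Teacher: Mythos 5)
Your proposal is correct and follows essentially the same route as the paper: both recognize the partition as sampling without replacement, use the identity $|\bC_\sA|+|\bC_\sB|=|\bc|$ to reduce the joint event to the one-sided deviation $\overline{X}-\mu \ge \frac{n'}{n+n'}\epsilon$, and then apply Hoeffding's theorem with $a=0$, $b=1$. Your presentation via the convex-combination form of $\mu$ is just a slightly more direct phrasing of the same algebra the paper carries out.
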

\begin{proof}
The random and uniform partition of $\bc$ into two substrings,
$\bc_\sA$ of length $n$ and $\bc_\sB$ of length $n'$,
is actually a sample of size $n$ without replacement from the
population $c_1, \ldots, c_{n + n'} \in \{0, 1\}$.
(The sampled $n$ bits are the bits of $\bc_\sA$,
while the other $n'$ bits are the bits of $\bc_\sB$.)
Therefore, we can apply Hoeffding's theorem
(Theorem~\ref{theorem_hoeffding}) to this sampling.

Let $\overline{X}$ be the average of the sample,
and let $\mu$ be the expected value of $\overline{X}$
(so, according to Theorem~\ref{theorem_hoeffding},
$\mu$ is the average value of the population), then
\begin{eqnarray}
\overline{X} &=& \frac{|\bC_\sA|}{n}, \\
\mu &=& \frac{|\bC_\sA| + |\bC_\sB|}{n + n'}.
\end{eqnarray}
The condition $\frac{|\bC_\sB|}{n'} \le p$ is thus equivalent to
$(n + n') \mu - n \overline{X} \le n' \cdot p$, which is equivalent
to $n \cdot (\overline{X} - \mu) \ge n' \cdot (\mu - p)$.
Therefore, the conditions
$\left( \frac{|\bC_\sA|}{n} \ge p + \epsilon \right)$ and
$\left( \frac{|\bC_\sB|}{n'} \le p \right)$ rewrite to
\begin{equation}
\left( \overline{X} - \mu \ge \epsilon + p - \mu \right)
\wedge \left( \frac{n}{n'} \cdot (\overline{X} - \mu)
\ge \mu - p \right).
\end{equation}
This condition implies
$\left( 1 + \frac{n}{n'} \right)(\overline{X} - \mu) \ge \epsilon$,
which is equivalent to $\overline{X} - \mu \ge
\frac{n'}{n + n'} \epsilon$.
Using Hoeffding's theorem (Theorem~\ref{theorem_hoeffding})
with $a = 0, b = 1$, we get
\begin{equation}
\pr\left[ \left( \frac{|\bC_\sA|}{n} \ge p + \epsilon \right)
\wedge \left( \frac{|\bC_\sB|}{n'} \le p \right) \right]
\le \pr\left[ \overline{X} - \mu \ge \frac{n'}{n + n'} \epsilon \right]
\le e^{-2 \left( \frac{n'}{n + n'} \right)^2 n \epsilon^2}.
\end{equation}
\end{proof}

We are allowed to use Corollary~\ref{corollary_hoeffding} for comparing
the error rates in different sets of qubits (e.g., INFO and TEST bits),
under the condition that the random and uniform sampling occurs
only {\em after} the qubits are sent by Alice and measured by Bob.
In other words, the sampling cannot affect the bases in which
the qubits are sent and measured, and it cannot affect Eve's attack.
This means, in particular, that Alice's choice of $\bs$ must be
independent of $\bi$ and kept in secret from Bob and Eve
until Bob completes his measurement.
(Some independence between $\bs$ and $\bb$ must also be assumed,
but we discuss this condition separately for each protocol.)

Similar uses of Hoeffding's theorem for proving security of QKD
are available in~\cite{BBBMR06,BGM09}.

We also use another Theorem,
proven by Hoeffding in~\cite[Section~2, Theorem~1]{Ho63}:
\begin{theorem}\label{theorem_hoeffding_indep}
Let $X_1, \ldots, X_N$ be independent random variables
with finite first and second moments,
such that $0 \le X_i \le 1$ for all $1 \le i \le N$.
If $\overline{X} \triangleq \frac{X_1 + ... + X_N}{N}$ and
$\mu \triangleq E[\overline{X}]$ is the expected value
of $\overline{X}$, then for any $\epsilon > 0$,
\begin{equation}
\pr\left[ \overline{X} - \mu \ge \epsilon \right] \le
e^{-2N\epsilon^2},
\end{equation}
and, in a similar way (see~\cite[Section~1]{Ho63}),
\begin{equation}
\pr\left[ \mu - \overline{X} \ge \epsilon \right] \le
e^{-2N\epsilon^2}.
\end{equation}
\end{theorem}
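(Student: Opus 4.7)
The plan is to run the classical Chernoff--Hoeffding argument: bound the upper tail by an exponential moment via Markov, factor that moment using independence, control each factor by Hoeffding's lemma, and finally optimize over the free parameter. Concretely, let $\mu_i \triangleq E[X_i]$, so that $\mu = \frac{1}{N}\sum_{i=1}^N \mu_i$. For any $s>0$, Markov's inequality gives
\[
\pr[\overline{X}-\mu \ge \epsilon] \;=\; \pr\bigl[e^{sN(\overline{X}-\mu)} \ge e^{sN\epsilon}\bigr] \;\le\; e^{-sN\epsilon}\, E\bigl[e^{sN(\overline{X}-\mu)}\bigr],
\]
and by independence the expectation on the right factors as $\prod_{i=1}^N E[e^{s(X_i-\mu_i)}]$.

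The technical heart of the proof---and really the only obstacle---is \emph{Hoeffding's lemma}: for any random variable $Y$ taking values in $[0,1]$ with $E[Y]=\mu_i$, one has $E[e^{s(Y-\mu_i)}] \le e^{s^2/8}$. I would establish this by first using convexity of $y\mapsto e^{sy}$ on $[0,1]$ to dominate $e^{sy} \le (1-y)+y\,e^{s}$ pointwise, taking expectations to get $E[e^{sY}]\le 1-\mu_i+\mu_i e^{s}$, and then showing that the auxiliary function $\varphi(s) \triangleq \log(1-\mu_i+\mu_i e^s)-s\mu_i$ satisfies $\varphi(0)=\varphi'(0)=0$ together with the uniform bound $\varphi''(s)\le \tfrac{1}{4}$; Taylor's theorem with remainder then yields $\varphi(s)\le s^2/8$. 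The finite-first-and-second-moment hypothesis on the $X_i$ is what guarantees that all of these expectations and derivatives are well-defined and finite.

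Combining the ingredients gives $\prod_{i=1}^N E[e^{s(X_i-\mu_i)}] \le e^{Ns^2/8}$, and hence
\[
\pr[\overline{X}-\mu\ge\epsilon] \;\le\; \exp\bigl(-sN\epsilon + Ns^2/8\bigr).
\]
Minimizing the exponent over $s>0$ at $s = 4\epsilon$ produces exactly the stated tail bound $e^{-2N\epsilon^2}$. For the reverse tail $\pr[\mu-\overline{X}\ge\epsilon]\le e^{-2N\epsilon^2}$, I would apply the same argument to the variables $Y_i \triangleq 1-X_i$, which are independent, lie in $[0,1]$, have finite first and second moments, and satisfy $\overline{Y}=1-\overline{X}$ and $E[\overline{Y}]=1-\mu$; the event $\mu-\overline{X}\ge\epsilon$ coincides with $\overline{Y}-E[\overline{Y}]\ge\epsilon$, so the first inequality applied to the $Y_i$ delivers the second inequality for the $X_i$.
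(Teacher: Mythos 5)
Your proof is correct and is precisely the classical Chernoff--Hoeffding argument (exponential Markov bound, factorization by independence, Hoeffding's lemma via convexity and the second-derivative bound $\varphi''(s)\le\frac{1}{4}$, optimization at $s=4\epsilon$, and the substitution $Y_i=1-X_i$ for the reverse tail); the paper does not prove this theorem itself but cites Hoeffding's original paper, whose proof is essentially the one you give. The only cosmetic remark is that the finite-moment hypothesis is redundant here, since boundedness of the $X_i$ in $[0,1]$ already guarantees that all the exponential moments you use are finite.
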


We will use the following Corollary of
Theorem~\ref{theorem_hoeffding_indep} for proving
the security of the ``efficient BB84'' protocol,
described in Subsection~\ref{subsec_eff_bb84}:
\begin{corollary}\label{corollary_hoeffding_indep}
Let $0 \le p \le 1$ be a parameter,
and let $\bb = b_1 \ldots b_N$ be an $N$-bit string,
such that each $b_i$ is chosen probabilistically and independently
out of $\{0, 1\}$, with $\pr(b_i = 0) = p$ and $\pr(b_i = 1) = 1 - p$.
Then:
\begin{eqnarray}
\pr\left( |\bb| \le \frac{(1 - p) N}{2} \right) &\le&
e^{-\frac{1}{2} N (1 - p)^2}, \\
\pr\left( |\overline{\bb}| \le \frac{p N}{2} \right) &\le&
e^{-\frac{1}{2} N p^2}.
\end{eqnarray}
\end{corollary}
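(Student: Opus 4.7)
The plan is to apply Theorem~\ref{theorem_hoeffding_indep} directly, once for each of the two bounds, using the lower-tail version $\pr[\mu - \overline{X} \ge \epsilon] \le e^{-2N\epsilon^2}$. Both bounds follow from the same pattern: the quantity in question has an expected value that is exactly twice the threshold, so we are bounding the probability of a deviation of size ``half the mean'' below the mean, and Hoeffding immediately yields the stated exponential.

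For the first inequality, I would set $X_i \triangleq b_i$ for $1 \le i \le N$. These are independent $\{0,1\}$-valued random variables with $\pr(X_i = 1) = 1-p$, so each has expectation $1-p$. Writing $\overline{X} \triangleq \frac{X_1 + \cdots + X_N}{N} = \frac{|\bb|}{N}$ and $\mu \triangleq E[\overline{X}] = 1-p$, the event $|\bb| \le \frac{(1-p)N}{2}$ rewrites as $\overline{X} \le \frac{1-p}{2}$, i.e., $\mu - \overline{X} \ge \frac{1-p}{2}$. Theorem~\ref{theorem_hoeffding_indep} applied with $\epsilon = \frac{1-p}{2}$ then gives
\begin{equation}
\pr\left( |\bb| \le \tfrac{(1-p)N}{2} \right) \le e^{-2N \cdot \frac{(1-p)^2}{4}} = e^{-\frac{1}{2} N (1-p)^2},
\end{equation}
which is exactly the first claim.

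For the second inequality, the argument is symmetric after exchanging the roles of $0$ and $1$. Defining $Y_i \triangleq 1 - b_i$ gives independent $\{0,1\}$-valued random variables with $E[Y_i] = p$ and $\sum_i Y_i = |\overline{\bb}|$ (the number of $0$-bits in $\bb$). Setting $\overline{Y} \triangleq \frac{|\overline{\bb}|}{N}$ and $\mu' \triangleq p$, the event $|\overline{\bb}| \le \frac{pN}{2}$ becomes $\mu' - \overline{Y} \ge \frac{p}{2}$, and Theorem~\ref{theorem_hoeffding_indep} with $\epsilon = \frac{p}{2}$ delivers the bound $e^{-\frac{1}{2} N p^2}$.

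There is essentially no obstacle: the Corollary is a direct specialization of Theorem~\ref{theorem_hoeffding_indep} to Bernoulli variables with the threshold chosen as half the mean. The only things to be careful about are (i) invoking the \emph{lower-tail} form of the Hoeffding bound rather than the upper-tail form, and (ii) matching the constants — the factor $\frac{1}{2}$ in the exponent arises precisely from $2 \cdot (1/2)^2 = 1/2$ when the deviation $\epsilon$ equals half of $1-p$ (respectively, half of $p$).
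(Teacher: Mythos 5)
Your proposal is correct and follows essentially the same route as the paper: both apply Theorem~\ref{theorem_hoeffding_indep} to the Bernoulli variables with deviation equal to half the mean, yielding the factor $2\cdot(1/2)^2 = 1/2$ in the exponent. The only cosmetic difference is that the paper obtains the second bound by applying the upper-tail inequality to $\overline{X} = |\bb|/N$ and then using $|\overline{\bb}| = N - |\bb|$, whereas you apply the lower-tail inequality to the complemented variables $Y_i = 1 - b_i$; these are the same computation phrased two ways.
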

\begin{proof}
Let us define $X_i = b_i$ for all $1 \le i \le N$.
Then $X_i$ are independent random variables
with finite first and second moments,
such that $0 \le X_i \le 1$ for all $1 \le i \le N$
and $\mu \triangleq E[\overline{X}] = 1 - p$.
Therefore, using Theorem~\ref{theorem_hoeffding_indep},
we get the two following results:
\begin{eqnarray}
\pr\left[ (1 - p) - \overline{X} \ge \frac{1 - p}{2} \right] &\le&
e^{-\frac{1}{2} N (1 - p)^2}, \\
\pr\left[ \overline{X} - (1 - p) \ge \frac{p}{2} \right] &\le&
e^{-\frac{1}{2} N p^2}.
\end{eqnarray}

We notice that $\overline{X} = \frac{|\bb|}{N}
= 1 - \frac{|\overline{\bb}|}{N}$.
Substituting this result, we get
\begin{eqnarray}
\pr\left[ -\frac{|\bb|}{N} \ge -\frac{1 - p}{2} \right]
&\le& e^{-\frac{1}{2} N (1 - p)^2}, \\
\pr\left[ 1 - \frac{|\overline{\bb}|}{N} - 1 \ge -\frac{p}{2} \right]
&\le& e^{-\frac{1}{2} N p^2},
\end{eqnarray}
and, therefore,
\begin{eqnarray}
\pr\left[ |\bb| \le \frac{(1 - p) N}{2} \right] &\le&
e^{-\frac{1}{2} N (1 - p)^2}, \\
\pr\left[ |\overline{\bb}| \le \frac{p N}{2} \right] &\le&
e^{-\frac{1}{2} N p^2}.
\end{eqnarray}
\end{proof}

\subsection{\label{subsec_bb84_info_z}The BB84-INFO-$z$ Protocol}
In the BB84-INFO-$z$ protocol, all INFO bits are sent by Alice
in the $z$ basis, while the TEST bits are sent in both the $z$
and the $x$ bases.
This means that $\bb$ and $\bs$ together define a random partition
of the set of indexes $\lbrace 1, 2, \ldots, N \rbrace$
into three disjoint sets:
\begin{itemize}
\item $\sI$ (INFO bits, where $s_j = 1$ and $b_j = 0$) of size $n$;
\item $\sT_\sZ$ (TEST-Z bits, where $s_j = 0$ and $b_j = 0$)
of size $n_z$; and
\item $\sT_\sX$ (TEST-X bits, where $s_j = 0$ and $b_j = 1$)
of size $n_x$.
\end{itemize}

Formally, Alice and Bob agree on parameters $n, n_z, n_x$
(such that $N = n + n_z + n_x$),
and we choose $B = \{\bb \in \mathbf{F}^{N}_2 \mid |\bb| = n_x\}$
and $S_\bb = \{\bs \in \mathbf{F}^{N}_2 \mid
(|\bs| = n) \wedge (|\bs \oplus \bb| = n + n_x)\}$
(namely, $\bs \in S_\bb$ if it consists of $n$ 1-bits
that do not overlap with the $n_x$ 1-bits of $\bb$)
for all $\bb \in B$. The probability distributions $\pr(\bb)$ and
$\pr(\bs \mid \bb)$ are all uniform---namely, $\pr(\bb, \bs)$
is identical for all $\bb \in B$ and $\bs \in S_\bb$.

Alice and Bob also agree on error rate thresholds, $p_{a,z}$
and $p_{a,x}$ (for the TEST-Z and TEST-X bits, respectively).
The testing function $T$ is defined as follows:
\begin{equation}
T(\bi_\sT \oplus \bj_\sT, \bb_\sT, \bs) = 1 ~ \Leftrightarrow ~
\left( |\bi_{\sT_\sZ} \oplus \bj_{\sT_\sZ}| \le n_z \cdot p_{a,z}
\right) \wedge
\left( |\bi_{\sT_\sX} \oplus \bj_{\sT_\sX}| \le n_x \cdot p_{a,x}
\right).
\end{equation}
Namely, the test passes if and only if
the error rate on the TEST-Z bits is at most $p_{a,z}$ {\em and}
the error rate on the TEST-X bits is at most $p_{a,x}$.

From Corollary~\ref{corollary_compos}
we get the following bound for any integer $0 \le t \le \frac{n}{2}$:
\begin{align}
&\frac{1}{2} \tr \left| \rho_{\mathrm{ABE}}
- \rho_\sU \otimes \rho_\sE \right| \nonumber \\
&\le \pr\left[ \left( \bk \ne \bk^\sB \right)
\wedge \left( \bT = 1 \right) \right] \nonumber \\
&+ 2m \sqrt{\pr_\invb\left[ \left( \frac{|\bC_\sI|}{n} \ge
\frac{t}{n} \right) \wedge \left( \bT = 1 \right) \right]
+ 2^{n[H_2(t/n) - (n-r-m)/n]}}.
\end{align}
Below we prove the two probabilities
in the right-hand-side to be exponentially small in $n$:

\begin{theorem}\label{thmsecurity1}
For any $n, n_z, n_x > 0$, $p_{a,z}, p_{a,x} > 0$,
and $\epsilon_\secur > 0$
such that $t \triangleq n (p_{a,x} + \epsilon_\secur)$ is an integer
and $p_{a,x} + \epsilon_\secur \le \frac{1}{2}$,
it holds for the BB84-INFO-$z$ protocol that
\begin{equation}\label{EveInfoBound2}
\pr_\invb\left[ \left( \frac{|\bC_\sI|}{n} \ge \frac{t}{n}
\right) \wedge \left( \bT = 1 \right) \right]
\le e^{-2 \left( \frac{n_x}{n + n_x} \right)^2 n \epsilon_\secur^2}.
\end{equation}
\end{theorem}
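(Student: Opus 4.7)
The plan is to reduce Theorem~\ref{thmsecurity1} to a direct application of the sampling-without-replacement bound in Corollary~\ref{corollary_hoeffding}. First I would weaken the test event: since $\bT=1$ implies (in particular) $|\bC_{\sT_\sX}| \le n_x \cdot p_{a,x}$, it suffices to bound
\begin{equation*}
\pr_\invb\left[\left(\tfrac{|\bC_\sI|}{n} \ge p_{a,x} + \epsilon_\secur\right) \wedge \left(\tfrac{|\bC_{\sT_\sX}|}{n_x} \le p_{a,x}\right)\right].
\end{equation*}
The key structural observation is that in BB84-INFO-$z$ we have $\bb_\sI = \mathbf{0}$, so in the hypothetical ``inverted-INFO-basis'' protocol $\bb^0_\sI = \overline{\bb_\sI} = \mathbf{1}$; meanwhile $\bb^0_{\sT_\sX} = \bb_{\sT_\sX} = \mathbf{1}$ (since $\bs_{\sT_\sX} = \mathbf{0}$). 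Thus in the hypothetical protocol the INFO positions and the TEST-X positions together form $n+n_x$ qubits, all transmitted in the $x$ basis, while the remaining $n_z$ qubits (TEST-Z) are sent in the $z$ basis.

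Next I would argue that, once $\bb^0$ is fixed, the internal partition of the $x$-basis block into $\sI$ (size $n$) and $\sT_\sX$ (size $n_x$) is uniformly random and entirely independent of Eve's attack. Indeed, Alice uses only $\bb^0$ (not $\bs$) in the quantum transmission, so Eve's unitary $U$ can depend only on $\bb^0$, not on $\bs$; and the map $(\bb,\bs)\mapsto\bb^0 = \bb \oplus \bs$ is $\binom{n+n_x}{n}$-to-one onto each admissible $\bb^0$, with the original distribution on $(\bb,\bs)$ uniform. Therefore, conditioning on $\bb^0$ and on everything else (Alice's bits $\bi$, Bob's measured bits $\bj$, Eve's attack), the error string $\bc_{\sI\cup\sT_\sX} = \bi_{\sI\cup\sT_\sX} \oplus \bj_{\sI\cup\sT_\sX}$ on the $x$-basis block is a fixed $(n+n_x)$-bit word, and the assignment of indices to $\sI$ versus $\sT_\sX$ is a uniform random partition.

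This is exactly the setting of Corollary~\ref{corollary_hoeffding} with population size $n+n_x$, sample size $n$, complementary size $n_x$, threshold $p = p_{a,x}$, and slack $\epsilon = \epsilon_\secur$. Applying it conditionally gives
\begin{equation*}
\pr_\invb\left[\left(\tfrac{|\bC_\sI|}{n} \ge p_{a,x}+\epsilon_\secur\right)\wedge\left(\tfrac{|\bC_{\sT_\sX}|}{n_x} \le p_{a,x}\right)\ \Big|\ \cdots\right] \le e^{-2(n_x/(n+n_x))^2 n \epsilon_\secur^2},
\end{equation*}
and since this bound does not depend on the conditioning, averaging preserves it, yielding Equation~\eqref{EveInfoBound2}.

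The main obstacle to making this rigorous is the independence and uniformity claim about the partition: one has to track precisely what Eve could have used when choosing $U$ (only $\bb^0$, announced later) versus what remains secret through the quantum phase (the specific $\bs$, which selects INFO vs.~TEST-X within the $x$-basis block). Once this is pinned down using the $\binom{n+n_x}{n}$-to-one structure of $(\bb,\bs)\mapsto\bb^0$ together with the uniformity of $\pr(\bb,\bs)$, the Hoeffding sampling bound applies as a black box and the rest is bookkeeping.
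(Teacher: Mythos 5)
Your proposal is correct and follows essentially the same route as the paper: weaken the test event to the TEST-X condition alone, observe that in the inverted-INFO-basis protocol the INFO and TEST-X bits form a single block of $n+n_x$ qubits all sent in the $x$ basis, and apply Corollary~\ref{corollary_hoeffding} to the uniform sampling of the $n$ INFO positions from that block. The extra justification you give for the uniformity and attack-independence of the partition (via the $\binom{n+n_x}{n}$-to-one structure of $(\bb,\bs)\mapsto\bb^0$) is a more explicit version of what the paper states in one line.
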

\begin{proof}
Because $\frac{t}{n} \triangleq p_{a,x} + \epsilon_\secur$,
it holds that
\begin{eqnarray}
&&\pr_\invb\left[ \left( \frac{|\bC_\sI|}{n} \ge \frac{t}{n} \right)
\wedge \left( \bT = 1 \right) \right] \nonumber \\
&=& \pr_\invb\left[ \left( \frac{|\bC_\sI|}{n} \ge
p_{a,x} + \epsilon_\secur \right)
\wedge \left( \frac{|\bC_{\sT_\sZ}|}{n_z} \le p_{a,z} \right)
\wedge \left( \frac{|\bC_{\sT_\sX}|}{n_x} \le p_{a,x} \right) \right]
\nonumber \\
&\le& \pr_\invb\left[ \left( \frac{|\bC_\sI|}{n} \ge
p_{a,x} + \epsilon_\secur\right) \wedge
\left( \frac{|\bC_{\sT_\sX}|}{n_x} \le p_{a,x} \right) \right].
\label{PcPkInequality}
\end{eqnarray}

In the hypothetical ``inverted-INFO-basis'' protocol,
the INFO and TEST-X bits are sent and measured in the $x$ basis,
while the TEST-Z bits are sent and measured in the $z$ basis.
Therefore, the random and uniform sampling of the $n$ INFO bits
out of the $n + n_x$ bits sent in the $x$ basis
(assuming that the TEST-Z bits have already been chosen)
does not affect the bases in the hypothetical protocol.
This means that we can apply Corollary~\ref{corollary_hoeffding}
to this sampling, and we get
\begin{equation}
\pr_\invb\left[ \left( \frac{|\bC_\sI|}{n} \ge
p_{a,x} + \epsilon_\secur \right) \wedge
\left( \frac{|\bC_{\sT_\sX}|}{n_x} \le p_{a,x} \right) \right]
\le e^{-2 \left( \frac{n_x}{n + n_x} \right)^2
n \epsilon_\secur^2}.
\end{equation}
\end{proof}

\begin{theorem}\label{thmsecurity2}
For any $n, n_z, n_x > 0$, $p_{a,z}, p_{a,x} > 0$,
and $\epsilon_\rel > 0$
such that $t_\rel \triangleq n (p_{a,z} + \epsilon_\rel)$ is an integer
and $p_{a,z} + \epsilon_\rel \le \frac{1}{2}$,
it holds for the BB84-INFO-$z$ protocol that
\begin{equation}
\pr\left[ \left( \bk \ne \bk^\sB \right)
\wedge \left( \bT = 1 \right) \right]
\le e^{-2 \left( \frac{n_z}{n + n_z}\right)^2 n \epsilon_\rel^2}
+ 2^{n[H_2(p_{a,z} + \epsilon_\rel) - r/n]},
\end{equation}
where $H_2(x) \triangleq -x \log_2(x) - (1-x) \log_2(1-x)$.
\end{theorem}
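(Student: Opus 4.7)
The plan is to bound the reliability failure probability by splitting on whether the INFO error rate is large or small, and then applying two distinct tools for the two cases: Hoeffding's inequality (via Corollary~\ref{corollary_hoeffding}) for the large-error case, and the random-code decoding bound (Corollary~\ref{corollary_ecc_decoding}) for the small-error case. First I would observe that by Lemma~\ref{lemma_ecc_indep_coset}, Bob's nearest-word-in-code-coset decoder outputs $\bi_\sI$ (and hence $\bk^\sB = \bk$) whenever the codeword in $\sC$ nearest to $\bc_\sI \triangleq \bi_\sI \oplus \bj_\sI$ is uniquely $\mathbf{0}$, so $\bk \ne \bk^\sB$ forces the decoder to fail on the error word $\bc_\sI$.

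Writing $t_\rel \triangleq n(p_{a,z} + \epsilon_\rel)$, I then split
\begin{align*}
\pr[(\bk \ne \bk^\sB) \wedge (\bT = 1)]
&\le \pr\!\left[\left(\tfrac{|\bC_\sI|}{n} \ge \tfrac{t_\rel}{n}\right) \wedge (\bT = 1)\right] \\
&\quad + \pr\!\left[(\bk \ne \bk^\sB) \wedge \left(\tfrac{|\bC_\sI|}{n} < \tfrac{t_\rel}{n}\right)\right].
\end{align*}
For the first summand, the event $\bT = 1$ entails $|\bC_{\sT_\sZ}|/n_z \le p_{a,z}$. Since both the INFO positions $\sI$ and the TEST-Z positions $\sT_\sZ$ are sent and measured in the $z$ basis, conditioning on the TEST-X positions fixes the set of $z$-basis indices (of size $n + n_z$), and the partition of this set into $\sI$ and $\sT_\sZ$ is uniform and independent of Eve's attack. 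Applying Corollary~\ref{corollary_hoeffding} with $p = p_{a,z}$, $\epsilon = \epsilon_\rel$, and with the INFO bits in the role of $\bc_\sA$ and the TEST-Z bits in the role of $\bc_\sB$, yields the bound $e^{-2(n_z/(n+n_z))^2 n \epsilon_\rel^2}$.

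For the second summand, I would condition on any error word $\bc_\sI$ with $|\bc_\sI| \le t_\rel$. Because Alice draws $P_\sC$ uniformly at random, keeps it secret, and only reveals it after Eve's attack is complete, the matrix $P_\sC$ is independent of $\bc_\sI$, so I can directly invoke Corollary~\ref{corollary_ecc_decoding} with parameters $n$, $k = n - r$, and $t = t_\rel$: the probability that the random code cannot correct $\bc_\sI$ is at most $2^{n[H_2(t_\rel/n) - r/n]}$. Averaging over $\bc_\sI$ and summing the two contributions produces the claimed bound.

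The step I expect to require the most care is the Hoeffding argument, specifically verifying that the hypothesis of Corollary~\ref{corollary_hoeffding} genuinely applies: once we condition on the TEST-X positions and on Eve's (pre-chosen) attack, the remaining partition of the $z$-basis indices into $\sI$ versus $\sT_\sZ$ must be uniform and independent of the error string on those positions. This relies on the fact that Alice reveals $\bs$ only after Bob completes his measurements, so $\bs$ cannot feed back into Eve's unitary; the analogous subtlety already arose in Theorem~\ref{thmsecurity1}, and a parallel justification should suffice here.
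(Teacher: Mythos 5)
Your proposal is correct and follows essentially the same route as the paper's proof: the same split on whether $|\bC_\sI| \ge t_\rel$, the same application of Corollary~\ref{corollary_hoeffding} to the uniform sampling of the $n$ INFO bits from the $n+n_z$ $z$-basis positions (conditioned on the TEST-X positions), and the same invocation of Corollary~\ref{corollary_ecc_decoding} using the independence of the randomly chosen $P_\sC$ from the error word. Your explicit appeal to Lemma~\ref{lemma_ecc_indep_coset} to reduce decoder failure to the error word alone is a slightly more careful phrasing of a step the paper leaves implicit, but the argument is the same.
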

\begin{proof}
If $\bk \ne \bk^\sB$, Alice and Bob have different final keys,
and this means that the error correction stage did not succeed.
The $r \times n$ parity-check matrix $P_\sC$
is chosen uniformly at random
and is completely independent of the error word
$\bc_\sI \triangleq \bi_\sI \oplus \bj_\sI$
(because Alice sends it to Bob only after Eve's attack); therefore,
applying Corollary~\ref{corollary_ecc_decoding},
we can find that if there are at most
$t_\rel \triangleq n (p_{a,z} + \epsilon_\rel)$ errors
in the error word $\bc_\sI$, then the probability a random
error-correcting code fails to correct the error word is at most
$2^{n[H_2(t_\rel/n) - r/n]} =
2^{n[H_2(p_{a,z} + \epsilon_\rel) - r/n]}$.

Therefore, failure of the error correction stage ($\bk \ne \bk^\sB$)
must mean that either there are more than $t_\rel$ errors
in the INFO bits (namely, $\frac{|\bC_\sI|}{n} \ge \frac{t_\rel}{n}
\triangleq p_{a,z} + \epsilon_\rel$)
or there are at most $t_\rel$ errors in the INFO bits
but the error-correcting code failed in correcting the error string.
Therefore,
\begin{eqnarray}
\pr\left[ \left( \bk \ne \bk^\sB \right)
\wedge \left( \bT = 1 \right) \right]
&=& \pr\left[ \left( \bk \ne \bk^\sB \right)
\wedge \left( \frac{|\bC_{\sT_\sZ}|}{n_z} \le p_{a,z} \right)
\wedge \left( \frac{|\bC_{\sT_\sX}|}{n_x} \le p_{a,x} \right) \right]
\nonumber \\
&\le& \pr\left[ \left( \bk \ne \bk^\sB \right)
\wedge \left( \frac{|\bC_{\sT_\sZ}|}{n_z} \le p_{a,z} \right) \right]
\nonumber \\
&\le& \pr\left[ \left( \frac{|\bC_\sI|}{n} \ge
p_{a,z} + \epsilon_\rel \right)
\wedge \left( \frac{|\bC_{\sT_\sZ}|}{n_z} \le p_{a,z} \right) \right]
\nonumber \\
&+& 2^{n[H_2(p_{a,z} + \epsilon_\rel) - r/n]}.
\end{eqnarray}

In the real protocol,
the INFO and TEST-Z bits are sent and measured in the $z$ basis,
while the TEST-X bits are sent and measured in the $x$ basis.
Therefore, the random and uniform sampling of the $n$ INFO bits
out of the $n + n_z$ bits sent in the $z$ basis
(assuming that the TEST-X bits have already been chosen)
does not affect the bases in the real protocol.
This means that we can apply Corollary~\ref{corollary_hoeffding}
to this sampling, and we get
\begin{equation}
\pr\left[ \left( \frac{|\bC_\sI|}{n} \ge p_{a,z} + \epsilon_\rel \right)
\wedge \left( \frac{|\bC_{\sT_\sZ}|}{n_z} \le p_{a,z} \right) \right]
\le e^{-2 \left( \frac{n_z}{n + n_z} \right)^2 n \epsilon_\rel^2}.
\end{equation}
We thus conclude:
\begin{equation}
\pr\left[ \left( \bk \ne \bk^\sB \right)
\wedge \left( \bT = 1 \right) \right]
\le e^{-2 \left( \frac{n_z}{n + n_z} \right)^2 n \epsilon_\rel^2}
+ 2^{n[H_2(p_{a,z} + \epsilon_\rel) - r/n]}.
\end{equation}
\end{proof}

Combining the results of
Corollary~\ref{corollary_compos}, Theorem~\ref{thmsecurity1},
and Theorem~\ref{thmsecurity2}, we get:
\begin{corollary}\label{corsecurity}
For any $n, n_z, n_x > 0$, $p_{a,z}, p_{a,x} > 0$,
and $\epsilon_\secur, \epsilon_\rel > 0$
such that $n (p_{a,x} + \epsilon_\secur)$
and $n (p_{a,z} + \epsilon_\rel)$ are both integers,
$p_{a,x} + \epsilon_\secur \le \frac{1}{2}$,
and $p_{a,z} + \epsilon_\rel \le \frac{1}{2}$,
it holds for the BB84-INFO-$z$ protocol that
\begin{align}
&\frac{1}{2} \tr \left| \rho_{\mathrm{ABE}}
- \rho_\sU \otimes \rho_\sE \right| \nonumber \\
&\le e^{-2 \left( \frac{n_z}{n + n_z} \right)^2 n \epsilon_\rel^2}
+ 2^{n[H_2(p_{a,z} + \epsilon_\rel) - r/n]} \nonumber \\
&+ 2m
\sqrt{e^{-2 \left( \frac{n_x}{n + n_x} \right)^2 n \epsilon_\secur^2}
+ 2^{n[H_2(p_{a,x} + \epsilon_\secur) - (n-r-m)/n]}},
\end{align}
where $H_2(x) \triangleq -x \log_2(x) - (1-x) \log_2(1-x)$.
\end{corollary}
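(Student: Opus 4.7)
The plan is to assemble the claimed bound as a purely mechanical combination of three earlier results: the general composability bound of Corollary~\ref{corollary_compos}, the hypothetical-protocol bound of Theorem~\ref{thmsecurity1}, and the error-correction bound of Theorem~\ref{thmsecurity2}. No new ideas or estimates should be needed; the work is checking that the hypotheses line up and substituting.

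First, I would instantiate Corollary~\ref{corollary_compos} with the integer parameter $t \triangleq n(p_{a,x}+\epsilon_\secur)$. This is a valid choice since the hypotheses of the corollary I am proving force $t$ to be an integer and $p_{a,x}+\epsilon_\secur \le \tfrac{1}{2}$, i.e. $0 \le t \le n/2$. The result is
$$\tfrac{1}{2}\tr\bigl|\rho_{\mathrm{ABE}} - \rho_\sU\otimes\rho_\sE\bigr|
\le \pr\bigl[(\bk\ne\bk^\sB)\wedge(\bT=1)\bigr]
+ 2m\sqrt{\pr_\invb\bigl[\tfrac{|\bC_\sI|}{n}\ge\tfrac{t}{n}\wedge\bT=1\bigr] + 2^{n[H_2(t/n)-(n-r-m)/n]}}.$$

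Second, I would bound the probability inside the square root using Theorem~\ref{thmsecurity1}, whose hypotheses are precisely the ones guaranteeing that $t/n = p_{a,x}+\epsilon_\secur$ is the right threshold. This replaces $\pr_\invb[\cdots]$ by $e^{-2(n_x/(n+n_x))^2 n\epsilon_\secur^2}$, and since $t/n = p_{a,x}+\epsilon_\secur$, the exponential term in the square root becomes exactly $2^{n[H_2(p_{a,x}+\epsilon_\secur)-(n-r-m)/n]}$, matching the second line of the claimed inequality.

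Third, I would bound $\pr[(\bk\ne\bk^\sB)\wedge(\bT=1)]$ via Theorem~\ref{thmsecurity2} with parameter $\epsilon_\rel$; again the hypotheses of the corollary ensure that $t_\rel \triangleq n(p_{a,z}+\epsilon_\rel)$ is an integer in $[0,n/2]$, so the theorem applies and produces $e^{-2(n_z/(n+n_z))^2 n\epsilon_\rel^2} + 2^{n[H_2(p_{a,z}+\epsilon_\rel)-r/n]}$. Adding this to the previously obtained square-root term yields the bound as stated. Since every step is either a direct citation or substitution of matching parameters, there is no real obstacle in the proof: all of the actual work (the trace-distance manipulations, the symmetrization argument, and the two Hoeffding-style estimates) has already been carried out in the preceding sections. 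The only thing worth double-checking while writing is the bookkeeping of the two distinct small parameters $\epsilon_\secur$ (tied to the $x$-basis test and to privacy amplification) and $\epsilon_\rel$ (tied to the $z$-basis test and to error correction).
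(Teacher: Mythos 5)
Your proposal is correct and follows exactly the paper's route: the paper obtains this corollary by the same mechanical combination of Corollary~\ref{corollary_compos} (with $t = n(p_{a,x}+\epsilon_\secur)$), Theorem~\ref{thmsecurity1}, and Theorem~\ref{thmsecurity2}. The parameter bookkeeping you flag ($\epsilon_\secur$ with the $x$-basis/privacy-amplification term, $\epsilon_\rel$ with the $z$-basis/error-correction term) is indeed the only thing to verify, and it checks out.
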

This bound is exponentially small in $n$
under the two following conditions:
\begin{eqnarray}
H_2(p_{a,z} + \epsilon_\rel) &<& \frac{r}{n}, \\
H_2(p_{a,x} + \epsilon_\secur) &<& \frac{n-r-m}{n},
\end{eqnarray}
which are equivalent to:
\begin{eqnarray}
H_2(p_{a,z} + \epsilon_\rel) &<& \frac{r}{n}, \\
H_2(p_{a,x} + \epsilon_\secur) + H_2(p_{a,z} + \epsilon_\rel)
&<& \frac{n-m}{n} = 1 - \frac{m}{n}.
\end{eqnarray}
We can thus obtain the following upper bound on the bit-rate:
\begin{equation}
R_{\mathrm{secret}} \triangleq \frac{m}{n}
< 1 - H_2(p_{a,x} + \epsilon_\secur)
- H_2(p_{a,z} + \epsilon_\rel).
\end{equation}

To get the asymptotic error rate thresholds, we require
$R_{\mathrm{secret}} > 0$, and we get the asymptotic condition
\begin{equation}
H_2(p_{a,x}) + H_2(p_{a,z}) < 1.
\end{equation}
The secure asymptotic error rate thresholds zone is shown in
Figure~\ref{fig:security} (it is below the curve).
Note the trade-off between the error rate thresholds $p_{a,z}$ and
$p_{a,x}$. Also note that in the case of $p_{a,z} = p_{a,x}$,
we get the same threshold as in BB84, which is $11\%$.

\begin{figure}
\includegraphics[width=\linewidth]{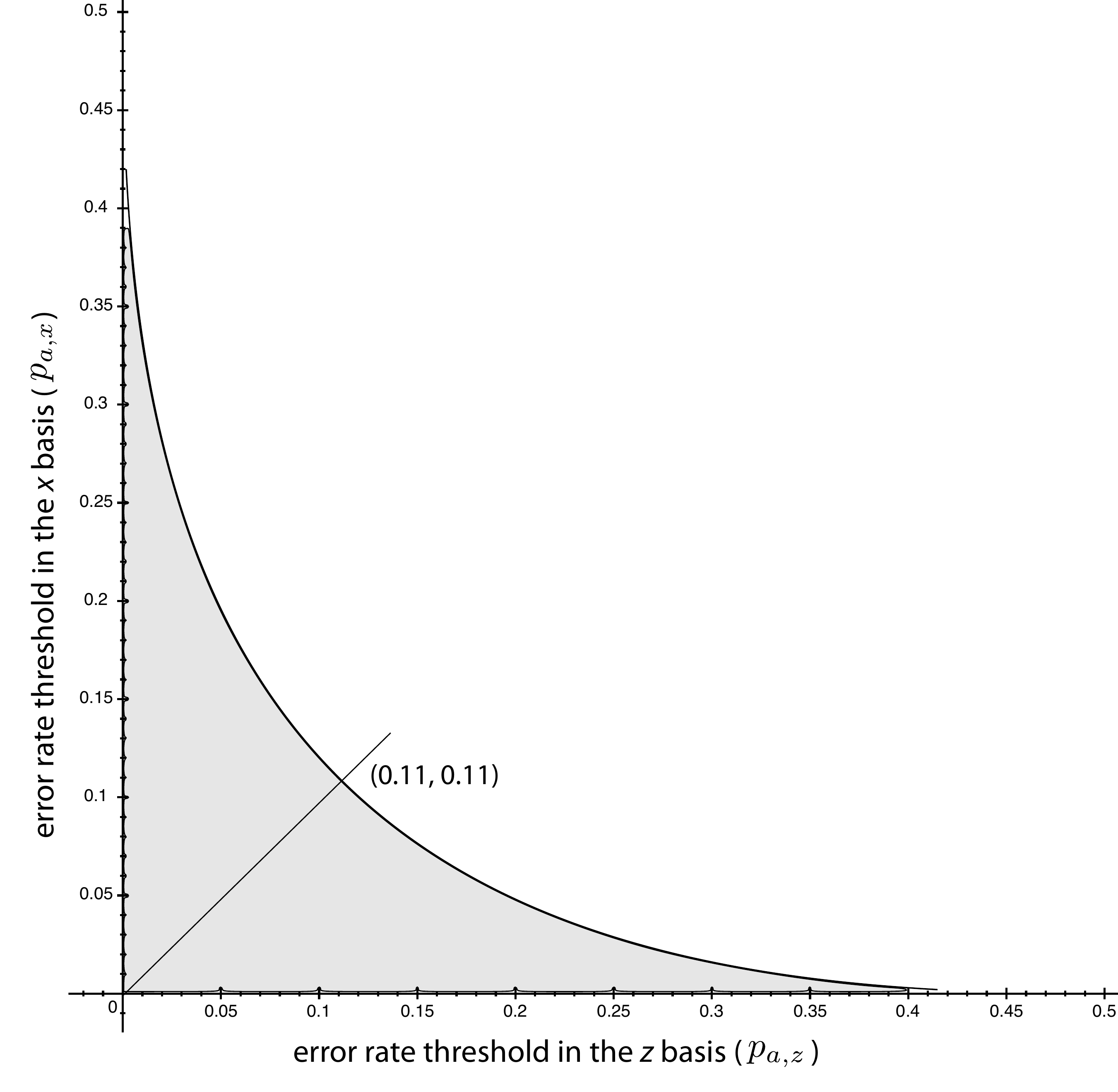}
\caption{\label{fig:security}\textbf{The secure asymptotic error rates
zone for BB84-INFO-$z$} (below the curve)}
\end{figure}

\subsection{\label{subsec_bb84}The Standard BB84 Protocol}
In the standard BB84 protocol, the strings $\bb$ and $\bs$ are chosen
randomly (except that we demand $|\bs| = n$) and independently,
and $N = 2n$. In other words, there are $n$ INFO bits
and $n$ TEST bits (chosen randomly), and for each one of them,
the basis ($z$ or $x$) is chosen randomly and independently.

Formally, in BB84, we choose $N = 2n$, $B = \mathbf{F}^{N}_2$,
and $S_\bb = \{\bs \in \mathbf{F}^{N}_2 \mid |\bs| = n\}$
for all $\bb \in B$. The probability distributions $\pr(\bb)$ and
$\pr(\bs \mid \bb) = \pr(\bs)$ are all uniform---namely,
$\pr(\bb, \bs)$ is identical for all $\bb \in B$ and $\bs \in S_\bb$.

Given the parameter $p_a$ agreed by Alice and Bob,
the testing function $T$ is
\begin{equation}
T(\bi_\sT \oplus \bj_\sT, \bb_\sT, \bs) = 1 ~ \Leftrightarrow ~
|\bi_\sT \oplus \bj_\sT| \le n \cdot p_a.
\end{equation}
Namely, the test passes if and only if the error rate
on the TEST bits is at most $p_a$.

\begin{proposition}\label{prop_bb84_prob}
In the standard BB84 protocol,
for any integer $0 \le t \le \frac{n}{2}$,
\begin{equation}
\pr_\invb\left[ \left( \frac{|\bC_\sI|}{n} \ge \frac{t}{n}
\right) \wedge \left( \bT = 1 \right) \right] =
\pr\left[ \left( \frac{|\bC_\sI|}{n} \ge \frac{t}{n} \right)
\wedge \left( \bT = 1 \right) \right].
\end{equation}
\end{proposition}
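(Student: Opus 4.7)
The plan is to exploit the fact that in standard BB84 the basis string $\bb$ is uniformly distributed over $B = \mathbf{F}_2^N$ independently of $\bs$, so that the XOR map $\bb \mapsto \bb \oplus \bs$ is, for each fixed $\bs$, a measure-preserving bijection on the space of basis strings. Under this bijection, the hypothetical ``inverted-INFO-basis'' protocol is statistically indistinguishable from the real protocol, which will force the two probabilities to coincide.

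Concretely, I would proceed as follows. First, I would verify the two well-definedness conditions noted in the footnote of Subsection~\ref{subsec_invb}: since $B = \mathbf{F}_2^N$, for every $\bb \in B$ and $\bs$ of weight $n$ one has $\bb^0 \triangleq \bb \oplus \bs \in B$ and $\bs \in S_{\bb^0} = \{\bs : |\bs| = n\}$, so $\pr(\cdot \mid \bb^0, \bs)$ is well-defined. Second, I would observe that in BB84 $\pr(\bb, \bs) = \frac{1}{2^N} \cdot \frac{1}{\binom{N}{n}}$ depends on neither $\bb$ nor (apart from the weight constraint on $\bs$) on the pair at all, so in particular $\pr(\bb, \bs) = \pr(\bb^0, \bs)$. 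Third, I would use the identity from the footnote of Subsection~\ref{subsec_invb}, namely $\pr_\invb(\cdot \mid \bb, \bs) = \pr(\cdot \mid \bb^0, \bs)$, which holds whenever both sides are well-defined.

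Putting these together, I would expand the left-hand side by conditioning on $\bb, \bs$ and write
\begin{equation*}
\pr_\invb[\mathcal{E}] = \sum_{\bb, \bs} \pr(\bb, \bs) \, \pr(\mathcal{E} \mid \bb^0, \bs) = \sum_{\bb, \bs} \pr(\bb^0, \bs) \, \pr(\mathcal{E} \mid \bb^0, \bs),
\end{equation*}
where $\mathcal{E}$ denotes the event $\left( \frac{|\bC_\sI|}{n} \ge \frac{t}{n}\right) \wedge (\bT = 1)$. Then I would change variables via $\bb' \triangleq \bb^0 = \bb \oplus \bs$, which for each fixed $\bs$ is a bijection on $\mathbf{F}_2^N$, so the double sum equals $\sum_{\bb', \bs} \pr(\bb', \bs) \pr(\mathcal{E} \mid \bb', \bs) = \pr[\mathcal{E}]$.

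There is essentially no obstacle here; the only thing to be careful about is that the definition of $\mathcal{E}$ depends only on the physical bases actually used (through $\bC_\sI, \bC_\sT, \bb_\sT, \bs$), so that $\pr_\invb(\mathcal{E} \mid \bb, \bs) = \pr(\mathcal{E} \mid \bb^0, \bs)$ really does capture what the event computes in each protocol. This equivalence is exactly why the original BBBMR proof of~\cite{BBBMR06} could work directly with $\pr(\cdot \mid \bb^0, \bs)$ for BB84, and the proposition is the statement that this shortcut is legitimate: in BB84, passing to the ``inverted'' protocol is merely a relabeling.
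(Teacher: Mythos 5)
Your proposal is correct and follows essentially the same route as the paper's own proof: condition on $(\bb,\bs)$, invoke $\pr_\invb(\cdot \mid \bb, \bs) = \pr(\cdot \mid \bb^0, \bs)$ together with $\pr(\bb,\bs)=\pr(\bb^0,\bs)$, and resum via the bijection $\bb \mapsto \bb \oplus \bs$. You merely make explicit the well-definedness and measure-preservation checks that the paper leaves implicit.
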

\begin{proof}
\begin{eqnarray}
&&\pr_\invb\left[ \left( \frac{|\bC_\sI|}{n} \ge \frac{t}{n}
\right) \wedge \left( \bT = 1 \right) \right] \nonumber \\
&=& \sum_{\bb, \bs} \pr_\invb\left[ \left( \frac{|\bC_\sI|}{n} \ge
\frac{t}{n} \right) \wedge
\left( \bT = 1 \right) \mid \bb, \bs \right] \cdot
\pr(\bb, \bs) \nonumber \\
&=& \sum_{\bb, \bs} \pr\left[ \left( \frac{|\bC_\sI|}{n} \ge
\frac{t}{n} \right) \wedge
\left( \bT = 1 \right) \mid \bb^0, \bs \right] \cdot
\pr(\bb^0, \bs) \nonumber \\
&=& \pr\left[ \left( \frac{|\bC_\sI|}{n} \ge \frac{t}{n} \right)
\wedge \left( \bT = 1 \right) \right]
\end{eqnarray}
(where $\bb^0 \triangleq \bb \oplus \bs$). 
\end{proof}

The security of the standard BB84 protocol is now easily obtained:
\begin{theorem}\label{thmsecurity_bb84}
For any $n > 0$, $p_a > 0$,
and $\epsilon_\secur, \epsilon_\rel > 0$
such that $n (p_a + \epsilon_\secur)$
and $n (p_a + \epsilon_\rel)$ are both integers,
$p_a + \epsilon_\secur \le \frac{1}{2}$,
and $p_a + \epsilon_\rel \le \frac{1}{2}$,
it holds for the standard BB84 protocol that
\begin{eqnarray}
\frac{1}{2} \tr \left| \rho_{\mathrm{ABE}}
- \rho_\sU \otimes \rho_\sE \right|
&\le& e^{-\frac{1}{2} n \epsilon_\rel^2}
+ 2^{n[H_2(p_a + \epsilon_\rel) - r/n]} \nonumber \\
&+& 2m \sqrt{e^{-\frac{1}{2} n \epsilon_\secur^2}
+ 2^{n[H_2(p_a + \epsilon_\secur) - (n-r-m)/n]}},
\end{eqnarray}
where $H_2(x) \triangleq -x \log_2(x) - (1-x) \log_2(1-x)$.
\end{theorem}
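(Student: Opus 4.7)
The plan is to apply Corollary~\ref{corollary_compos} with $t \triangleq n(p_a + \epsilon_\secur)$ and bound each of the two summands on its right-hand side separately. The whole argument closely mirrors the proof for BB84-INFO-$z$ (Theorems~\ref{thmsecurity1} and~\ref{thmsecurity2}), with two structural simplifications: there is only one error-rate threshold $p_a$ rather than separate thresholds for the two bases, and both the INFO set and the TEST set have size $n$, so the Hoeffding ratio $\frac{n'}{n+n'}$ specializes to $\frac{1}{2}$, producing exponents of the form $\frac{1}{2} n \epsilon^2$.

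For the secrecy term $\pr_\invb[(|\bC_\sI|/n \ge t/n) \wedge (\bT = 1)]$, I first invoke Proposition~\ref{prop_bb84_prob} to convert $\pr_\invb$ into $\pr$ (this is legitimate because, in standard BB84, $B = \mathbf{F}_2^N$ is closed under $\bb \mapsto \bb \oplus \bs$ and $\pr(\bb,\bs) = \pr(\bb^0,\bs)$). The resulting real-protocol probability $\pr[(|\bC_\sI|/n \ge p_a + \epsilon_\secur) \wedge (|\bC_\sT|/n \le p_a)]$ is then bounded using Corollary~\ref{corollary_hoeffding}: since in BB84 the string $\bs$ is chosen uniformly and \emph{independently} of $\bi$ and $\bb$ and is kept secret until after Bob's measurement, the error word $\bc \triangleq \bi \oplus \bj$ is fixed before $\bs$ induces a uniformly random partition of its $N = 2n$ bits into two substrings of size $n$ each. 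Taking $n' = n$ in the corollary yields $e^{-n\epsilon_\secur^2/2}$.

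For the reliability term $\pr[(\bk \ne \bk^\sB) \wedge (\bT = 1)]$, I follow Theorem~\ref{thmsecurity2}: split according to whether $|\bC_\sI|/n \ge p_a + \epsilon_\rel$. In the ``many-errors'' branch, Corollary~\ref{corollary_hoeffding} (applied to the same random partition of $\bc$ as above) gives $\pr[(|\bC_\sI|/n \ge p_a + \epsilon_\rel) \wedge (|\bC_\sT|/n \le p_a)] \le e^{-n\epsilon_\rel^2/2}$. In the ``few-errors'' branch, since $P_\sC$ is drawn uniformly at random and is independent of $\bc_\sI$ (Alice reveals $P_\sC$ only after Eve's attack), Corollary~\ref{corollary_ecc_decoding} bounds the decoder-failure probability by $2^{n[H_2(p_a+\epsilon_\rel) - r/n]}$. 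Summing the two branches and combining with the secrecy bound via Corollary~\ref{corollary_compos} gives the claimed inequality. The only substantive new ingredient beyond the BB84-INFO-$z$ proof is the step using Proposition~\ref{prop_bb84_prob}; this is the part that fails in more restrictive protocols and will not be available for the efficient-BB84 variants, so I expect its absence, rather than any calculation in the present theorem, to be the real obstacle for the next subsections.
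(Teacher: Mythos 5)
Your proposal is correct and follows essentially the same route as the paper's proof: apply Corollary~\ref{corollary_compos} with $t = n(p_a + \epsilon_\secur)$, use Proposition~\ref{prop_bb84_prob} to pass from $\pr_\invb$ to $\pr$, split the reliability term into a high-error-rate event (bounded by Corollary~\ref{corollary_hoeffding} with $n'=n$) plus a decoder-failure event (bounded by Corollary~\ref{corollary_ecc_decoding}), and bound the secrecy term by the same Hoeffding corollary. Your observation that the exponent $\frac{1}{2}n\epsilon^2$ arises from the ratio $\frac{n'}{n+n'}=\frac{1}{2}$, and your justification that the uniform choice of $\bs$ independent of $\bi$, $\bb$, and Eve's attack licenses the sampling argument, both match the paper's reasoning.
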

\begin{proof}
Using Corollary~\ref{corollary_compos} and
Proposition~\ref{prop_bb84_prob},
we get the following bound for the standard BB84 protocol,
choosing the integer $t \triangleq n (p_a + \epsilon_\secur)$:
\begin{align}
&\frac{1}{2} \tr \left| \rho_{\mathrm{ABE}}
- \rho_\sU \otimes \rho_\sE \right| \nonumber \\
&\le \pr\left[ \left( \bk \ne \bk^\sB \right)
\wedge \left( \bT = 1 \right) \right] \nonumber \\
&+ 2m \sqrt{\pr\left[ \left( \frac{|\bC_\sI|}{n} \ge
\frac{t}{n} \right) \wedge \left( \bT = 1 \right) \right]
+ 2^{n[H_2(t/n) - (n-r-m)/n]}} \nonumber \\
&= \pr\left[ \left( \bk \ne \bk^\sB \right)
\wedge \left( \frac{|\bC_\sT|}{n} \le p_a \right) \right] \nonumber \\
&+ 2m \sqrt{\pr\left[ \left( \frac{|\bC_\sI|}{n} \ge
p_a + \epsilon_\secur \right) \wedge
\left( \frac{|\bC_\sT|}{n} \le p_a \right) \right]
+ 2^{n[H_2(p_a + \epsilon_\secur) - (n-r-m)/n]}}.\label{bb84_proof1}
\end{align}

Because the event $\bk \ne \bk^\sB$
implies that either the error rate on the INFO bits is higher than
$p_a + \epsilon_\rel$ or the error correction step fails
(which, according to Corollary~\ref{corollary_ecc_decoding},
happens with probability at most
$2^{n[H_2(p_a + \epsilon_\rel) - r/n]}$;
see the similar proof of Theorem~\ref{thmsecurity2} for details),
we get:
\begin{eqnarray}
\pr\left[ \left( \bk \ne \bk^\sB \right) \wedge
\left( \frac{|\bC_\sT|}{n} \le p_a \right) \right]
&\le& \pr\left[
\left( \frac{|\bC_\sI|}{n} \ge p_a + \epsilon_\rel \right)
\wedge \left( \frac{|\bC_\sT|}{n} \le p_a \right) \right] \nonumber \\
&+& 2^{n[H_2(p_a + \epsilon_\rel) - r/n]}.\label{bb84_proof2}
\end{eqnarray}

All bits in the protocol are sent in random and independent bases.
Therefore, the random and uniform sampling of the $n$ INFO bits out of
the $N = 2n$ bits does not affect the bases (in the real protocol).
We can thus apply Corollary~\ref{corollary_hoeffding}
to this sampling, and we get
\begin{eqnarray}
\pr\left[ \left( \frac{|\bC_\sI|}{n} \ge p_a + \epsilon_\rel \right)
\wedge \left( \frac{|\bC_\sT|}{n} \le p_a \right) \right] &\le&
e^{-\frac{1}{2} n\epsilon_\rel^2},\label{bb84_proof3a} \\
\pr\left[ \left( \frac{|\bC_\sI|}{n} \ge p_a + \epsilon_\secur \right)
\wedge \left( \frac{|\bC_\sT|}{n} \le p_a \right) \right] &\le&
e^{-\frac{1}{2} n\epsilon_\secur^2}.\label{bb84_proof3b}
\end{eqnarray}

Combining Equations~\eqref{bb84_proof1}--\eqref{bb84_proof3b},
we get
\begin{eqnarray}
\frac{1}{2} \tr \left| \rho_{\mathrm{ABE}}
- \rho_\sU \otimes \rho_\sE \right|
&\le& e^{-\frac{1}{2} n \epsilon_\rel^2}
+ 2^{n[H_2(p_a + \epsilon_\rel) - r/n]} \nonumber \\
&+& 2m \sqrt{e^{-\frac{1}{2} n \epsilon_\secur^2}
+ 2^{n[H_2(p_a + \epsilon_\secur) - (n-r-m)/n]}}.
\end{eqnarray}
\end{proof}
We point out that this is a finite-key bound, matching pretty closely
the state-of-the-art for security of BB84 (e.g.,~\cite{TLGR12,TL17}),
except the superfluous ``$m$'' factor (resulting from Proposition~\ref{probbound})
that we intend to remove in a future version of this paper.

For finding the asymptotic error rate,
we note that this bound is exponentially small in $n$
under the two following conditions:
\begin{eqnarray}
H_2(p_a + \epsilon_\rel) &<& \frac{r}{n}, \\
H_2(p_a + \epsilon_\secur) &<& \frac{n-r-m}{n},
\end{eqnarray}
which are equivalent to:
\begin{eqnarray}
H_2(p_a + \epsilon_\rel) &<& \frac{r}{n}, \\
H_2(p_a + \epsilon_\secur) + H_2(p_a + \epsilon_\rel)
&<& \frac{n-m}{n} = 1 - \frac{m}{n}.
\end{eqnarray}
We can thus obtain the following upper bound on the bit-rate:
\begin{equation}
R_{\mathrm{secret}} \triangleq \frac{m}{n}
< 1 - H_2(p_a + \epsilon_\secur)
- H_2(p_a + \epsilon_\rel).
\end{equation}

To get the asymptotic error rate threshold, we require
$R_{\mathrm{secret}} > 0$, and we get the asymptotic condition
\begin{equation}
2 H_2(p_a) < 1.
\end{equation}
This condition gives an asymptotic error rate threshold of $11\%$.

\subsection{\label{subsec_eff_bb84}The ``Efficient BB84'' Protocol}
In the ``efficient BB84'' protocol (suggested by~\cite{LCA05}),
the bitstring $\bb$ is chosen probabilistically,
but {\em not uniformly}:
each qubit is sent in the $z$ basis with probability $p$
(and in the $x$ basis with probability $1 - p$),
where $0 < p \le \frac{1}{2}$. Then, the bitstring $\bs$ is chosen
such that there are $n_z$ TEST-Z bits and $n_x$ TEST-X bits.
In other words, as in BB84-INFO-$z$, the strings $\bb$ and $\bs$
together define a random partition of the set of indexes
$\lbrace 1, 2, \ldots, N \rbrace$ into three disjoint sets:
\begin{itemize}
\item $\sI$ (INFO bits, where $s_j = 1$) of size $n$. However,
{\em unlike} BB84-INFO-$z$, this set consists of both $z$ qubits
and $x$ qubits; therefore, it can be divided to two disjoint subsets:
\begin{itemize}
\item $\sI_\sZ$ (INFO-Z bits, where $s_j = 1$ and $b_j = 0$); and
\item $\sI_\sX$ (INFO-X bits, where $s_j = 1$ and $b_j = 1$).
\end{itemize}
\item $\sT_\sZ$ (TEST-Z bits, where $s_j = 0$ and $b_j = 0$)
of size $n_z$; and
\item $\sT_\sX$ (TEST-X bits, where $s_j = 0$ and $b_j = 1$)
of size $n_x$.
\end{itemize}

Formally, in ``efficient BB84'', Alice and Bob agree on parameters
$n, n_z, n_x$ (such that $N = n + n_z + n_x$)
and on a parameter $0 < p \le \frac{1}{2}$,
and we choose $B = \mathbf{F}^{N}_2$
and $S_\bb = \{\bs \in \mathbf{F}^{N}_2 \mid (|\bs| = n) \wedge
(|\overline{\bs} \wedge \bb| = n_x)\}$ for all $\bb \in B$
(namely, it is required that there are $n$ INFO bits,
$n_z$ TEST-Z bits, and $n_x$ TEST-X bits).
This time, the probability distribution
$\pr(\bb)$ is {\em not} uniform: it holds that
$\pr(\bb) = (1 - p)^{|\bb|} \cdot p^{N - |\bb|}$,
because the probability of each bit to be in the $x$ basis is $1 - p$.
On the other hand, the probability distribution $\pr(\bs \mid \bb)$
is uniform.

\begin{remark}\label{eff_bb84_remark}
A subtle point is that for some values $\bb \in \mathbf{F}^{N}_2$
(for example, for $\bb = 00\ldots 0$), the set $S_\bb$ is empty:
no $\bs$ can be agreed by Alice and Bob for such values of $\bb$.
In that case, as assumed in~\cite[Section 4.3]{LCA05},
the protocol aborts,
and different values of $\bb$ and $\bs$ are randomly chosen;
this is equivalent to assuming that Alice is not allowed
to choose those values of $\bb$.
Therefore, to be more precise, we must re-define
\begin{equation}
B = \{\bb \in \mathbf{F}^{N}_2 \mid S_\bb \ne \emptyset\}
= \{\bb \in \mathbf{F}^{N}_2 \mid (|\bb| \ge n_x) \wedge
(|\overline{\bb}| \ge n_z)\},
\end{equation}
and we must normalize the probabilities by defining
$\pr_0(\bb) \triangleq (1 - p)^{|\bb|} \cdot p^{N - |\bb|}$
(the original probability of each $\bb$),
$N_p \triangleq \sum_{\bb \in B} \pr_0(\bb)$
(the sum of all the original probabilities
for all the allowed values of $\bb \in B$),
and then the real probability of each $\bb \in B$ is
\begin{equation}
\pr(\bb) = \frac{\pr_0(\bb)}{N_p}
= \frac{(1 - p)^{|\bb|} \cdot p^{N - |\bb|}}{N_p}.
\end{equation}
This guarantees that the sum of probabilities of all the allowed values
$\bb \in B$ is $1$.
\end{remark}

Alice and Bob also agree on an error rate threshold, $p_a$
(applied {\em both} to the TEST-Z bits and to the TEST-X bits).
The testing function $T$ is defined as follows:
\begin{equation}
T(\bi_\sT \oplus \bj_\sT, \bb_\sT, \bs) = 1 ~ \Leftrightarrow ~
\left( |\bi_{\sT_\sZ} \oplus \bj_{\sT_\sZ}| \le n_z \cdot p_a \right) \wedge
\left( |\bi_{\sT_\sX} \oplus \bj_{\sT_\sX}| \le n_x \cdot p_a \right).
\end{equation}
Namely, the test passes if and only if
the error rate on the TEST-Z bits is at most $p_a$ {\em and}
the error rate on the TEST-X bits is at most $p_a$.

In this security proof, instead of analyzing all the INFO bits
together, we analyze the INFO-Z and the INFO-X bits separately.
We define the following random variables:
\begin{itemize}
\item $\bC_{\sI_\sZ}$ and $\bC_{\sI_\sX}$ are the random variables
corresponding to the error strings on the INFO-Z bits and
on the INFO-X bits, respectively.
\item $N_{\sI_\sZ}$ and $N_{\sI_\sX}$ are random variables equal
to the numbers of INFO-Z and INFO-X bits, respectively.
(We note that the parameters $n, n_z, n_x$ are deterministically chosen
by Alice and Bob, while $N_{\sI_\sZ}$ and $N_{\sI_\sX}$ are determined by
the probabilistic choice of $\bb$. We also note that, necessarily,
$n = N_{\sI_\sZ} + N_{\sI_\sX}$.)
\end{itemize}

\begin{proposition}\label{eff_bb84_probs}
For any $\epsilon > 0$,
\begin{eqnarray}
\pr\left[ \left( \frac{|\bC_\sI|}{n} \ge p_a + \epsilon \right)
\wedge \left( \bT = 1 \right) \right]
&\le& \pr\left[ \left( \frac{|\bC_{\sI_\sZ}|}{N_{\sI_\sZ}} \ge p_a +
\epsilon \right) \wedge \left( \bT = 1 \right) \right] \nonumber \\
&+& \pr\left[ \left( \frac{|\bC_{\sI_\sX}|}{N_{\sI_\sX}} \ge p_a + \epsilon
\right) \wedge \left( \bT = 1 \right) \right].
\label{eff_bb84_probs_real}
\end{eqnarray}

Equation~\eqref{eff_bb84_probs_real} similarly applies to the
hypothetical ``inverted-INFO-basis'' protocol, too
(namely, it applies even if $\pr$ is replaced by $\pr_\invb$).
\end{proposition}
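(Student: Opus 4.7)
The plan is to combine a simple weighted-average observation with the union bound. First I would note the two deterministic identities $|\bC_\sI| = |\bC_{\sI_\sZ}| + |\bC_{\sI_\sX}|$ (the total number of INFO errors is the sum of the INFO-Z errors and the INFO-X errors) and $n = N_{\sI_\sZ} + N_{\sI_\sX}$, both of which hold for every realization of $\bi, \bj, \bb, \bs$. From these, I can rewrite the overall INFO error rate as the convex combination
\[
\frac{|\bC_\sI|}{n} = \frac{N_{\sI_\sZ}}{n} \cdot \frac{|\bC_{\sI_\sZ}|}{N_{\sI_\sZ}} + \frac{N_{\sI_\sX}}{n} \cdot \frac{|\bC_{\sI_\sX}|}{N_{\sI_\sX}},
\]
in which the weights $\frac{N_{\sI_\sZ}}{n}$ and $\frac{N_{\sI_\sX}}{n}$ are non-negative and sum to $1$.

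Since a convex combination of two real numbers cannot exceed the larger of them, the event $\left\{\frac{|\bC_\sI|}{n} \ge p_a + \epsilon\right\}$ is contained in the union $\left\{\frac{|\bC_{\sI_\sZ}|}{N_{\sI_\sZ}} \ge p_a + \epsilon\right\} \cup \left\{\frac{|\bC_{\sI_\sX}|}{N_{\sI_\sX}} \ge p_a + \epsilon\right\}$. Intersecting both sides with the event $\{\bT = 1\}$ (which can only shrink events) and then applying the union bound gives exactly the stated inequality. Edge cases where $N_{\sI_\sZ} = 0$ or $N_{\sI_\sX} = 0$ can be handled by convention: the corresponding weight vanishes, and the undefined sub-rate event is treated as vacuously false, after which the argument still goes through.

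For the extension to the hypothetical ``inverted-INFO-basis'' protocol, I would simply observe that the identities $|\bC_\sI| = |\bC_{\sI_\sZ}| + |\bC_{\sI_\sX}|$ and $n = N_{\sI_\sZ} + N_{\sI_\sX}$ are purely combinatorial statements about the partition induced by $\bb$ and $\bs$, and neither they nor the convexity argument depend in any way on which bases are actually used to prepare or measure the qubits. Hence the identical chain of reasoning applies when $\pr$ is replaced by $\pr_\invb$ throughout. There is no real technical obstacle in this proof, which is essentially a one-line weighted-average plus union-bound argument; the only minor care required is the edge-case bookkeeping mentioned above.
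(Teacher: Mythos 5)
Your proof is correct and is essentially the same as the paper's: the paper also observes that the overall INFO error rate is a weighted average of the INFO-Z and INFO-X error rates (so it cannot exceed the larger of the two), intersects with $\{\bT = 1\}$, and applies the union bound, with the identical reasoning carrying over to $\pr_\invb$. Your explicit convex-combination identity and edge-case remark merely make the paper's one-line observation more formal.
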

\begin{proof}
We observe that if the error rate on all the INFO bits together
is at least $p_a + \epsilon$,
then at least one of the error rates (on the INFO-Z bits or
on the INFO-X bits) must be at least $p_a + \epsilon$.
(Equivalently, if both error rates on the INFO-Z bits and
on the INFO-X bits are lower than $p_a + \epsilon$, then the error rate
on the INFO bits is necessarily lower than $p_a + \epsilon$.) Namely,
\begin{equation}
\left( \frac{|\bC_\sI|}{n} \ge p_a + \epsilon \right) ~\Rightarrow~
\left( \frac{|\bC_{\sI_\sZ}|}{N_{\sI_\sZ}} \ge p_a + \epsilon \right) \vee
\left( \frac{|\bC_{\sI_\sX}|}{N_{\sI_\sX}} \ge p_a + \epsilon \right).
\end{equation}
In particular, the corresponding probabilities satisfy
\begin{eqnarray}
\pr\left[ \left( \frac{|\bC_\sI|}{n} \ge p_a + \epsilon \right)
\wedge \left( \bT = 1 \right) \right]
&\le& \pr\left[ \left( \frac{|\bC_{\sI_\sZ}|}{N_{\sI_\sZ}} \ge p_a + \epsilon
\right) \wedge \left( \bT = 1 \right) \right] \nonumber \\
&+& \pr\left[ \left( \frac{|\bC_{\sI_\sX}|}{N_{\sI_\sX}} \ge p_a + \epsilon
\right) \wedge \left( \bT = 1 \right) \right].
\end{eqnarray}
This result applies both to the real protocol
and to the hypothetical ``inverted-INFO-basis'' protocol. 
\end{proof}

\begin{proposition}\label{eff_bb84_nI}
For any $\epsilon > 0$ and $\delta > 0$,
\begin{eqnarray}
&&\pr\left[ \left( \frac{|\bC_{\sI_\sZ}|}{N_{\sI_\sZ}} \ge p_a + \epsilon
\right) \wedge \left( \bT = 1 \right) \right] \nonumber \\
&\le& \pr\left( N_{\sI_\sZ} \le \delta \right)
+ \max_{\delta \le t_z \le n}
\pr\left[ \left( \frac{|\bC_{\sI_\sZ}|}{t_z} \ge p_a + \epsilon
\right) \wedge \left( \bT = 1 \right) \mid N_{\sI_\sZ} = t_z \right],
\label{eff_bb84_nIZ}
\end{eqnarray}
and
\begin{eqnarray}
&&\pr\left[ \left( \frac{|\bC_{\sI_\sX}|}{N_{\sI_\sX}} \ge p_a + \epsilon
\right) \wedge \left( \bT = 1 \right) \right] \nonumber \\
&\le& \pr\left( N_{\sI_\sX} \le \delta \right)
+ \max_{\delta \le t_x \le n}
\pr\left[ \left( \frac{|\bC_{\sI_\sX}|}{t_x} \ge p_a + \epsilon
\right) \wedge \left( \bT = 1 \right) \mid N_{\sI_\sX} = t_x \right].
\label{eff_bb84_nIX}
\end{eqnarray}

Equations~\eqref{eff_bb84_nIZ}--\eqref{eff_bb84_nIX} similarly
apply to the hypothetical ``inverted-INFO-basis'' protocol, too
(namely, they apply even if $\pr$ is replaced by $\pr_\invb$).
\end{proposition}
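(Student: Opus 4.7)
The plan is to decompose the probability by conditioning on the value of the random variable $N_{\sI_\sZ}$ (and symmetrically $N_{\sI_\sX}$). Both equations have identical structure, so I will focus on Equation~\eqref{eff_bb84_nIZ}; Equation~\eqref{eff_bb84_nIX} follows by swapping the roles of $z$ and $x$ throughout. The same decomposition works verbatim for the hypothetical ``inverted-INFO-basis'' protocol by replacing $\pr$ with $\pr_\invb$.

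First I would write
\begin{equation}
\pr\left[ \left( \frac{|\bC_{\sI_\sZ}|}{N_{\sI_\sZ}} \ge p_a + \epsilon \right) \wedge \left( \bT = 1 \right) \right]
= \sum_{t_z = 0}^{n} \pr\left[ \left( \frac{|\bC_{\sI_\sZ}|}{t_z} \ge p_a + \epsilon \right) \wedge \left( \bT = 1 \right) \mid N_{\sI_\sZ} = t_z \right] \cdot \pr(N_{\sI_\sZ} = t_z),
\end{equation}
using the law of total probability (noting that $N_{\sI_\sZ}$ ranges over $\{0, 1, \ldots, n\}$ since $n = N_{\sI_\sZ} + N_{\sI_\sX}$ with both summands nonnegative, and conditioning on $N_{\sI_\sZ} = t_z$ replaces $N_{\sI_\sZ}$ by the fixed $t_z$ inside the event). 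Then I would split the sum at the threshold $\delta$: for the low-count part ($t_z \le \delta$), I would upper-bound each conditional probability by $1$, obtaining
\begin{equation}
\sum_{t_z \le \delta} \pr(N_{\sI_\sZ} = t_z) = \pr(N_{\sI_\sZ} \le \delta),
\end{equation}
which accounts for the first term of the claimed bound.

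For the high-count part ($\delta < t_z \le n$), I would pull out the maximum over the conditional probability and use $\sum_{t_z > \delta} \pr(N_{\sI_\sZ} = t_z) \le 1$ to obtain the second term. Combining both pieces yields Equation~\eqref{eff_bb84_nIZ}, and an identical argument (conditioning on $N_{\sI_\sX} = t_x$ and splitting at $\delta$) yields Equation~\eqref{eff_bb84_nIX}. Finally, since the entire derivation uses only the law of total probability, the trivial bound of $1$ on conditional probabilities, and the fact that probabilities of a partition sum to at most $1$ (none of which depend on whether Alice uses the real or hypothetical protocol), the same proof applies with $\pr$ replaced by $\pr_\invb$ throughout.

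There is no substantive obstacle here: the proposition is essentially a conditioning identity packaged for later use. The subtle point worth noting is that inside the event $\frac{|\bC_{\sI_\sZ}|}{N_{\sI_\sZ}} \ge p_a + \epsilon$ the denominator is a random variable, which on the event $\{N_{\sI_\sZ} = 0\}$ would be undefined; however, this case is harmlessly absorbed into $\pr(N_{\sI_\sZ} \le \delta)$ since any $\delta > 0$ catches it, and for $\delta \ge 0$ we can simply declare the event on $\{N_{\sI_\sZ} = 0\}$ to be false (vacuously no error rate exceeds the threshold when there are no bits), so the bound still holds.
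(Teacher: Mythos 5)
Your proof is correct and follows essentially the same route as the paper's: the law of total probability conditioning on $N_{\sI_\sZ}$ (resp.\ $N_{\sI_\sX}$), a split of the sum at $\delta$, the trivial bound of $1$ on the conditional probabilities in the low-count part, and pulling out the maximum in the high-count part. Your explicit handling of the $N_{\sI_\sZ}=0$ degenerate case is a small addition the paper leaves implicit, but it does not change the argument.
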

\begin{proof}
First, we prove Equation~\eqref{eff_bb84_nIZ}:
\begin{eqnarray}
&&\pr\left[ \left( \frac{|\bC_{\sI_\sZ}|}{N_{\sI_\sZ}} \ge p_a + \epsilon
\right) \wedge \left( \bT = 1 \right) \right] \nonumber \\
&=& \sum_{t_z} \pr\left[ \left( \frac{|\bC_{\sI_\sZ}|}{t_z} \ge p_a +
\epsilon \right) \wedge \left( \bT = 1 \right)
\mid N_{\sI_\sZ} = t_z \right]
\cdot \pr\left( N_{\sI_\sZ} = t_z \right) \nonumber \\
&=& \sum_{t_z < \delta} \pr\left[ \left( \frac{|\bC_{\sI_\sZ}|}{t_z} \ge
p_a + \epsilon \right) \wedge \left( \bT = 1 \right)
\mid N_{\sI_\sZ} = t_z \right]
\cdot \pr\left( N_{\sI_\sZ} = t_z \right) \nonumber \\
&+& \sum_{\delta \le t_z \le n} \pr\left[ \left(
\frac{|\bC_{\sI_\sZ}|}{t_z} \ge p_a + \epsilon \right) \wedge
\left( \bT = 1 \right) \mid N_{\sI_\sZ} = t_z \right]
\cdot \pr\left( N_{\sI_\sZ} = t_z \right) \nonumber \\
&\le& \sum_{t_z \le \delta} \pr\left( N_{\sI_\sZ} = t_z \right)
\nonumber \\
&+& \max_{\delta \le t_z \le n} \pr\left[ \left(
\frac{|\bC_{\sI_\sZ}|}{t_z} \ge p_a + \epsilon \right) \wedge
\left( \bT = 1 \right) \mid N_{\sI_\sZ} = t_z \right]
\cdot \sum_{\delta \le t_z \le n} \pr\left( N_{\sI_\sZ} = t_z \right)
\nonumber \\
&\le& \pr\left( N_{\sI_\sZ} \le \delta \right)
+ \max_{\delta \le t_z \le n} \pr\left[ \left(
\frac{|\bC_{\sI_\sZ}|}{t_z} \ge p_a + \epsilon \right) \wedge
\left( \bT = 1 \right) \mid N_{\sI_\sZ} = t_z \right].
\end{eqnarray}

The proof of Equation~\eqref{eff_bb84_nIX} is similar.
Both proofs apply both to the real protocol and
to the ``inverted-INFO-basis'' protocol. 
\end{proof}

\begin{theorem}\label{thmsecurity_eff_bb84}
For any $0 < p \le \frac{1}{2}$, $n > 0$, $0 < n_z < \frac{p N}{2}$,
$0 < n_x < \frac{(1 - p) N}{2}$, $p_a > 0$,
and $\epsilon_\secur, \epsilon_\rel > 0$
such that $t \triangleq n (p_a + \epsilon_\secur)$
and $t_\rel \triangleq n (p_a + \epsilon_\rel)$ are both integers,
$p_a + \epsilon_\secur \le \frac{1}{2}$,
and $p_a + \epsilon_\rel \le \frac{1}{2}$,
it holds for the ``efficient BB84'' protocol that
\begin{eqnarray}
\frac{1}{2} \tr \left| \rho_{\mathrm{ABE}}
- \rho_\sU \otimes \rho_\sE \right|
&\le& e^{-\frac{1}{2} N p^2}
+ e^{-2 \left( \frac{n_z}{n + n_z} \right)^2
\left( \frac{p N}{2} - n_z \right) \epsilon_\rel^2} \nonumber \\
&+& e^{-\frac{1}{2} N (1 - p)^2}
+ e^{-2 \left( \frac{n_x}{n + n_x} \right)^2
\left( \frac{(1 - p) N}{2} - n_x \right) \epsilon_\rel^2}
+ 2^{n[H_2(p_a + \epsilon_\rel) - r/n]} \nonumber \\
&+& 2m \sqrt{e^{-\frac{1}{2} N p^2}
+ e^{-2 \left( \frac{n_x}{n + n_x} \right)^2
\left( \frac{p N}{2} - n_z \right) \epsilon_\secur^2} +
e^{-\frac{1}{2} N (1 - p)^2}
+ e^{-2 \left( \frac{n_z}{n + n_z} \right)^2
\left( \frac{(1 - p) N}{2} - n_x \right) \epsilon_\secur^2} +} \nonumber \\
&& \overline{2^{n[H_2(p_a + \epsilon_\secur) - (n-r-m)/n]}},
\end{eqnarray}
where $H_2(x) \triangleq -x \log_2(x) - (1-x) \log_2(1-x)$.
\end{theorem}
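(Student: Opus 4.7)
The plan is to mirror the strategy used in Theorem~\ref{thmsecurity_bb84} for standard BB84, but combined with the decomposition tools of Propositions~\ref{eff_bb84_probs} and~\ref{eff_bb84_nI}, which are tailored to the fact that the INFO partition now contains a random number of z-basis and x-basis qubits. Concretely, I would begin from Corollary~\ref{corollary_compos} with $t \triangleq n(p_a + \epsilon_\secur)$, which reduces the task to upper-bounding the reliability term $\pr[(\bk \ne \bk^\sB) \wedge (\bT = 1)]$ and the secrecy term $\pr_\invb[(|\bC_\sI|/n \ge t/n) \wedge (\bT = 1)]$. For the reliability term I would use exactly the argument of Theorem~\ref{thmsecurity2}: the event $\bk \ne \bk^\sB$ together with $\bT=1$ implies either that $|\bC_\sI|/n \ge p_a + \epsilon_\rel$ or that the randomly chosen parity-check code fails on an error word of weight at most $n(p_a + \epsilon_\rel)$; the latter contributes $2^{n[H_2(p_a+\epsilon_\rel) - r/n]}$ by Corollary~\ref{corollary_ecc_decoding}. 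Both residual INFO-error probabilities (for $\epsilon_\rel$ in the real protocol and for $\epsilon_\secur$ in the hypothetical protocol) then split into INFO-Z and INFO-X contributions via Proposition~\ref{eff_bb84_probs}.

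Each of the four resulting INFO-Z and INFO-X probabilities is then handled by Proposition~\ref{eff_bb84_nI} with a carefully chosen cutoff. I would set $\delta_z \triangleq \frac{pN}{2} - n_z$ when the INFO-Z count $N_{\sI_\sZ}$ appears, and $\delta_x \triangleq \frac{(1-p)N}{2} - n_x$ when $N_{\sI_\sX}$ appears; the hypotheses $n_z < pN/2$ and $n_x < (1-p)N/2$ guarantee $\delta_z, \delta_x > 0$. Since $N_{\sI_\sZ} = |\overline{\bb}| - n_z$ and $N_{\sI_\sX} = |\bb| - n_x$, Corollary~\ref{corollary_hoeffding_indep} immediately yields $\pr(N_{\sI_\sZ} \le \delta_z) \le e^{-Np^2/2}$ and $\pr(N_{\sI_\sX} \le \delta_x) \le e^{-N(1-p)^2/2}$, contributing the four ``$e^{-Np^2/2}$''/``$e^{-N(1-p)^2/2}$'' terms in the claim.

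It remains to bound, for every fixed $N_{\sI_\sZ} = t_z \in [\delta_z, n]$ (and analogously $N_{\sI_\sX} = t_x$), the conditional probability of the bad event. Once $N_{\sI_\sZ}$ is fixed, Alice's choice of which $t_z$ of the $t_z + n_z$ z-basis qubits become INFO-Z (in the real protocol) is uniform and is made only after Bob has measured, so Corollary~\ref{corollary_hoeffding} gives the Hoeffding bound $e^{-2(n_z/(t_z+n_z))^2 t_z \epsilon_\rel^2}$; using $t_z \le n$ and $t_z \ge \delta_z$ yields $(n_z/(t_z+n_z))^2 t_z \ge (n_z/(n+n_z))^2 (pN/2 - n_z)$, making the bound uniform over $t_z \in [\delta_z, n]$ and producing the matching term in the theorem. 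The three remaining cases are handled by the same Hoeffding argument after identifying the correct basis pool. Here is the key observation: in the hypothetical ``inverted-INFO-basis'' protocol the INFO-Z bits (original $b_j=0$) are transmitted in the $x$ basis together with the TEST-X bits, while the INFO-X bits are transmitted in the $z$ basis together with the TEST-Z bits. Thus the sampling for the INFO-Z/TEST-X comparison is over a pool of size $t_z + n_x$, giving the factor $(n_x/(n+n_x))^2(pN/2 - n_z)$, and symmetrically for INFO-X/TEST-Z; this accounts precisely for the apparent ``crossed'' $n_x,n_z$ pattern in the secrecy term of the statement.

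The main obstacle I anticipate is keeping track, without confusion, of which basis pool each sampling is performed over in each of the four cases (real vs. hypothetical protocol, INFO-Z vs. INFO-X), since the ``inverted-INFO-basis'' swap rearranges the Hoeffding partitions in a counter-intuitive way. Once this bookkeeping is done correctly, the cutoffs $\delta_z, \delta_x$ were chosen precisely so that the Hoeffding-type exponents from Corollaries~\ref{corollary_hoeffding} and~\ref{corollary_hoeffding_indep} combine into the single expression in the theorem. Substituting the four Hoeffding bounds together with the ECC-failure term into the expression from Corollary~\ref{corollary_compos} yields the stated inequality.
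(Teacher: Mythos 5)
Your proposal is correct and follows essentially the same route as the paper's proof: Corollary~\ref{corollary_compos} plus the ECC-failure bound for the reliability term, the INFO-Z/INFO-X split of Proposition~\ref{eff_bb84_probs}, the conditioning on $N_{\sI_\sZ}, N_{\sI_\sX}$ via Proposition~\ref{eff_bb84_nI} with exactly the cutoffs $\frac{pN}{2}-n_z$ and $\frac{(1-p)N}{2}-n_x$, Corollary~\ref{corollary_hoeffding_indep} for the small-count events, and Corollary~\ref{corollary_hoeffding} over the correctly identified (and, in the hypothetical protocol, ``crossed'') basis pools. The monotonicity step used to make the conditional Hoeffding bounds uniform over $t_z, t_x$ also matches the paper's maximization argument, so there is nothing to add.
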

\begin{proof}
By using Corollary~\ref{corollary_compos} and
Proposition~\ref{eff_bb84_probs} (and also Corollary~\ref{corollary_ecc_decoding},
similarly to the proof of Theorem~\ref{thmsecurity2}), we get the following bound:
\begin{eqnarray}
&&\frac{1}{2} \tr \left| \rho_{\mathrm{ABE}}
- \rho_\sU \otimes \rho_\sE \right|
\nonumber \\
&\le& \pr\left[ \left( \bk \ne \bk^\sB \right) \wedge
\left( \bT = 1 \right) \right] \nonumber \\
&+& 2m \sqrt{\pr_\invb\left[ \left( \frac{|\bC_\sI|}{n} \ge
\frac{t}{n} \right) \wedge \left( \bT = 1 \right) \right]
+ 2^{n[H_2(t/n) - (n-r-m)/n]}} \nonumber \\
&\le& \pr\left[ \left( \frac{|\bC_\sI|}{n} \ge
p_a + \epsilon_\rel \right) \wedge
\left( \bT = 1 \right) \right]
+ 2^{n[H_2(p_a + \epsilon_\rel) - r/n]} \nonumber \\
&+& 2m \sqrt{\pr_\invb\left[ \left( \frac{|\bC_\sI|}{n} \ge
p_a + \epsilon_\secur \right) \wedge \left( \bT = 1 \right) \right]
+ 2^{n[H_2(p_a + \epsilon_\secur) - (n-r-m)/n]}} \nonumber \\
&\le& \pr\left[ \left( \frac{|\bC_{\sI_\sZ}|}{N_{\sI_\sZ}} \ge
p_a + \epsilon_\rel \right) \wedge
\left( \bT = 1 \right) \right] \nonumber \\
&+& \pr\left[ \left( \frac{|\bC_{\sI_\sX}|}{N_{\sI_\sX}} \ge
p_a + \epsilon_\rel \right) \wedge
\left( \bT = 1 \right) \right]
+ 2^{n[H_2(p_a + \epsilon_\rel) - r/n]} \nonumber \\
&+& 2m \sqrt{\pr_\invb\left[ \left( \frac{|\bC_{\sI_\sZ}|}{N_{\sI_\sZ}} \ge
p_a + \epsilon_\secur \right) \wedge
\left( \bT = 1 \right) \right] +} \nonumber \\
&&\overline{\pr_\invb\left[ \left( \frac{|\bC_{\sI_\sX}|}{N_{\sI_\sX}} \ge
p_a + \epsilon_\secur \right) \wedge
\left( \bT = 1 \right) \right]
+ 2^{n[H_2(p_a + \epsilon_\secur) - (n-r-m)/n]}}.\label{eff_bb84_bound_0}
\end{eqnarray}

Proposition~\ref{eff_bb84_nI} and the definition of $T$
give us the following bounds:
\begin{eqnarray}
&&\pr\left[ \left( \frac{|\bC_{\sI_\sZ}|}{N_{\sI_\sZ}}
\ge p_a + \epsilon_\rel \right) \wedge \left( \bT = 1 \right) \right]
\le \pr\left( N_{\sI_\sZ} \le \frac{p N}{2} - n_z \right) \nonumber \\
&+& \max_{\frac{p N}{2} - n_z \le t_z \le n}
\pr\left[ \left( \frac{|\bC_{\sI_\sZ}|}{t_z} \ge p_a + \epsilon_\rel
\right) \wedge \left( \frac{|\bC_{\sT_\sZ}|}{n_z} \le p_a \right)
\mid N_{\sI_\sZ} = t_z \right],\label{eff_bb84_ineq_iz_tz} \\
&&\pr\left[ \left( \frac{|\bC_{\sI_\sX}|}{N_{\sI_\sX}}
\ge p_a + \epsilon_\rel \right) \wedge \left( \bT = 1 \right) \right]
\le \pr\left( N_{\sI_\sX} \le \frac{(1 - p) N}{2} - n_x \right)
\nonumber \\
&+& \max_{\frac{(1 - p) N}{2} - n_x \le t_x \le n}
\pr\left[ \left( \frac{|\bC_{\sI_\sX}|}{t_x} \ge p_a + \epsilon_\rel
\right) \wedge \left( \frac{|\bC_{\sT_\sX}|}{n_x} \le p_a \right)
\mid N_{\sI_\sX} = t_x \right],\label{eff_bb84_ineq_ix_tx} \\
&&\pr_\invb\left[ \left( \frac{|\bC_{\sI_\sZ}|}{N_{\sI_\sZ}} \ge p_a +
\epsilon_\secur \right) \wedge \left( \bT = 1 \right) \right]
\nonumber \\
&\le& \pr_\invb\left( N_{\sI_\sZ} \le \frac{p N}{2} - n_z \right)
+ \max_{\frac{p N}{2} - n_z \le t_z \le n} \nonumber \\
&&\pr_\invb\left[ \left( \textstyle \frac{|\bC_{\sI_\sZ}|}{t_z} \ge p_a +
\epsilon_\secur \right) \wedge
\left( \textstyle \frac{|\bC_{\sT_\sX}|}{n_x} \le p_a \right)
\mid N_{\sI_\sZ} = t_z \right],\label{eff_bb84_ineq_iz_tx} \\
&&\pr_\invb\left[ \left( \frac{|\bC_{\sI_\sX}|}{N_{\sI_\sX}} \ge p_a +
\epsilon_\secur \right) \wedge \left( \bT = 1 \right) \right]
\nonumber \\
&\le& \pr_\invb\left( N_{\sI_\sX} \le \frac{(1 - p) N}{2} - n_x \right)
+ \max_{\frac{(1 - p) N}{2} - n_x \le t_x \le n} \nonumber \\
&&\pr_\invb\left[ \left( \textstyle \frac{|\bC_{\sI_\sX}|}{t_x} \ge p_a +
\epsilon_\secur \right) \wedge
\left( \textstyle \frac{|\bC_{\sT_\sZ}|}{n_z} \le p_a \right)
\mid N_{\sI_\sX} = t_x \right].\label{eff_bb84_ineq_ix_tz}
\end{eqnarray}

For each one of
Equations~\eqref{eff_bb84_ineq_iz_tz}--\eqref{eff_bb84_ineq_ix_tz},
we need to upper-bound two probabilities.
For bounding the first set of probabilities,
we use the results of Corollary~\ref{corollary_hoeffding_indep}:
\begin{eqnarray}
\pr\left( |\bb| \le \frac{(1 - p) N}{2} \right) &\le&
e^{-\frac{1}{2} N (1 - p)^2}, \\
\pr\left( |\overline{\bb}| \le \frac{p N}{2} \right) &\le&
e^{-\frac{1}{2} N p^2}.
\end{eqnarray}
We notice that $|\bb| = N_{\sI_\sX} + n_x$ and
$|\overline{\bb}| = N_{\sI_\sZ} + n_z$; therefore,
\begin{eqnarray}
\pr\left( N_{\sI_\sX} \le \frac{(1 - p) N}{2} - n_x \right)
&\le& e^{-\frac{1}{2} N (1 - p)^2},\label{eff_bb84_bound_b1}\\
\pr\left( N_{\sI_\sZ} \le \frac{p N}{2} - n_z \right)
&\le& e^{-\frac{1}{2} N p^2},\label{eff_bb84_bound_b2}\\
\pr_\invb\left( N_{\sI_\sX} \le \frac{(1 - p) N}{2} - n_x \right)
&\le& e^{-\frac{1}{2} N (1 - p)^2},\label{eff_bb84_bound_b3}\\
\pr_\invb\left( N_{\sI_\sZ} \le \frac{p N}{2} - n_z \right)
&\le& e^{-\frac{1}{2} N p^2}.\label{eff_bb84_bound_b4}
\end{eqnarray}

For bounding the second set of probabilities,
{\em given} specific values of $N_{\sI_\sZ} = t_z$
and $N_{\sI_\sX} = t_x$,
we use Corollary~\ref{corollary_hoeffding}:

In the real protocol, the
INFO-Z and TEST-Z bits are sent and measured in the $z$ basis, while
the INFO-X and TEST-X bits are sent and measured in the $x$ basis.
Therefore, the random and uniform sampling of the $t_z$ INFO-Z bits
out of the $t_z + n_z$ bits sent in the $z$ basis
(assuming that the INFO-X and TEST-X bits have already been chosen)
does not affect the bases in the real protocol;
similarly, the random and uniform sampling of the $t_x$ INFO-X bits
out of the $t_x + n_x$ bits sent in the $x$ basis
(assuming that the INFO-Z and TEST-Z bits have already been chosen)
does not affect the bases in the real protocol.
We note that these samplings {\em are} uniform,
because the probability $\pr(\bs \mid \bb)$ is uniform
for all the allowed values of $\bb$ and $\bs$.
This means that we can apply Corollary~\ref{corollary_hoeffding}
to both of these samplings, and we get
\begin{eqnarray}
\pr\left[ \left( \frac{|\bC_{\sI_\sZ}|}{t_z} \ge
p_a + \epsilon_\rel \right) \wedge
\left( \frac{|\bC_{\sT_\sZ}|}{n_z} \le p_a \right)
\mid N_{\sI_\sZ} = t_z \right]
&\le& e^{-2 \left( \frac{n_z}{t_z + n_z} \right)^2
t_z \epsilon_\rel^2}, \\
\pr\left[ \left( \frac{|\bC_{\sI_\sX}|}{t_x} \ge
p_a + \epsilon_\rel \right) \wedge
\left( \frac{|\bC_{\sT_\sX}|}{n_x} \le p_a \right)
\mid N_{\sI_\sX} = t_x \right]
&\le& e^{-2 \left( \frac{n_x}{t_x + n_x} \right)^2
t_x \epsilon_\rel^2}.
\end{eqnarray}
Maximizing over $t_z$ and $t_x$, we get:
\begin{eqnarray}
&&\max_{\frac{p N}{2} - n_z \le t_z \le n}
\pr\left[ \left( \frac{|\bC_{\sI_\sZ}|}{t_z} \ge
p_a + \epsilon_\rel \right) \wedge
\left( \frac{|\bC_{\sT_\sZ}|}{n_z} \le p_a \right)
\mid N_{\sI_\sZ} = t_z \right] \nonumber \\
&\le& e^{-2 \left( \frac{n_z}{n + n_z} \right)^2
\left( \frac{p N}{2} - n_z \right) \epsilon_\rel^2},\label{eff_bb84_bound_m1}\\
&&\max_{\frac{(1 - p) N}{2} - n_x \le t_x \le n}
\pr\left[ \left( \frac{|\bC_{\sI_\sX}|}{t_x} \ge
p_a + \epsilon_\rel \right) \wedge
\left( \frac{|\bC_{\sT_\sX}|}{n_x} \le p_a \right)
\mid N_{\sI_\sX} = t_x \right] \nonumber \\
&\le& e^{-2 \left( \frac{n_x}{n + n_x} \right)^2
\left( \frac{(1 - p) N}{2} - n_x \right)
\epsilon_\rel^2}.\label{eff_bb84_bound_m2}
\end{eqnarray}

In the hypothetical ``inverted-INFO-basis'' protocol, the
INFO-X and TEST-Z bits are sent and measured in the $z$ basis, while
the INFO-Z and TEST-X bits are sent and measured in the $x$ basis.
Therefore, the random and uniform sampling of the $t_x$ INFO-X bits
out of the $t_x + n_z$ bits sent in the $z$ basis
(assuming that the INFO-Z and TEST-X bits have already been chosen)
does not affect the bases in the hypothetical protocol;
similarly, the random and uniform sampling of the $t_z$ INFO-Z bits
out of the $t_z + n_x$ bits sent in the $x$ basis
(assuming that the INFO-X and TEST-Z bits have already been chosen)
does not affect the bases in the hypothetical protocol.
We note that these samplings {\em are} uniform,
because the probability $\pr(\bb)$ depends only on $|\bb|$
and is invariant to permutations.
This means that we can apply Corollary~\ref{corollary_hoeffding}
to both of these samplings, and we get
\begin{eqnarray}
&&\pr_\invb\left[ \left( \frac{|\bC_{\sI_\sZ}|}{t_z} \ge
p_a + \epsilon_\secur \right) \wedge
\left( \frac{|\bC_{\sT_\sX}|}{n_x} \le p_a \right)
\mid N_{\sI_\sZ} = t_z \right] \nonumber \\
&\le& e^{-2 \left( \frac{n_x}{t_z + n_x} \right)^2
t_z \epsilon_\secur^2}, \\
&&\pr_\invb\left[ \left( \frac{|\bC_{\sI_\sX}|}{t_x} \ge
p_a + \epsilon_\secur \right) \wedge
\left( \frac{|\bC_{\sT_\sZ}|}{n_z} \le p_a \right)
\mid N_{\sI_\sX} = t_x \right] \nonumber \\
&\le& e^{-2 \left( \frac{n_z}{t_x + n_z} \right)^2
t_x \epsilon_\secur^2}.
\end{eqnarray}
Maximizing over $t_z$ and $t_x$, we get:
\begin{eqnarray}
&&\max_{\frac{p N}{2} - n_z \le t_z \le n}
\pr_\invb\left[ \left( \frac{|\bC_{\sI_\sZ}|}{t_z} \ge
p_a + \epsilon_\secur \right) \wedge
\left( \frac{|\bC_{\sT_\sX}|}{n_x} \le p_a \right)
\mid N_{\sI_\sZ} = t_z \right] \nonumber \\
&\le& e^{-2 \left( \frac{n_x}{n + n_x} \right)^2
\left( \frac{p N}{2} - n_z \right) \epsilon_\secur^2},\label{eff_bb84_bound_m3}\\
&&\max_{\frac{(1 - p) N}{2} - n_x \le t_x \le n}
\pr_\invb\left[ \left( \frac{|\bC_{\sI_\sX}|}{t_x} \ge
p_a + \epsilon_\secur \right) \wedge
\left( \frac{|\bC_{\sT_\sZ}|}{n_z} \le p_a \right)
\mid N_{\sI_\sX} = t_x \right] \nonumber \\
&\le& e^{-2 \left( \frac{n_z}{n + n_z} \right)^2
\left( \frac{(1 - p) N}{2} - n_x \right) \epsilon_\secur^2}.\label{eff_bb84_bound_m4}
\end{eqnarray}

Substituting Equations~\eqref{eff_bb84_bound_b1}--\eqref{eff_bb84_bound_b4},
\eqref{eff_bb84_bound_m1}--\eqref{eff_bb84_bound_m2},
and~\eqref{eff_bb84_bound_m3}--\eqref{eff_bb84_bound_m4}
into Equations~\eqref{eff_bb84_ineq_iz_tz}--\eqref{eff_bb84_ineq_ix_tz},
and substituting the results into Equation~\eqref{eff_bb84_bound_0},
we get the desired bound:
\begin{eqnarray}
\frac{1}{2} \tr \left| \rho_{\mathrm{ABE}}
- \rho_\sU \otimes \rho_\sE \right|
&\le& e^{-\frac{1}{2} N p^2}
+ e^{-2 \left( \frac{n_z}{n + n_z} \right)^2
\left( \frac{p N}{2} - n_z \right) \epsilon_\rel^2} \nonumber \\
&+& e^{-\frac{1}{2} N (1 - p)^2}
+ e^{-2 \left( \frac{n_x}{n + n_x} \right)^2
\left( \frac{(1 - p) N}{2} - n_x \right) \epsilon_\rel^2}
+ 2^{n[H_2(p_a + \epsilon_\rel) - r/n]} \nonumber \\
&+& 2m \sqrt{e^{-\frac{1}{2} N p^2}
+ e^{-2 \left( \frac{n_x}{n + n_x} \right)^2
\left( \frac{p N}{2} - n_z \right) \epsilon_\secur^2} +
e^{-\frac{1}{2} N (1 - p)^2}
+ e^{-2 \left( \frac{n_z}{n + n_z} \right)^2
\left( \frac{(1 - p) N}{2} - n_x \right) \epsilon_\secur^2} +} \nonumber \\
&& \overline{2^{n[H_2(p_a + \epsilon_\secur) - (n-r-m)/n]}}.
\end{eqnarray}
\end{proof}

Similarly to the standard BB84 protocol (see Subsection~\ref{subsec_bb84}),
we can obtain the following upper bound on the bit-rate:
\begin{equation}
R_{\mathrm{secret}} \triangleq \frac{m}{n}
< 1 - H_2(p_a + \epsilon_\secur)
- H_2(p_a + \epsilon_\rel).
\end{equation}

To get the asymptotic error rate threshold, we require
$R_{\mathrm{secret}} > 0$, and we get the asymptotic condition $2 H_2(p_a) < 1$.
This condition gives an asymptotic error rate threshold of $11\%$.

\subsection{\label{subsec_modif_eff_bb84}The ``Modified
Efficient BB84'' Protocol}
A relatively minor property of the definition of the ``efficient BB84''
protocol in~\cite{LCA05} (and in Subsection~\ref{subsec_eff_bb84})
makes both the security bound and the security proof
pretty complicated.
In this Subsection, we describe a modified protocol that has
an easier security proof. The only modification in this protocol
is setting the number of INFO-Z and INFO-X bits to be fixed,
rather than letting them vary probabilistically.
This change simplifies the description of the protocol,
because it is no longer needed to set the probability $p$
and to treat illegal choices of $\bb, \bs$
(see Remark~\ref{eff_bb84_remark});
and it also simplifies the security proof,
because it is no longer needed to probabilistically analyze
the numbers of INFO-Z and INFO-X bits
(as done in Subsection~\ref{subsec_eff_bb84}).

In the ``modified efficient BB84'' protocol,
the strings $\bb$ and $\bs$ together define
a random partition of the set of indexes
$\lbrace 1, 2, \ldots, N \rbrace$ into four disjoint sets:
\begin{itemize}
\item $\sI_\sZ$ (INFO-Z bits, where $s_j = 1$ and $b_j = 0$)
of size $t_z$;
\item $\sI_\sX$ (INFO-X bits, where $s_j = 1$ and $b_j = 1$)
of size $t_x$;
\item $\sT_\sZ$ (TEST-Z bits, where $s_j = 0$ and $b_j = 0$)
of size $n_z$; and
\item $\sT_\sX$ (TEST-X bits, where $s_j = 0$ and $b_j = 1$)
of size $n_x$.
\end{itemize}

Formally, in ``modified efficient BB84'',
Alice and Bob agree on parameters $t_z, t_x, n_z, n_x$
(such that $N = n + n_z + n_x$ and $n = t_z + t_x$),
and we choose $B = \{\bb \in \mathbf{F}^{N}_2 \mid |\bb| = t_x + n_x\}$
and $S_\bb = \{\bs \in \mathbf{F}^{N}_2 \mid (|\bs| = n) \wedge
(|\overline{\bs} \wedge \bb| = n_x)\}$ for all $\bb \in B$
(namely, it is required that there are $t_z$ INFO-Z bits,
$t_x$ INFO-X bits, $n_z$ TEST-Z bits, and $n_x$ TEST-X bits).
The probability distributions $\pr(\bb)$ and $\pr(\bs \mid \bb)$
are uniform (because $|\bb|$, which is the only parameter that
affects $\pr(\bb)$ in Subsection~\ref{subsec_eff_bb84},
is fixed in the modified protocol).

Alice and Bob also agree on an error rate threshold, $p_a$
(applied {\em both} to the TEST-Z bits and to the TEST-X bits).
The testing function $T$ is defined as follows:
\begin{equation}
T(\bi_\sT \oplus \bj_\sT, \bb_\sT, \bs) = 1 ~ \Leftrightarrow ~
\left( |\bi_{\sT_\sZ} \oplus \bj_{\sT_\sZ}| \le n_z \cdot p_a \right)
\wedge
\left( |\bi_{\sT_\sX} \oplus \bj_{\sT_\sX}| \le n_x \cdot p_a \right).
\end{equation}
Namely, the test passes if and only if
the error rate on the TEST-Z bits is at most $p_a$ {\em and}
the error rate on the TEST-X bits is at most $p_a$.

\begin{proposition}\label{modif_eff_bb84_probs}
For any $\epsilon > 0$,
\begin{eqnarray}
\pr\left[ \left( \frac{|\bC_\sI|}{n} \ge p_a + \epsilon \right)
\wedge \left( \bT = 1 \right) \right]
&\le& \pr\left[ \left( \frac{|\bC_{\sI_\sZ}|}{t_z} \ge p_a +
\epsilon \right) \wedge \left( \bT = 1 \right) \right] \nonumber \\
&+& \pr\left[ \left( \frac{|\bC_{\sI_\sX}|}{t_x} \ge p_a + \epsilon
\right) \wedge \left( \bT = 1 \right) \right].
\label{modif_eff_bb84_probs_real}
\end{eqnarray}

Equation~\eqref{modif_eff_bb84_probs_real} similarly applies to the
hypothetical ``inverted-INFO-basis'' protocol, too
(namely, it applies even if $\pr$ is replaced by $\pr_\invb$).
\end{proposition}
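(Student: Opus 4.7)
The plan is to mirror the argument used for Proposition~\ref{eff_bb84_probs}, but exploit the considerable simplification that $t_z$ and $t_x$ are now deterministic parameters chosen in advance by Alice and Bob rather than random variables determined by $\bb$. The key pigeonhole-type observation is that $|\bC_\sI| = |\bC_{\sI_\sZ}| + |\bC_{\sI_\sX}|$ and $n = t_z + t_x$, so $\frac{|\bC_\sI|}{n}$ is a convex combination of $\frac{|\bC_{\sI_\sZ}|}{t_z}$ and $\frac{|\bC_{\sI_\sX}|}{t_x}$ with weights $\frac{t_z}{n}$ and $\frac{t_x}{n}$. A convex combination cannot exceed the maximum of its components, so the event $\left(\frac{|\bC_\sI|}{n} \ge p_a + \epsilon\right)$ implies the disjunction $\left(\frac{|\bC_{\sI_\sZ}|}{t_z} \ge p_a + \epsilon\right) \vee \left(\frac{|\bC_{\sI_\sX}|}{t_x} \ge p_a + \epsilon\right)$.

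From this implication I would conjoin both sides with the event $\left(\bT = 1\right)$, take probabilities, and apply the union bound to obtain the claimed inequality. Because $t_z$ and $t_x$ are fixed constants and not random, there is no need for any intermediate step analogous to Proposition~\ref{eff_bb84_nI} that conditions on the observed values of $N_{\sI_\sZ}, N_{\sI_\sX}$ and pays an extra term bounding the probability that these random variables are too small; this is precisely why the modified protocol yields a simpler security statement.

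The same reasoning transfers verbatim to the hypothetical ``inverted-INFO-basis'' protocol, since the identity $|\bC_\sI| = |\bC_{\sI_\sZ}| + |\bC_{\sI_\sX}|$ and the weights $t_z, t_x$ are purely combinatorial and do not depend on which basis string is actually used by Alice when sending the qubits. Replacing $\pr$ by $\pr_\invb$ throughout produces the second form of the inequality. No step of the argument presents any real obstacle; the whole proof is essentially a one-line application of the convex-combination observation followed by a union bound, and its brevity is precisely the payoff motivating the modified protocol.
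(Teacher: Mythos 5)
Your proposal is correct and matches the paper's argument: the paper proves this by observing that if the overall INFO error rate is at least $p_a+\epsilon$ then at least one of the INFO-Z or INFO-X error rates must be (the convex-combination/pigeonhole fact you state explicitly), followed by a union bound, and simply cites the proof of the ``efficient BB84'' analogue verbatim. Your added remark that no intermediate conditioning on the sizes $N_{\sI_\sZ}, N_{\sI_\sX}$ is needed here, since $t_z$ and $t_x$ are fixed, also correctly reflects why the modified protocol's proof is simpler.
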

\begin{proof}
The same proof as Proposition~\ref{eff_bb84_probs}. 
\end{proof}

\begin{theorem}\label{thmsecurity_modif_eff_bb84}
For any $n, t_z, t_x, n_z, n_x > 0$,
$p_a > 0$, and $\epsilon_\secur, \epsilon_\rel > 0$
such that $t \triangleq n (p_a + \epsilon_\secur)$
and $t_\rel \triangleq n (p_a + \epsilon_\rel)$ are both integers,
$p_a + \epsilon_\secur \le \frac{1}{2}$,
$p_a + \epsilon_\rel \le \frac{1}{2}$,
and $n = t_z + t_x$,
it holds for the ``modified efficient BB84'' protocol that
\begin{eqnarray}
&&\frac{1}{2} \tr \left| \rho_{\mathrm{ABE}}
- \rho_\sU \otimes \rho_\sE \right| \nonumber \\
&\le& e^{-2 \left( \frac{n_z}{t_z + n_z} \right)^2
t_z \epsilon_\rel^2}
+ e^{-2 \left( \frac{n_x}{t_x + n_x} \right)^2
t_x \epsilon_\rel^2}
+ 2^{n[H_2(p_a + \epsilon_\rel) - r/n]} \nonumber \\
&+& 2m \sqrt{e^{-2 \left( \frac{n_x}{t_z + n_x} \right)^2
t_z \epsilon_\secur^2}
+ e^{-2 \left( \frac{n_z}{t_x + n_z} \right)^2
t_x \epsilon_\secur^2}
+ 2^{n[H_2(p_a + \epsilon_\secur) - (n-r-m)/n]}},
\end{eqnarray}
where $H_2(x) \triangleq -x \log_2(x) - (1-x) \log_2(1-x)$.
\end{theorem}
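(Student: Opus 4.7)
The plan is to follow the structure of the proof of Theorem~\ref{thmsecurity_eff_bb84}, but the fixed sizes $t_z$ and $t_x$ will eliminate the need for the probabilistic analysis of $N_{\sI_\sZ}$ and $N_{\sI_\sX}$ via Corollary~\ref{corollary_hoeffding_indep} and Proposition~\ref{eff_bb84_nI}. First, I would apply Corollary~\ref{corollary_compos} with the integer $t \triangleq n(p_a + \epsilon_\secur)$, and use Corollary~\ref{corollary_ecc_decoding} (exactly as in the proof of Theorem~\ref{thmsecurity2}) to replace the $\bk \ne \bk^\sB$ event by the event that $\frac{|\bC_\sI|}{n} \ge p_a + \epsilon_\rel$ up to an additive term $2^{n[H_2(p_a + \epsilon_\rel) - r/n]}$. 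This produces a bound involving $\pr[(|\bC_\sI|/n \ge p_a + \epsilon_\rel) \wedge (\bT = 1)]$ and $\pr_\invb[(|\bC_\sI|/n \ge p_a + \epsilon_\secur) \wedge (\bT = 1)]$.

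Next, I would apply Proposition~\ref{modif_eff_bb84_probs} to split each of these two probabilities into a sum over the INFO-Z contribution (relative to $t_z$) and the INFO-X contribution (relative to $t_x$), giving four conditional-error-rate probabilities (two for the real protocol, two for the hypothetical inverted-INFO-basis protocol). The main work is then to bound each of these four probabilities by a single application of Corollary~\ref{corollary_hoeffding}. In the real protocol, the INFO-Z and TEST-Z bits are all sent in the $z$ basis, so the partition of these $t_z + n_z$ bits is truly uniform and independent of the bases and of Eve's attack; Corollary~\ref{corollary_hoeffding} applied with population size $t_z + n_z$ and sample size $t_z$ yields $e^{-2(n_z/(t_z+n_z))^2 t_z \epsilon_\rel^2}$. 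Analogously, pairing INFO-X with TEST-X in the real protocol gives $e^{-2(n_x/(t_x+n_x))^2 t_x \epsilon_\rel^2}$.

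For the hypothetical inverted-INFO-basis protocol, the pairings are flipped: INFO-Z bits (sent in $x$ in the hypothetical protocol) are sampled uniformly among the $t_z + n_x$ bits sent in $x$, alongside the TEST-X bits; and INFO-X bits (sent in $z$ in the hypothetical protocol) are sampled uniformly among the $t_x + n_z$ bits sent in $z$, alongside the TEST-Z bits. Here I must verify that the sampling remains uniform even after fixing $\bb^0$: since $\pr(\bs \mid \bb)$ is uniform over all valid $\bs$, conditioning on the base string used does not bias the partition into INFO and TEST within each basis. Two more applications of Corollary~\ref{corollary_hoeffding} then yield the bounds $e^{-2(n_x/(t_z+n_x))^2 t_z \epsilon_\secur^2}$ and $e^{-2(n_z/(t_x+n_z))^2 t_x \epsilon_\secur^2}$.

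The potential obstacle I anticipate is precisely the uniformity check in the previous paragraph: one must confirm that in the modified protocol (unlike in Subsection~\ref{subsec_eff_bb84}) there is no ``basis-bias'' issue, because $|\bb|$ is now fixed and $\pr(\bb)$ is uniform over $B$, so conditioning on the basis string used does not create non-uniform INFO/TEST partitions. Once this is verified, combining all four bounds and the two $H_2$ terms into Corollary~\ref{corollary_compos} gives the displayed inequality directly, with no need for a maximization over $t_z, t_x$ and no need for the $e^{-\frac{1}{2}Np^2}$, $e^{-\frac{1}{2}N(1-p)^2}$ terms that arose in Theorem~\ref{thmsecurity_eff_bb84}.
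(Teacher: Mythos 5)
Your proposal is correct and follows essentially the same route as the paper's proof: Corollary~\ref{corollary_compos} plus Corollary~\ref{corollary_ecc_decoding} to reduce to error-rate events, Proposition~\ref{modif_eff_bb84_probs} to split into INFO-Z and INFO-X contributions, and four direct applications of Corollary~\ref{corollary_hoeffding} with exactly the pairings and exponents you state (INFO-Z with TEST-Z and INFO-X with TEST-X in the real protocol; INFO-Z with TEST-X and INFO-X with TEST-Z in the hypothetical protocol). The uniformity observation you flag is precisely the justification the paper gives for applying Corollary~\ref{corollary_hoeffding} in each case.
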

\begin{proof}
By using Corollary~\ref{corollary_compos} and
Proposition~\ref{modif_eff_bb84_probs}
(and also Corollary~\ref{corollary_ecc_decoding},
similarly to the proof of Theorem~\ref{thmsecurity2}), we get the following bound:
\begin{eqnarray}
&&\frac{1}{2} \tr \left| \rho_{\mathrm{ABE}}
- \rho_\sU \otimes \rho_\sE \right| \nonumber \\
&\le& \pr\left[ \left( \bk \ne \bk^\sB \right) \wedge
\left( \bT = 1 \right) \right] \nonumber \\
&+& 2m \sqrt{\pr_\invb\left[ \left( \frac{|\bC_\sI|}{n} \ge
\frac{t}{n} \right) \wedge \left( \bT = 1 \right) \right]
+ 2^{n[H_2(t/n) - (n-r-m)/n]}} \nonumber \\
&\le& \pr\left[ \left( \frac{|\bC_\sI|}{n} \ge
p_a + \epsilon_\rel \right) \wedge
\left( \bT = 1 \right) \right]
+ 2^{n[H_2(p_a + \epsilon_\rel) - r/n]} \nonumber \\
&+& 2m \sqrt{\pr_\invb\left[ \left( \frac{|\bC_\sI|}{n} \ge
p_a + \epsilon_\secur \right) \wedge \left( \bT = 1 \right) \right]
+ 2^{n[H_2(p_a + \epsilon_\secur) - (n-r-m)/n]}} \nonumber \\
&\le& \pr\left[ \left( \frac{|\bC_{\sI_\sZ}|}{t_z} \ge
p_a + \epsilon_\rel \right) \wedge
\left( \bT = 1 \right) \right] \nonumber \\
&+& \pr\left[ \left( \frac{|\bC_{\sI_\sX}|}{t_x} \ge
p_a + \epsilon_\rel \right) \wedge
\left( \bT = 1 \right) \right]
+ 2^{n[H_2(p_a + \epsilon_\rel) - r/n]} \nonumber \\
&+& 2m \sqrt{\pr_\invb\left[ \left( \frac{|\bC_{\sI_\sZ}|}{t_z} \ge
p_a + \epsilon_\secur \right) \wedge
\left( \bT = 1 \right) \right] +} \nonumber \\
&&\overline{\pr_\invb\left[ \left( \frac{|\bC_{\sI_\sX}|}{t_x} \ge
p_a + \epsilon_\secur \right) \wedge
\left( \bT = 1 \right) \right]
+ 2^{n[H_2(p_a + \epsilon_\secur) - (n-r-m)/n]}} \nonumber \\
&\le& \pr\left[ \left( \frac{|\bC_{\sI_\sZ}|}{t_z} \ge
p_a + \epsilon_\rel \right) \wedge
\left( \frac{|\bC_{\sT_\sZ}|}{n_z} \le p_a \right) \right] \nonumber \\
&+& \pr\left[ \left( \frac{|\bC_{\sI_\sX}|}{t_x} \ge
p_a + \epsilon_\rel \right) \wedge
\left( \frac{|\bC_{\sT_\sX}|}{n_x} \le p_a \right) \right]
+ 2^{n[H_2(p_a + \epsilon_\rel) - r/n]} \nonumber \\
&+& 2m \sqrt{\pr_\invb\left[ \left( \frac{|\bC_{\sI_\sZ}|}{t_z} \ge
p_a + \epsilon_\secur \right) \wedge
\left( \frac{|\bC_{\sT_\sX}|}{n_x} \le p_a \right) \right] +} \nonumber \\
&&\overline{\pr_\invb\left[ \left( \frac{|\bC_{\sI_\sX}|}{t_x} \ge
p_a + \epsilon_\secur \right) \wedge
\left( \frac{|\bC_{\sT_\sZ}|}{n_z} \le p_a \right) \right] +} \nonumber \\
&&\overline{2^{n[H_2(p_a + \epsilon_\secur)
- (n-r-m)/n]}}.\label{modif_eff_bb84_bound_0}
\end{eqnarray}
For bounding these probabilities,
we use Corollary~\ref{corollary_hoeffding}:

In the real protocol, the
INFO-Z and TEST-Z bits are sent and measured in the $z$ basis, while
the INFO-X and TEST-X bits are sent and measured in the $x$ basis.
Therefore, the random and uniform sampling of the $t_z$ INFO-Z bits
out of the $t_z + n_z$ bits sent in the $z$ basis
(assuming that the INFO-X and TEST-X bits have already been chosen)
does not affect the bases in the real protocol;
similarly, the random and uniform sampling of the $t_x$ INFO-X bits
out of the $t_x + n_x$ bits sent in the $x$ basis
(assuming that the INFO-Z and TEST-Z bits have already been chosen)
does not affect the bases in the real protocol.
This means that we can apply Corollary~\ref{corollary_hoeffding}
to both of these samplings, and we get
\begin{eqnarray}
\pr\left[ \left( \frac{|\bC_{\sI_\sZ}|}{t_z} \ge
p_a + \epsilon_\rel \right) \wedge
\left( \frac{|\bC_{\sT_\sZ}|}{n_z} \le p_a \right) \right]
&\le& e^{-2 \left( \frac{n_z}{t_z + n_z} \right)^2
t_z \epsilon_\rel^2},\label{modif_eff_bb84_bound_1}\\
\pr\left[ \left( \frac{|\bC_{\sI_\sX}|}{t_x} \ge
p_a + \epsilon_\rel \right) \wedge
\left( \frac{|\bC_{\sT_\sX}|}{n_x} \le p_a \right) \right]
&\le& e^{-2 \left( \frac{n_x}{t_x + n_x} \right)^2
t_x \epsilon_\rel^2}.\label{modif_eff_bb84_bound_2}
\end{eqnarray}

In the hypothetical ``inverted-INFO-basis'' protocol, the
INFO-X and TEST-Z bits are sent and measured in the $z$ basis, while
the INFO-Z and TEST-X bits are sent and measured in the $x$ basis.
Therefore, the random and uniform sampling of the $t_x$ INFO-X bits
out of the $t_x + n_z$ bits sent in the $z$ basis
(assuming that the INFO-Z and TEST-X bits have already been chosen)
does not affect the bases in the hypothetical protocol;
similarly, the random and uniform sampling of the $t_z$ INFO-Z bits
out of the $t_z + n_x$ bits sent in the $x$ basis
(assuming that the INFO-X and TEST-Z bits have already been chosen)
does not affect the bases in the hypothetical protocol.
This means that we can apply Corollary~\ref{corollary_hoeffding}
to both of these samplings, and we get
\begin{eqnarray}
&&\pr_\invb\left[ \left( \frac{|\bC_{\sI_\sZ}|}{t_z} \ge
p_a + \epsilon_\secur \right) \wedge
\left( \frac{|\bC_{\sT_\sX}|}{n_x} \le p_a \right) \right] \nonumber \\
&\le& e^{-2 \left( \frac{n_x}{t_z + n_x} \right)^2
t_z \epsilon_\secur^2},\label{modif_eff_bb84_bound_3}\\
&&\pr_\invb\left[ \left( \frac{|\bC_{\sI_\sX}|}{t_x} \ge
p_a + \epsilon_\secur \right) \wedge
\left( \frac{|\bC_{\sT_\sZ}|}{n_z} \le p_a \right) \right] \nonumber \\
&\le& e^{-2 \left( \frac{n_z}{t_x + n_z} \right)^2
t_x \epsilon_\secur^2}.\label{modif_eff_bb84_bound_4}
\end{eqnarray}

Substituting
Equations~\eqref{modif_eff_bb84_bound_1}--\eqref{modif_eff_bb84_bound_4}
into Equation~\eqref{modif_eff_bb84_bound_0}, we get the following bound:
\begin{eqnarray}
&&\frac{1}{2} \tr \left| \rho_{\mathrm{ABE}}
- \rho_\sU \otimes \rho_\sE \right| \nonumber \\
&\le& e^{-2 \left( \frac{n_z}{t_z + n_z} \right)^2
t_z \epsilon_\rel^2}
+ e^{-2 \left( \frac{n_x}{t_x + n_x} \right)^2
t_x \epsilon_\rel^2}
+ 2^{n[H_2(p_a + \epsilon_\rel) - r/n]} \nonumber \\
&+& 2m \sqrt{e^{-2 \left( \frac{n_x}{t_z + n_x} \right)^2
t_z \epsilon_\secur^2}
+ e^{-2 \left( \frac{n_z}{t_x + n_z} \right)^2
t_x \epsilon_\secur^2}
+ 2^{n[H_2(p_a + \epsilon_\secur) - (n-r-m)/n]}}.
\end{eqnarray}
\end{proof}

Similarly to the standard BB84 and ``efficient BB84'' protocols
(see Subsections~\ref{subsec_bb84} and~\ref{subsec_eff_bb84}),
we can obtain the following upper bound on the bit-rate:
\begin{equation}
R_{\mathrm{secret}} \triangleq \frac{m}{n}
< 1 - H_2(p_a + \epsilon_\secur)
- H_2(p_a + \epsilon_\rel).
\end{equation}

To get the asymptotic error rate threshold, we require
$R_{\mathrm{secret}} > 0$, and we get the asymptotic condition $2 H_2(p_a) < 1$.
This condition gives an asymptotic error rate threshold of $11\%$.

\section{Conclusion}
To sum up, we have found a new way for proving the composable
security of BB84 and of similar QKD protocols. This proof is relatively
simple and is mostly self-contained.

\section*{Acknowledgments}
The authors thank Eli Biham for useful discussions.
The work of T.M.\ and R.L.\ was partly supported
by the Israeli MOD Research and Technology Unit.
The work of R.L.\ was also partly supported by the Canada Research Chair Program,
the Technion's Helen Diller Quantum Center (Haifa, Israel),
the Government of Spain (FIS2020--TRANQI and Severo Ochoa CEX2019--000910--S),
Fundaci\'o Cellex, Fundaci\'o Mir--Puig, Generalitat de Catalunya (CERCA program),
and the EU NextGen Funds.

\bibliographystyle{unsrturl}
\bibliography{composability}

\end{document}